\newtheorem{theorem}{Theorem}
\newtheorem{lemma}[theorem]{Lemma}
\newtheorem{claim}[theorem]{Claim}
\newtheorem{corollary}[theorem]{Corollary}
\newtheorem{observation}[theorem]{Observation}
\setlist[enumerate]{label=(\roman*), itemsep=0pt}
\setlist[itemize]{itemsep=0pt}
\setlist{beginpenalty=10000, midpenalty=10000}
\def\P{\ensuremath{\mathrm{P}}}
\def\NP{\ensuremath{\mathrm{NP}}}
\def\NE{\ensuremath{\mathrm{NE}}}
\def\NEE{\ensuremath{\mathrm{NEE}}}
\def\FP{\ensuremath{\mathrm{FP}}}
\def\UP{\ensuremath{\mathrm{UP}}}
\def\DisjNP{\ensuremath{\mathrm{DisjNP}}}
\def\DisjCoNP{\ensuremath{\mathrm{DisjCoNP}}}
\def\DisjUP{\ensuremath{\mathrm{DisjUP}}}
\def\DisjCoUP{\ensuremath{\mathrm{DisjCoUP}}}
\def\coNP{\ensuremath{\mathrm{coNP}}}
\def\coNE{\ensuremath{\mathrm{coNE}}}
\def\coNEE{\ensuremath{\mathrm{coNEE}}}
\def\coUP{\ensuremath{\mathrm{coUP}}}
\def\NPcoNP{\ensuremath{\mathrm{NP}\cap\mathrm{coNP}}}
\def\TFNP{\ensuremath{\mathrm{TFNP}}}
\def\TALLY{\ensuremath{\mathrm{TALLY}}}
\def\NPMV{\ensuremath{\mathrm{NPMV}}}
\def\NPMVt{\ensuremath{\mathrm{NPMV_t}}}
\def\NPSV{\ensuremath{\mathrm{NPSV}}}
\def\NPSVt{\ensuremath{\mathrm{NPSV_t}}}
\def\NPbV{\ensuremath{\mathrm{NPbV}}}
\def\NPbVt{\ensuremath{\mathrm{NPbV_t}}}
\def\NPkV{\ensuremath{\mathrm{NP}k\mathrm{V}}}
\def\NPkVt{\ensuremath{\mathrm{NP}k\mathrm{V_t}}}
\def\TAUT{\ensuremath{\mathrm{TAUT}}}
\def\SAT{\ensuremath{\mathrm{SAT}}}
\def\PF{\ensuremath{\mathrm{PF}}}
\DeclareMathOperator{\dom}{dom}
\DeclareMathOperator{\img}{img}
\DeclareMathOperator{\supp}{supp}
\def\hUP{\ensuremath{\mathsf{UP}}}
\def\hDisjNP{\ensuremath{\mathsf{DisjNP}}}
\def\hDisjCoNP{\ensuremath{\mathsf{DisjCoNP}}}
\def\hNPcoNP{\ensuremath{\mathsf{NP}{}\cap{}\mathsf{coNP}}}
\def\hCON{\ensuremath{\mathsf{CON}}}
\def\hSAT{\ensuremath{\mathsf{SAT}}}
\def\hTFNP{\ensuremath{\mathsf{TFNP}}}
\def\leqmpp{\ensuremath{\leq_\mathrm{m}^\mathrm{pp}}}
\def\leqmp{\ensuremath{\leq_\mathrm{m}^\mathrm{p}}}
\newcommand{\impl}{\,\mathop{\Rightarrow}\,}
\renewcommand{\iff}{\,\mathop{\Longleftrightarrow}\,}
\def\sqsubsetneq{\mathrel{\sqsubseteq\kern-0.92em\raise-0.15em\hbox{\rotatebox{313}{\scalebox{1.1}[0.75]{\(\shortmid\)}}}\scalebox{0.3}[1]{\ }}}
\def\sqsupsetneq{\mathrel{\sqsupseteq\kern-0.92em\raise-0.15em\hbox{\rotatebox{313}{\scalebox{1.1}[0.75]{\(\shortmid\)}}}\scalebox{0.3}[1]{\ }}}
\title{Oracle with $\mathrm{P=NP\cap coNP}$, but no Many-One Completeness\\ in
UP, DisjNP, and DisjCoNP}
\author{Anton Ehrmanntraut \and Fabian Egidy \and Christian Glaßer}
\date{Julius-Maximilians-Universität Würzburg\\\small\{\texttt{anton.ehrmanntraut}, \texttt{fabian.egidy}, \texttt{christian.glasser}\}\texttt{@uni-wuerzburg.de}\\[2ex]\today}
\begin{document}
\frenchspacing
\maketitle
\begin{abstract}
We construct an oracle relative to which $\P = \NP \cap \coNP$,
but there are no many-one complete sets in $\UP$,
no many-one complete disjoint $\NP$-pairs, and
no many-one complete disjoint $\coNP$-pairs.

This contributes to a research program initiated by Pudlák~\cite{pud17},
which studies incompleteness in the finite domain and which
mentions the construction of such oracles as open problem.
The oracle shows that $\hNPcoNP$ is indispensable
in the list of hypotheses studied by Pudlák.
Hence one should consider stronger hypotheses,
in order to find a universal one.
\end{abstract}

\section{Introduction}

Questions of the existence of complete sets in promise classes have a long history.
They turned out to be difficult and remained open.
Consider the following examples, where the questions are expressed as hypotheses.
\begin{eqnarray*}
    \hNPcoNP: && \mbox{$\NP \cap \coNP$ does not contain many-one complete problems \ \cite{kan79}\nocite{sip82}} \\
    \hUP: && \mbox{$\UP$ does not contain many-one complete problems \ \cite{hh88}}\\
    \hCON: && \mbox{p-optimal proof systems for $\TAUT$ do not exist \ \cite{kp89}}\\
    \hSAT: && \mbox{p-optimal proof systems for $\SAT$ do not exist \ \cite{ffnr03}}\\
    \hTFNP: && \mbox{$\TFNP$ does not contain many-one complete problems \ \cite{mp91}}\\
    \hDisjNP: && \mbox{$\DisjNP$ does not contain many-one complete pairs \ \cite{raz94}}\\
    \hDisjCoNP: && \mbox{$\DisjCoNP$ does not contain many-one complete pairs \ \cite{mes00,pud14}}
\end{eqnarray*}

So far, the following implications are known:
$\hDisjNP \impl \hCON$ \cite{raz94},
$\hUP \impl \hCON$ \cite{kmt03},
$\hDisjCoNP \impl \hTFNP$ \cite{pud17},
$\hTFNP \impl \hSAT$ \cite{bkm09,pud17}, and
$\hNPcoNP \impl \hCON \vee \hSAT$ \cite{kmt03}.
This raises the question of whether further implications are provable with the currently available means.
Thanks to a work by Pudlák~\cite{pud17}, this question recently gained momentum.
In fact, Pudlák's interest goes beyond:
He initiated a research program to find a general principle from which the remaining hypotheses follow as special cases.
This is motivated by the study of incompleteness in the finite domain,
since these hypotheses can either be expressed as the non-existence of complete elements in promise classes
or as statements about the unprovability of sentences of some specific form in weak theories.

Pudlák~\cite{pud17} states as open problem to construct oracles that show that the relativized conjectures are different or show that they are equivalent.
Such oracles have been constructed by
Verbitskii~\cite{ver91},
Glaßer et al.~\cite{gssz04},
Khaniki~\cite{kha19},
Dose~\cite{dos20a,dos20b,dos20c}, and
Dose and Glaßer~\cite{dg20}.
The restriction to relativizable proofs arises from the following idea:
We consider the mentioned hypotheses as conjectures,
hence we expect that they are equivalent.
In this situation we are not primarily concerned
with the question of whether two hypotheses are equivalent,
but rather whether their equivalence can be
{\em recognized} with the currently available means.
An accepted formalization of this is the notion of relativizable proofs.

\paragraph{Our Contribution}
We contribute to Pudlák's program with the construction of an oracle relative to which the following holds: $\hUP$, $\hDisjNP$, $\hDisjCoNP$, but $\P = \NP \cap \coNP$.
Hence there is no relativizable proof for $\hNPcoNP$, even if we simultaneously assume all remaining hypotheses we mentioned so far.
This demonstrates that $\hNPcoNP$ is indispensable in the list of currently viewed hypotheses
and suggests to broaden the focus and include stronger statements.

Pudlák~\cite{pud17} ranks $\hNPcoNP$ as a plausible conjecture that is apparently incomparable with $\hCON$ and $\hTFNP$.
Our oracle supports this estimation, as it rules out relativizable proofs for ``$\hCON \impl \hNPcoNP$'' and ``$\hTFNP \impl \hNPcoNP$.''
By Dose~\cite{dos20a,dos20b}, the same holds for the converse implications.
Overall, we recognize a strong independence between $\hNPcoNP$ and all remaining hypotheses:
\begin{enumerate}
    \item There does not exist a relativizable proof for $\hNPcoNP$, even if we simultaneously assume all remaining hypotheses.
    \item There exists a relativizable proof for the implication $\hNPcoNP \impl \hCON \vee \hSAT$~\cite{kmt03}.
    But there does not exist a relativizable proof showing that $\hNPcoNP$ implies one of the remaining hypotheses~\cite{dos20a,dos20b}.
\end{enumerate}

Our oracle combines several separations with the collapse $\P = \NP \cap \coNP$.
This leads to conclusions on the independence of the statement $\P \neq \NP \cap \coNP$ from typical assumptions.
For instance, the oracle shows that $\P \neq \NP \cap \coNP$ cannot be proved by relativizing means,
even under the strong but likely assumption $\hUP \wedge \hDisjNP \wedge \hDisjCoNP$.

Further characteristics of our oracle are, for example,
$\NE \neq \coNE$, $\NPMV \not\subseteq_c \NPSV$, and the shrinking and separation properties do not hold for $\NP$ and $\coNP$.
Corollary~\ref{cor:detailed_properties} presents a list of additional properties.

\paragraph{Open questions.}
Currently, for almost every pair $\mathsf A$, $\mathsf B$ of the discussed hypotheses, we either know a relativizable proof for the implication $\mathsf A\Rightarrow \mathsf B$, or we know an oracle relative to which $\mathsf A \land \neg\mathsf B$. (See also Figure~\ref{fig:diagram}.)
Only three cases are left:
\begin{enumerate}
\item $\hUP \mathop{\Rightarrow}\limits^{?} \hDisjNP$
\item $\hTFNP \mathop{\Rightarrow}\limits^{?} \hDisjCoNP$
\item $\hSAT \mathop{\Rightarrow}\limits^{?} \hTFNP$
\end{enumerate}
This leads to the following task for future research:
Prove these implications or construct oracles relative to which they do not hold.

\begin{figure}[tb]
    \centering
    \begin{tikzpicture}[node/.style={anchor=base},tips=proper]
        \node (pneqnp) at (0,0) {$\mathrm{P\neq NP}$};
        \node [above=1 of pneqnp] (conorsat) {$\mathsf{CON} \lor \mathsf{SAT}$};
        \node [above left=0.2 and 2 of conorsat] (con) {$\mathsf{CON}$};
        \node [above right=0.2 and 2 of conorsat] (sat) {$\mathsf{SAT}$};
        \node [above left=3.0cm and 1.7cm of con,anchor=base] (disjnp) {$\mathsf{DisjNP}$};
        \node [above right=3.0cm and 1.7cm of sat,anchor=base] (disjconp) {$\mathsf{DisjCoNP}$};
        \node [above right=1.4cm and 0.7cm of sat,anchor=base] (tfnp) {$\mathsf{TFNP}$};
        \node [above =1.8cm of conorsat] (npiconp) {$\mathsf{NP{\cap}coNP}$};
        \node [above=1.4cm of con] (up) {$\mathsf{UP}$};
        \node [above=5.5 of npiconp] (dgh) {$\mathsf{DisjNP\land NP{\cap}coNP}$};
        \node [above=1 of dgh] (titush) {$\mathsf{DisjNP\land UP\land NP{\cap}coNP}$};
        \node [right=1 of dgh] (ourh) {$\mathsf{DisjNP\land UP\land DisjCoNP}$};

        \draw [-latex,ultra thick] (con) -- (conorsat);
        \draw [-latex,ultra thick] (sat) -- (conorsat);
        \draw [-latex,ultra thick] (conorsat) -- (pneqnp);
        \draw [-latex,ultra thick] (disjnp) -- (con);
        \draw [-latex,ultra thick] (up) -- (con);
        \draw [-latex,ultra thick] (disjconp) -- (tfnp);
        \draw [-latex,ultra thick] (tfnp)-- (sat);
        \draw [-latex,ultra thick] (npiconp) -- (conorsat);
        \draw [-latex] (ourh) -- (disjnp);
        \draw [-latex] (ourh) -- (up);
        \draw [-latex] (ourh) -- (disjconp);
        \draw [-latex] (titush) -- (dgh);
        \draw [-latex] (titush.190) -- (up);

        \draw [-latex,dashed] (titush.-10) edge node [sloped,pos=0.6,above] {\scriptsize\cite{dos20b}} (sat.north);
        \draw [-latex] (dgh) -- (disjnp);
        \draw [-latex] (dgh) -- (npiconp);
        \draw [dashed, -latex] (dgh) edge  node [sloped,pos=0.6,below] {\scriptsize\cite{dg20}} (up);
        \draw [dashed, -latex] (npiconp) edge  node [sloped,midway,above] {\scriptsize\cite{dos20a}} (con);
        \draw [dashed, -latex,bend left=15] (disjconp) edge  node [sloped,pos=0.4,above] {\scriptsize\cite{kha19}} (con);
        \draw [dashed, -latex] (ourh) edge  node [sloped,pos=0.7,above] {\scriptsize Thm.~\ref{thm:result}} (npiconp);
        \draw [dashed, -latex, bend left] (con) edge  node [sloped,pos=0.5,below] {\scriptsize\cite{gssz04}} (disjnp);
        \draw [dashed, -latex, bend left=50] (pneqnp.west) edge  node [sloped,pos=0.5,below] {\scriptsize\cite{dos20c}} (conorsat.west);
    \end{tikzpicture}
    \caption{Solid arrows mean implications. All implications occurring in the figure have relativizable proofs.
        (The only nontrivial ones are $\hDisjNP\Rightarrow\hCON$~\cite{raz94}, $\hUP\Rightarrow\hCON$~\cite[Cor.~4.1]{kmt03}, $\hDisjCoNP\Rightarrow\hTFNP\Rightarrow\hSAT$~\cite[Prop.~5.6]{pud17}\cite[Thm.~25]{bkm09}\cite[Prop.~5.10]{pud17}.) Implications between the conjectures originally considered by Pudlák (i.e., not the conjunctions) are highlighted bold. A dashed arrow from one conjecture $\mathsf A$ to another conjecture $\mathsf B$ means that there is an oracle $X$ against the implication $\mathsf A\Rightarrow\mathsf B$, meaning that $\mathsf A \land \neg\mathsf B$ holds relative to $X$.
    }
    \label{fig:diagram}
\end{figure}

\paragraph{Background on connections between promise classes and proof systems.} 
We are mainly interested in the following well-studied promise classes: The class of disjoint $\NP$-pairs $\DisjNP$~\cite{sel88, gs88}, the class of disjoint $\coNP$-pairs $\DisjCoNP$~\cite{ffnr96, ffnr03}, the class of sets accepted by nondeterministic polynomial-time machines with at most one accepting computation path $\UP$~\cite{val76}, the class $\NPcoNP$~\cite{edm65}, and the class of all total polynomial search problems $\TFNP$~\cite{mp91}. Furthermore, we are interested in proof systems defined by Cook and Reckhow~\cite{cr79}, especially proof systems for the set of satisfiable formulas $\SAT$ and proof systems for the set of tautologies $\TAUT$, also called propositional proof systems. 

The connections between propositional proof systems and promise classes have been studied intensively. Krajícek and Pudlák~\cite{kp89} linked propositional proof systems (and thus the hypothesis $\hCON$) to standard complexity classes by proving that $\mathrm{NE} = \coNE$ implies the existence of optimal propositional proof systems and $\mathrm{E} = \NE$ implies the existence of $p$-optimal propositional proof systems. These results were subsequently improved by Köbler, Messner, and Torán~\cite{kmt03}. 

Glaßer, Selman, and Sengupta~\cite{gss05} give several characterizations of $\hDisjNP$. Some characterizations use different notions of reducibility while others use the existence of $\leqmp$-complete functions in $\NPSV$ and the uniform enumerability of disjoint $\NP$-pairs. Glaßer, Selman, and Zhang~\cite{gsz07, gsz09} connect propositional proof systems to disjoint $\NP$-pairs. They prove that the degree structure of $\DisjNP$ and of all canonical disjoint pairs of propositional proof systems is the same. Beyersdorff~\cite{bey04, bey06, bey07, bey10} and Beyersdorff and Sadowkski~\cite{bs11} investigate further connections between disjoint $\NP$-pairs and propositional proof systems. 

Pudlák~\cite{pud96, pud03, pud17} draws connections between the finite consistency problem, proof systems, and promise classes like $\DisjNP$ and $\TFNP$. Moreover, he asks for oracles that separate hypotheses regarding proof systems and promise classes. Several oracles have been constructed since Pudlák formulated his research questions.  Concerning the listed hypotheses, Figure~\ref{fig:diagram} summarizes all known (relativizing) implications and implications that do not hold relative to some oracle.

The paper is organized as follows:  Section~\ref{sec:prelim} defines the complexity classes mentioned above and presents our notations.
Section~\ref{sec:construction} contains the oracle construction: the first part defines the construction, the second part proves that it is well-defined, and the last part shows the claimed properties.

\section{Preliminaries}\label{sec:prelim}

Throughout this paper, let $\Sigma$ be the alphabet $\{0,1\}$.
The set $\Sigma^*$ denotes the set of finite words over $\Sigma$.
The set $\Sigma^\omega$ denotes the set of $\omega$-infinite words, i.e., the $\omega$-infinite sequences of characters from $\Sigma$.
Let $\Sigma^{\leq n} \coloneqq \{w\in\Sigma^* \mid |w|\leq n\}$.
For word $w\in\Sigma^*\cup\Sigma^\omega$, we denote with $w(i)$ the $i$-th character of $w$ for $0\leq i<|w|\leq\omega$.
We write $v\sqsubseteq w$ when $v$ is a prefix of $w$, that is, $|v|\leq|w|$ and $v(i)=w(i)$ for all $0\leq i < |v|$.
Accordingly, $v\sqsubsetneq w$ when $v\sqsubseteq w$ and $v\neq w$.
The empty word is denoted by $\varepsilon$.
For a finite set $A\subseteq\Sigma^*$, we define $\ell(A)\coloneqq\sum_{w\in A} |w|$.

Let $\mathbb N$ denote the set of non-negative integers, and $\mathbb N^+$ the set of positive integers.
We say that two sets $X$ and $Y$ \emph{agree on set $Z$} when $z\in X$ if and only if $z\in Y$ for all $z\in Z$.

The finite words $\Sigma^*$ can be linearly ordered by their quasi-lexicographic (i.e., “shortlex”) order $\prec_\mathrm{lex}$, uniquely defined by requiring $0\prec_\mathrm{lex} 1$.
Under this definition, there is a unique  order-isomorphism between $(\Sigma^*, \prec_\mathrm{lex})$ and $(\mathbb N, <)$, which induces a polynomial-time computable, polynomial-time invertible bijection between $\Sigma^*$ and $\mathbb N$. 
Hence, we can transfer notations, relations, and operations for $\Sigma^*$ to $\mathbb N$ and vice versa.
In particular, $|n|$ denotes the length of word represented by $n\in \mathbb N$.
By definition of $\prec_\mathrm{lex}$, whenever $a\leq b$, then $|a|\leq |b|$.
We eliminate the ambiguity of the expressions $0^i$ and $1^i$ by always interpreting them over $\Sigma^i$.
Moreover, $\leq$ denotes both the less-than-or-equal relation for natural numbers and the quasi-lexicographic order for finite words.
From the properties of order-isomorphism, this is compatible with above identification of words and numbers.
Similarly, we define the \emph{parity} of a word $w\in\Sigma^*$ as the parity of the natural number represented by $w$.
Note that the word $\varepsilon\in\Sigma^*$ represents $0\in\mathbb N$, the word $0\in\Sigma^*$ represents $1\in\mathbb N$, the word $1\in\Sigma^*$ represents $2\in\mathbb N$ and so on.
Hence, all words in $\Sigma^*0$ are odd and all words in $\Sigma^*1$ are even.

For (both directed and undirected) graphs $G$, we denote with $V(G)$ the vertex set of $G$, with $E(G)$ the edge set of $G$.
For a vertex $v\in V(G)$, we denote with $R_G(v)$ the set of vertices that are reachable from $v$, and with $N_G(v)$ (resp., $N_G^+(v)$) the vertices adjacent to $v$ in undirected $G$ (resp., direct successors of $v$ in directed $G$).
Similarly, for a subset $U\subseteq V(G)$, we define $R_G(U)=\bigcup_{v\in U} R_G(v)$, and $N_G(U), N^+_G(U)$ respectively.
Observe that always $v\in R_G(v)$ and $U\subseteq R_G(U)$.
For a directed acyclic graph $G$, we define the \emph{height of vertex $x\in V(G)$} as the length of the longest path in $G$ to a “sink”, i.e., to a vertex with no outgoing edges.

We understand $\P$ (resp., $\NP$) as the usual complexity class of languages decidable by a deterministic (resp., nondeterministic) polynomial-time Turing machine. The class $\FP$ refers to the class of total functions that can be computed by a deterministic polynomial-time Turing transducer \cite{pap81}.
Valiant \cite{val76} defined $\UP$ as the set of all languages that can be recognized by a nondeterministic polynomial-time machine that, on every input, accepts on at most one computation path. We use the definitions from Köbler, Messner, and Torán \cite{kmt03} for the nondeterministic exponential and nondeterministic double exponential time classes $\NE \coloneqq \mathrm{NTIME}\left(2^{O(n)}\right)$ and $\NEE \coloneqq \mathrm{NTIME}\left(2^{O(2^n)}\right)$. Let $\TALLY$ be the class of tally sets, that is, sets $A\subseteq \{ 0^n \mid n \geq 0 \}$. For a complexity class $\mathcal{C}$ we define $\mathrm{co}\mathcal{C} \coloneqq \{\overline{A} \mid A \in \mathcal{C}\}$ as the complementary complexity class of $\mathcal{C}$.

Between sets of words, we employ the usual polynomial-time many-one reducibility: We say that $A$ is \emph{polynomial-time many-one reducible to} $B$, denoted by $A \leqmp B$, if there exists a function $f \in \FP$ such that $x \in A \Leftrightarrow f(x) \in B$ for all $x \in \Sigma ^*$.
For some complexity class $\mathcal C$, we say that $B$ is \emph{$\leqmp$-hard for $\mathcal C$} when $A\leqmp B$ for any $A\in\mathcal C$.
If, additionally, $B\in\mathcal C$, we say that $B$ is \emph{$\leqmp$-complete for $\mathcal C$}.

A \emph{disjoint $\NP$-pair} is a pair $(A,B)$ of disjoint sets in $\NP$. Selman~\cite{sel88} and Grollmann and Selman~\cite{gs88} defined the class $\DisjNP$ as the set of disjoint $\NP$-pairs. The classes $\DisjCoNP$~\cite{ffnr96, ffnr03}, $\DisjUP$, and $\DisjCoUP$ are defined similarly.
Between two pairs, we employ the following related notion of reducibility~\cite{raz94}: Let $(A,B)$ and $(C,D)$ be two disjoint pairs. We say that $(A,B)$ is \emph{polynomial-time many-one reducible to $(C,D)$}, denoted by $(A,B) \leqmpp (C,D)$, if there is a function $h\in \FP$ such that $h(A)\subseteq C$ and $h(B)\subseteq D$.
The terms $\leqmpp$-completeness and -hardness also follow directly from this definition of reduction.

A disjoint pair $(A,B)$ is \emph{$\P$-separable}, if there exists a separator $S \in \P$ for $(A,B)$, i.e., a set $S$ such that $A \subseteq S$ and $B \subseteq \overline{S}$. A complexity class $\mathcal{C}$ has the \emph{shrinking property}, if for all $A, B \in \mathcal{C}$ there exist disjoint sets $A',B' \in \mathcal{C}$ such that $A' \subseteq A$, $B' \subseteq B$ and $A' \cup B' = A \cup B$. A complexity class $\mathcal{C}$ has the \emph{separation property}, if for all disjoint $A,B \in \mathcal{C}$ there exists an $S \in \mathcal{C} \cap \mathrm{co}\mathcal{C}$ that separates $A$ and $B$.

The model of deterministic Turing transducers can be extended to nondeterministic Turing transducers. Here, on input $x$, a nondeterministic Turing transducer can output one value on each  computation path.
Thus, a nondeterministic Turing transducer computes partial, multivalued functions. More precisely, a partial multivalued function $f$ is computed by a nondeterministic Turing transducer $N$ if and only if for every input $x$ the set of strings obtained by $f(x)$, i.e.\ $\{y \mid f(x) \mapsto y\}$, equals the set of strings written on the output tape of each accepting path of $N(x)$.
The class $\NPMV$ refers to the class of partial, multivalued functions $f$ that can be computed by a nondeterministic polynomial-time Turing transducer~\cite{bls84}. The class $\NPSV$ is the subset of $\NPMV$ that contains only single-valued partial functions~\cite{bls84}. The class $\NPkV$ is the subset of $\NPMV$ that contains only those partial multivalued functions that, for every input, output at most $k$ distinct values~\cite{ffnr96, nrrs98, ffnr03}. The class $\NPbV$ is the subset of $\NPMV$ that contains only those partial multivalued functions with values in the set $\{0,1\}$~\cite{ffnr96, ffnr03}. Given a function class $\mathcal{F}$, we denote the class of all total functions in $\mathcal{F}$ as $\mathcal{F}_t$. 
In order to compare function classes with each other we use $\subseteq _c$ as defined by Selman \cite{sel94}:
For partial, multivalued functions $f$ and $g$, say that $g$ is a \emph{refinement} of $f$, if, for all $x$, $g(x)$ is defined if and only if $f(x)$ is defined, and if $g(x)\mapsto y$, then $f(x)\mapsto y$. For function classes $\mathcal F$ and $\mathcal G$, we write $\mathcal F \subseteq_c \mathcal G$ if for every $f\in\mathcal F $ there exists a $g\in\mathcal G$ that is a refinement of $f$.

We follow the definition of reducibility on multivalued functions by Beyersdorff, Köbler and Messner~\cite{bkm09} and say that a multivalued function $h$ is polynomial-time many-one reducible to $g$, denoted by $h\leqmp g$, if there is a function $f\in \FP$ such that the set $\{y\mid h(x)\mapsto y\}$ of outputs computed by $h$ on input $x$ is equal to the set $\{y\mid g(f(x))\mapsto y\}$ of outputs computed by $g$ on input $f(x)$. The notions of $\leqmp$-completeness for class $\NPMVt$ follows directly from this definition of reducibility.

The class $\TFNP$ is the class of all total polynomial search problems (also known as total $\NP$ search problems)~\cite{mp91}. In terms of partial multivalued functions, $\TFNP$ can be defined as the class of functions $f\in \NPMVt$ such that the graph $\{(x,y)\mid f(x)\mapsto y\}$ is in $\P$. For the conjecture $\hTFNP$, Pudlák~\cite{pud17} refers to the following natural definition of reducibility between total polynomial search problems:
Let $R$ and $S$ be total polynomial search problems. We say that $R$ is \emph{polynomial-time many-one reducible} to $S$ if there exist $f,g\in \FP$ such that for all $x$ and $z$, it holds that $S(f(x))\mapsto z$ implies $R(x)\mapsto g(x, z)$.
The notion of many-one complete total polynomial search problems follows directly from the definition of reducibility.

Let $\SAT$ denote the set of satisfiable formulas and $\TAUT$ the set of tautologies. It is well known that $\SAT \in \NP$ and $\TAUT \in \coNP$.
We use the notion of proof systems for sets by Cook and Reckhow \cite{cr79}.
\begin{itemize}
\item A function $f \in \FP$ is called a \emph{proof system for $\img(f)$}.
\item We say that a proof system $g$ is \emph{(p-)simulated} by a proof system $f$, denoted by $f \leq g$ (resp., $f \leq ^{\mathrm{p}} g$), if there exists a total function $\pi$ (resp., $\pi \in \FP$) and a polynomial $p$ such that $|\pi(x)| \leq p(|x|)$ and $f(\pi(x))=g(x)$ for all $x$.
\item A proof system for $\TAUT$ is called a \emph{propositional proof system}.
\item We call a proof system $f$ \emph{(p-)optimal} for the set $\img(f)$, if $g \leq f$ (resp., $g \leq ^{\mathrm{p}} f$) for all $g \in \FP$ with $\img(g)=\img(f)$.
\end{itemize}

We can relativize each complexity and function class to some oracle $O$, by equipping all machines corresponding to the respective class with oracle access to $O$. That is, e.g., $\P^O \coloneqq \{ L(M^O) \mid \text{$M$ is a deterministic polynomial-time oracle Turing machine} \}$. The classes $\NP^O$, $\UP^O$ and so on are defined similarly. We can also relativize our notions of reducibility by using functions from $\FP^O$ instead of $\FP$. In other words, we allow the reduction functions to access the oracle in relativized instances. This results in polynomial-time many-one reducibilities relative to an oracle $O$, which we denote as $\leq_\mathrm{m}^{\mathrm{p},O}$ for sets and $\leq_\mathrm{m}^{\mathrm{pp},O}$ for pairs of disjoint sets.
In the same way, we can relativize (p-)simulation of proof systems to some oracle $O$, and denote the relativized simulation as $\leq^{O}$ resp. $\leq^{\mathrm{p},O}$.
When it is clear from context that some statements refer to the relativized ones relative to some fixed oracle $O$, we sometimes omit the indication of $O$ in the superscripts.

We define $p_i(n) \coloneqq n^i + i$.
Let $\{M_i\}_{i\in \mathbb N}$ and $\{F_i\}_{i\in\mathbb N}$ be, respectively, standard enumerations of nondeterministic polynomial-time (oracle) Turing machines resp.\ deterministic polynomial-time (oracle) Turing transducers, having the property that runtime of $M_i, F_i$ is bounded by $p_i$ relative to any oracle.
By standardness, $\{ L(M_i^O) \mid i\in\mathbb N\}=\NP^O$, $\{ F^O_i \mid i\in\mathbb N\}=\FP^O$.

We now take on the notations proposed by Dose and Glaßer~\cite{dg20} designed for the construction of oracles.
The domain of definition, image, and support for partial function $t\colon A\to \mathbb N$ are defined as $\dom(t) \coloneqq \{ x\in A \mid \exists y.t(x)=y \}$, $\img(t) \coloneqq \{ t(x) \mid x\in A \}$, $\supp(t) \coloneqq \{ x\in A \mid \exists y.t(x)=y>0 \}$.
We say that $t$ is \emph{injective on its support} if, for any $a,b\in\supp(t)$, $t(a)=t(b)$ implies $a=b$.
If $t$ is not defined at point $x$, then $t\cup \{ x\mapsto y\}$  denotes the extension $t'$ of $t$ that at $x$ has value $y$ and satisfies $\dom(t')=\dom(t) \cup \{x\}$.

For a set $A$, we denote with $A(x)$ the characteristic function at point $x$, i.e., $A(x)$ is 1 if $x\in A$, and $0$ otherwise.
We can identify an oracle $A\subseteq \mathbb N$ with its characteristic $\omega$-word $A(0)A(1)A(2)\cdots$ over $\Sigma^\omega$. In this way, $A(i)$ denotes both the characteristic function at point $i$ and the $i$-th character of its characteristic word.
A finite word $w$ describes an oracle which is partially defined, i.e., only defined for natural numbers $x<|w|$.
Occasionally, we understand $w$ as the set $\{i\mid w(i)=1\}$ and, e.g., we write $A=w\cup B$ where $A$ and $B$ are sets.
(However, we understand $|w|$ as the length of word $w$, and not the cardinality of set $\{i\mid w(i)=1\}$.)

In particular, for oracle machines $M$, the notation $M^w(x)$ refers to $M^{\{i\mid w(i)=1\}}(x)$ (that is, oracle queries that $w$ is not defined for are negatively answered).
This also allows us to define following notion: we say that $M^w(x)$ is \emph{definite} if all queries on all computation paths are $<|w|$ (or equivalently: $w(q)$ is defined for all queries $q$ on all computation paths); we say that $M^w(x)$ \emph{definitely accepts} (resp., \emph{definitely rejects}) if $M^w(x)$ is definite and accepts (resp., rejects).
This allows the following observation:
\begin{observation}\phantomsection\label{obs:partialoracles}
    \begin{enumerate}
        \item When $M^w(x)$ is a definite computation, and $v\sqsupseteq w$, then 
            $M^v(x)$ is definite. Computation $M^v(x)$ accepts if and only if $M^w(x)$ accepts.
        \item When $w$ is defined for all words of length $p_i(|x|)$, then $M_i^w(x)$ is definite.
        \item When $M^w(x)$ accepts on some computation path with set of oracle queries $Q$, and $w$, $v$ agree on $Q$, then $M^v(x)$ accepts on the same computation path and with the same set of oracle queries $Q$.
    \end{enumerate}
\end{observation}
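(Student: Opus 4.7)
The plan is to verify each of the three items in turn, each being a routine consequence of the definition of an oracle computation with a partial oracle (where a query $q$ with $q\ge|w|$ is treated as a negative answer, i.e.\ the actual oracle consulted is $\{i\mid w(i)=1\}$). No single item requires a clever idea; the only point meriting care is the bookkeeping in (ii) between word length and the natural number a word represents.

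For (i), I would note that definiteness of $M^w(x)$ means every query $q$ on every computation path satisfies $q<|w|$. Since $v\sqsupseteq w$, we have $v(q)=w(q)$ for every such $q$, so every query still satisfies $q<|v|$ and receives the identical oracle answer under $v$ as under $w$. A short induction on step number along each path shows that the tree of computations of $M^v(x)$ coincides with that of $M^w(x)$, giving definiteness and agreement of acceptance. Item (iii) is essentially the same induction carried out along a single path: fixing the accepting path of $M^w(x)$ with query set $Q$, the agreement of $v$ and $w$ on $Q$ ensures that the same nondeterministic choices lead the machine through the same oracle answers and hence to acceptance, with queries precisely $Q$.

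For (ii), I would invoke the standardness assumption on the enumeration: $M_i$ runs in time at most $p_i(|x|)$ relative to any oracle, so every query $q$ made on any computation path satisfies $|q|\le p_i(|x|)$. Under the quasi-lexicographic identification of $\Sigma^*$ with $\mathbb N$ (recalled in the preliminaries, where $|a|\le|b|$ whenever $a\le b$), any word of length $\le p_i(|x|)$ is numerically at most $1^{p_i(|x|)}$, which is the largest word of that length. The hypothesis that $w$ is defined for all words of length $p_i(|x|)$ therefore forces $|w|>1^{p_i(|x|)}$ as a natural number, so $q<|w|$ for every query $q$, making $M_i^w(x)$ definite. This length-versus-number conversion is the only step where a reader might pause; everything else is immediate from the definitions.
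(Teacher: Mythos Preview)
Your proposal is correct. The paper states this as an Observation without proof, treating all three items as immediate from the definitions; your write-up simply spells out the routine details the paper omits, and does so in the expected way (step-by-step agreement along paths for (i) and (iii), and the running-time bound $|q|\le p_i(|x|)$ combined with the prefix nature of $w$ for (ii)).
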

For oracle $w$, transducer $F$, and machine $M$, we occasionally understand the notation $M^w(F^w(x))$ as the single computation of machine $M\circ F$ on input $x$ relative to $w$.
Consequently, we say that $M^w(F^w(x))$ definitely accepts (resp., rejects) when $M\circ F$ definitely accepts (resp., rejects) input $x$ relative to $w$.

In our oracle construction, we want to injectively reserve and assign countably infinitely many \emph{levels} $n$, that are, words of same length $n$, for a countably infinite family of witness languages, with increasingly large gaps.
For this, let $e(0) \coloneqq 2$, $e(i) \coloneqq 2^{e(i-1)}$.
There is a polynomial-time computable, polynomial-time invertible injective function $f$, mapping $(m,h)\in\mathbb N\times\mathbb N$ to $\mathbb N$.
Now define $H_m \coloneqq \{ e(f(m,h)) \mid h\in\mathbb N \}$ as the set of levels reserved for witness language $m$.
This definition ensures 
\begin{observation}\phantomsection\label{obs:leveldefinitions}
    \begin{enumerate}
        \item The set $H_m$ is countably infinite, a subset of the even numbers, and all $H_0, H_1, \dots$ are pairwise disjoint.
        \item The sequence $\min H_0, \min H_1, \dots$ is unbounded.
        \item When $n\in H_m$, then $n<n'<2^{n}$ implies $n'\not\in H_0, H_1, \dots$.
        \item Every set $H_m\in \P$ for all $m\in\mathbb N$.
\end{enumerate}
\end{observation}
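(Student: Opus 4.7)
The claim breaks into four routine verifications driven by the rapid growth of the tower function $e$ and the injectivity of $f$. My plan is to dispatch (i) and (ii) with simple facts about $e$ and $f$, to derive (iii) from the exact gap between consecutive values of $e$, and to treat (iv) algorithmically, which is the only step requiring any care.

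For (i), I would first observe that $e$ is strictly increasing, since $2^k>k$ for all $k\geq 0$, and in particular $e$ is injective. Moreover every value of $e$ is even: $e(0)=2$, and for $i\geq 1$ the value $e(i)=2^{e(i-1)}$ is a power of two. Combined with injectivity of $f$ on $\mathbb N\times\mathbb N$, this shows that the map $(m,h)\mapsto e(f(m,h))$ is injective, yielding (i). For (ii), the same injectivity of $f$ forces $\min_h f(m,h)\to\infty$ as $m\to\infty$: for any bound $B$ only finitely many pairs $(m,h)$ satisfy $f(m,h)\leq B$, hence only finitely many $m$, and for the remaining $m$ we have $\min H_m\geq e(B+1)$.

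For (iii), suppose $n=e(k)\in H_m$ with $k=f(m,h)$. The next value in the range of $e$ is $e(k+1)=2^{e(k)}=2^n$. If some $n'$ satisfied $n<n'<2^n$ and $n'\in H_{m'}$, then $n'=e(k')$ for some $k'$, giving $e(k)<e(k')<e(k+1)$ and contradicting that $e$ is strictly increasing.

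For (iv), on input $n$ of bit-length $N=O(\log n)$ I would iteratively compute the values $e(0),e(1),\ldots$ in binary, halting as soon as $e(i)\geq n$ or $e(i-1)>N$ (in which case $e(i)>2^{N}\geq n$ anyway). Because $e$ is a tower, the halting condition triggers after only $O(\log^{\ast} n)$ iterations, and every intermediate value fits in at most $N+1$ bits before we stop, so each doubling $2^{e(j-1)}$ is immediate to write down. If no produced $e(k)$ equals $n$, reject. Otherwise, the unique $k$ with $e(k)=n$ is found; invert $f$ on $k$ in polynomial time, and accept iff the preimage exists and has first coordinate equal to $m$. The only mild obstacle is the potential size blow-up of $e$; it is handled by the stopping rule, which ensures that we never actually compute a super-polynomially large tower value.
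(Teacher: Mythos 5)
Your proposal is correct. The paper states this Observation without proof, treating it as immediate from the definitions of $e$, $f$, and $H_m$; your verification --- strict monotonicity and evenness of the tower function $e$, injectivity of $f$ for (i)--(iii), and the truncated iterative computation of $e(0),e(1),\dots$ with the stopping rule for the $\P$-membership in (iv) --- is exactly the intended argument.
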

Before the construction, we make the following combinatorial assertion:
\begin{lemma}\label{lemma:bipartite}
    Let $G$ be a directed bipartite graph with vertex parts $A$ and $B$.
    That is, every edge in $G$ is either from a vertex in $A$ to a vertex in $B$, or vice versa.
    Let $\Delta$ be an upper bound of the out-degree for every vertex in $G$.
    
    If $|A|,|B|>2\Delta$, then there exist $a\in A$ and $b\in B$ such that neither $(a,b)$ nor $(b,a)$ is an edge in $G$.
\end{lemma}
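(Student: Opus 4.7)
The plan is to use a simple double-counting argument on the number of ``bad'' pairs $(a,b)\in A\times B$, where we call a pair \emph{bad} if at least one of $(a,b)$ or $(b,a)$ is an edge of $G$. The goal is to show that the number of bad pairs is strictly less than $|A|\cdot|B|$, which forces the existence of a good pair.

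Since $G$ is bipartite, every edge originating from a vertex in $A$ must end in $B$, and every edge originating from a vertex in $B$ must end in $A$. Using the out-degree bound $\Delta$, the number of edges from $A$ to $B$ is at most $|A|\Delta$, and the number of edges from $B$ to $A$ is at most $|B|\Delta$. Each bad pair accounts for at least one such edge, so the total number of bad pairs is at most $|A|\Delta+|B|\Delta=\Delta(|A|+|B|)$.

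It remains to verify that, under the hypothesis $|A|,|B|>2\Delta$, this bound is strictly less than $|A|\cdot|B|$. Dividing through, this is equivalent to $\tfrac{1}{|A|}+\tfrac{1}{|B|}<\tfrac{1}{\Delta}$, and from $|A|,|B|>2\Delta$ we get $\tfrac{1}{|A|}+\tfrac{1}{|B|}<\tfrac{1}{2\Delta}+\tfrac{1}{2\Delta}=\tfrac{1}{\Delta}$, as required. Hence some pair $(a,b)\in A\times B$ is good, completing the proof.

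I do not expect a real obstacle here; the only thing to be careful about is that the out-degree bound applies symmetrically to both sides of the bipartition and that the two edge-type counts are indeed disjoint contributions to the bad-pair count, both of which follow immediately from the bipartite assumption.
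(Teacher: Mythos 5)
Your proof is correct and uses essentially the same counting argument as the paper: bound the number of edges by $\Delta(|A|+|B|)$ and compare against the $|A|\cdot|B|$ possible pairs. The only cosmetic difference is that the paper first trims both parts to equal size $n=\min\{|A|,|B|\}$ so the comparison becomes $2n\Delta<n^2$, whereas you verify $\Delta(|A|+|B|)<|A|\cdot|B|$ directly via $\tfrac{1}{|A|}+\tfrac{1}{|B|}<\tfrac{1}{\Delta}$.
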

\begin{proof}
    Let $n=\min\{|A|,|B|\}>2\Delta$.
    Now remove vertices from $A$ and $B$ until both vertex parts each have precisely $n$ vertices, respectively, to form a directed bipartite graph $G'$ with vertex parts $A'$ and $B'$.
    Let $G''$ be the underlying undirected graph of $G'$.
    There are $\leq |A'|\cdot \Delta + |B'|\cdot\Delta<n^2$ many undirected edges in $G'$,
    but $n^2$ many undirected edges in the complete bipartite undirected graph $K_{n,n}$.

    This implies there exists $a\in A'\subseteq A$, $b\in B'\subseteq B$ that are not adjacent in $G''$; thus, both $(a,b)\not\in E(G')$ and $(b,a)\not\in E(G')$ for the induced directed bipartite subgraph $G'$.
    This also implies that for the original graph $G$, both $(a,b)\not\in E(G)$ and $(b,a)\not\in E(G)$.
\end{proof}

\section{Oracle Construction}\label{sec:construction}

We are primarily interested in an oracle $O$ with the property that relative to that oracle, $\hUP$, $\hDisjNP$, $\hDisjCoNP$, and $\neg\hNPcoNP$ hold, but our construction yields the following slightly stronger statements:
\begin{enumerate}
    \item $\NPcoNP=\P$ (implying $\neg\hNPcoNP$).
    \item $\DisjNP$ does not contain $\leqmpp$-hard pairs for $\DisjUP$ (implying $\hDisjNP$).
    \item $\UP$ does not contain $\leqmp$-complete languages (i.e., $\hUP$).
    \item $\DisjCoNP$ does not contain $\leqmpp$-hard pairs for $\DisjCoUP$ (implying $\hDisjCoNP$).
\end{enumerate}

Given a (possible partial) oracle $O$ and $m\in \mathbb N$, we define the following witness languages:
\begin{gather*}
    A_m^O \coloneqq \{ 0^n \mid n\in H_m, \text{there exists $x\in \Sigma^{n}$ such that $x\in O$ and $x$ ends with $0$} \}\\
    B_m^O \coloneqq \{ 0^n \mid n\in H_m, \text{there exists $x\in \Sigma^{n}$ such that $x\in O$ and $x$ ends with $1$} \}\\
    C_m^O \coloneqq \{ 0^n \mid n\in H_m, \text{there exists $x\in \Sigma^{n}$ such that $x\in O$} \}\\
    D_m^O \coloneqq \{ 0^n \mid n\in H_m, \text{for all $x\in \Sigma^{n}$, } x\in O \rightarrow \text{$x$ ends with $0$} \}\\
    E_m^O \coloneqq \{ 0^n \mid n\in H_m, \text{for all $x\in \Sigma^{n}$, } x\in O \rightarrow \text{$x$ ends with $1$} \}
\end{gather*}
This allows the following observation:
\begin{observation}\phantomsection\label{obs:witnesses}
    \begin{enumerate}
        \item If for all $n\in H_m$, $|O \cap \Sigma^{n}|\leq 1$, then $(A_m^O, B_m^O)$ is in $\DisjUP^O$, and $C_m^O$ is in $\UP^O$.
        \item If for all $n\in H_m$, $O \cap \Sigma^{n}$ contains at least one word but not two words with the same parity, (i.e., there exists $\alpha\in\Sigma^{n-1}0$, $\beta\in\Sigma^{n-1}1$ such that the set $O \cap \Sigma^{n}$ is equal to $\{\alpha\}$ or $\{\beta\}$ or $\{\alpha, \beta\}$), then $(D_m^O, E_m^O)$ is in $\DisjCoUP^O$.
    \end{enumerate}
\end{observation}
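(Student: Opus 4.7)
The plan is to exploit the tally nature of all five witness languages: each of $A_m^O, B_m^O, C_m^O, D_m^O, E_m^O$ is a subset of $\{0^n \mid n \geq 0\}$, and the only non-trivial work on input $0^n$ is to check $n \in H_m$ (which is in $\P$ by Observation~\ref{obs:leveldefinitions}.4) and then examine $O \cap \Sigma^n$. Since $|\Sigma^n| = 2^n$, a string in $\Sigma^n$ can be guessed nondeterministically in time polynomial in the input length $n$, and tested for membership in $O$ with a single oracle query. Thus each machine will first deterministically verify the tally form and $H_m$-membership of its input, and then guess a candidate $x \in \Sigma^n$ of the required parity. The key point is that under each of the two hypotheses, such $x$ (if it exists at all) is uniquely determined, which is precisely what a $\UP$ computation needs.

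For item~1, define the $\NP^O$-machine for $C_m^O$ (resp.\ $A_m^O$, $B_m^O$) that, on input $w$, rejects unless $w = 0^n$ with $n \in H_m$, then guesses $x \in \Sigma^n$ (resp.\ $x \in \Sigma^{n-1}0$, $x \in \Sigma^{n-1}1$) and accepts iff $x \in O$. Because $|O \cap \Sigma^n| \leq 1$ for every $n \in H_m$, at most one guess leads to acceptance on any input, so each of the three languages lies in $\UP^O$. For $(A_m^O, B_m^O)$ to be a disjoint $\UP$-pair, it remains to observe that if $0^n$ were in both $A_m^O$ and $B_m^O$, then $O \cap \Sigma^n$ would contain two distinct words of opposite parity, contradicting $|O \cap \Sigma^n| \leq 1$.

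For item~2, note that $(D_m^O, E_m^O) \in \DisjCoUP^O$ means both complements $\overline{D_m^O}$ and $\overline{E_m^O}$ are in $\UP^O$ and $D_m^O \cap E_m^O = \emptyset$. Unwinding the quantifiers, $w \in \overline{D_m^O}$ iff $w$ is not of the form $0^n$ with $n \in H_m$, or else $w = 0^n$ with $n \in H_m$ and some $x \in \Sigma^{n-1}1$ lies in $O$. Design the $\UP^O$-machine accordingly: deterministically check the form of $w$; if it fails, accept on a single designated path; otherwise, guess $x \in \Sigma^{n-1}1$ and accept iff $x \in O$. The hypothesis that $O \cap \Sigma^n$ contains no two words of the same parity guarantees at most one such $x$, so at most one accepting path overall. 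The machine for $\overline{E_m^O}$ is symmetric, guessing $x \in \Sigma^{n-1}0$ instead. Disjointness of $D_m^O$ and $E_m^O$ follows from the hypothesis that $O \cap \Sigma^n$ is non-empty, which forces at least one of the implications defining $D_m^O$, $E_m^O$ to fail.

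The only subtlety—and the step most likely to cause trouble if stated sloppily—is ensuring that the $\UP$ machines genuinely have at most one accepting path on \emph{every} input, including ill-formed ones; this is handled cleanly by letting the deterministic prefix either reject outright or commit to the single accepting path before branching. Everything else reduces to unwinding the definitions of $A_m^O, \dots, E_m^O$ against the two respective hypotheses.
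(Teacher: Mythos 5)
Your proposal is correct and matches the argument the paper implicitly relies on: the paper states this as an Observation without proof, since it follows by exactly the unwinding of the definitions of $A_m^O,\dots,E_m^O$ and the standard guess-one-word-per-level $\UP$ machines that you describe. Your handling of the single-accepting-path issue on ill-formed inputs and of disjointness (via the parity/non-emptiness hypotheses) is exactly the intended reasoning.
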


\paragraph{Preview of the construction.}
\begin{enumerate}[label=\arabic*.]
    \item Work towards $\P = \NPcoNP$: For all $a\neq b$, the construction tries to achieve that $M_a, M_b$ do not accept complementary.
        If this is not possible, $(M_a, M_b)$ inherently accept complementary, and thus $L(M_a)\in \NPcoNP$.
        Then, we start to \emph{encode} into the oracle, whether $M_a$ accepts some inputs or not.
        Thus, the final oracle will contain the encodings for almost all inputs, thus allowing to recover the accepting behavior of $M_a$ and hence to decide $L(M_i)$ in $\P$ using oracle queries.

    \item Work towards (ii), which implies $\hDisjNP$: For all $i\neq j$, the construction tries to achieve that $M_i, M_j$ do not accept disjointly, hence $(L(M_i), L(M_j))\not\in \DisjNP$.
        If this is not possible, $(M_i, M_j)$ inherently is a disjoint $\NP$-pair.
        In this case, we fix some $m$, make sure that $(A_m,B_m)$ is a disjoint $\UP$-pair and diagonalize against every transducer $F_r$, such that $F_r$ does not realize the reduction $(A_m, B_m)\leqmpp (L(M_i), L(M_j))$.
        This is achieved by, (i) for all $n\in H_m$, insert at most one word of length $n$ into $O$ (and thus $(A_m, B_m)\in\DisjUP$), and (ii) for every $r$ there is an $n\in H_m$ such that $0^{n}\in A_m$ but $M_i(F_r(0^n))$ rejects (or analogously $0^n\in B_m$ but $M_j(F_r(0^n))$ rejects).
    \item Work towards (iii), i.e., $\hUP$: Try to make $M_i$ accept on two separate paths.
        If this is not possible, then $L(M_i)$ inherently is a UP-language. In this case, we fix some $m$, make sure that $C_m$ is a language in $\UP$ and diagonalize against every transducer $F_r$ such that $F_r$ does not realize the reduction $C_m\leqmp L(M_i)$.
        This is achieved by, (i) for all $n\in H_m$, insert at most one word of length $n$ into $O$ (and thus $C_m\in\UP$), and (ii) for every $r$ there is an $n\in H_m$ such that $0^{n}\in C_m$ if and only if $M_i(F_r(0^n))$ rejects.
    \item Work towards (iv), which implies $\hDisjCoNP$: Try to achieve that that for some input, $M_i, M_j$ both reject.
        If this is not possible, $(M_i, M_j)$ inherently is a disjoint $\coNP$-pair.
        In this case, we fix some $m$, make sure that $(D_m,E_m)$ is a disjoint $\coUP$-pair and diagonalize against every transducer $F_r$, such that $F_r$ does not realize the reduction $(D_m, E_m)\leqmpp (L(M_i), L(M_j))$.
        This is achieved by, (i) for all $n\in H_m$, insert at least one word of length $n$ into $O$ but not two words with same parity (and thus $(D_m, E_m)\in\DisjCoUP$), and (ii) for every $r$ there is an $n\in H_m$ such that $0^{n}\in D_m$ but $M_i(F_r(0^n))$ accepts (or analogously $0^n\in E_m$ but $M_j(F_r(0^n))$ accepts).
\end{enumerate}
To these requirements, we assign following symbols representing tasks: $\tau^1_{a,b}$, $ \tau^2_{i,j}$, $ \tau^2_{i,j,r}$, $ \tau^3_{i}$, $ \tau^3_{i,r}$, $ \tau^4_{i,j}$, $ \tau^4_{i,j,r}$ for all $a,b,i,j,r\in \mathbb N, i\neq j$, $a\neq b$.
Symbol $\tau^1_{a,b}$ represents the coding or the destruction of $\NPcoNP$-pairs. Symbol $\tau^2_{i,j}$ represents the destruction of a disjoint $\NP$-pair, $\tau^2_{i,j,r}$ the diagonalization of that pair against transducer $F_r$.
Analogously for $\UP$ and $\tau^3_{i}, \tau^3_{i,r}$.
Analogously for $\DisjCoNP$ and $\tau^4_{i,j}, \tau^4_{i,j,r}$.

For the coding, we injectively define the code word $c(a,b,x)\coloneqq 0^a10^b10^l10^p1x$ with $p=p_a(|x|)+p_b(|x|)$, $l\in \mathbb N$ minimal such that $l\geq 7/8 |c(a,b,x)|$ and $c(a,b,x)$ has odd length. We call any word of the form $c(\cdot, \cdot, \cdot)$ a \emph{code word}. This ensures the following properties:
\begin{claim}\label{claim:codewords} For all $a,b\in\mathbb N$, $x\in\Sigma^*$
\begin{enumerate}
    \item $|c(a,b,x)| \not\in H_m$ for any $m$.
    \item For fixed $a,b$, the function $x \mapsto c(a,b,x)$ is polynomial-time computable, and polynomial-time invertible with respect to $|x|$.
    \item Relative to any oracle, the running times of $M_a(x)$ and $M_b(x)$ are both bounded by $<|c(a,b,x)|/8$.
    \item For every partial oracle $w\in\Sigma^*$, if $c(a,b,x)\leq |w|$, then $M_a^w(x)$ and $M_b^w(x)$ are definite.
\end{enumerate}
\end{claim}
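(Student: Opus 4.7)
The plan is to unpack the definition of $c(a,b,x) = 0^a 1 0^b 1 0^l 1 0^p 1 x$ and derive a single explicit length formula from which all four items are immediate. Writing $q \coloneqq a + b + p + |x| + 4$, we have $|c(a,b,x)| = l + q$, and the minimality condition $l \geq \tfrac{7}{8}(l + q)$ simplifies to $l \geq 7q$. Since $8q$ is even, the parity requirement rules out $l = 7q$ and forces $l = 7q + 1$, so $|c(a,b,x)| = 8q + 1$. This identity will drive the whole argument.

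From here items~(i)--(iii) follow directly. For (i), $|c(a,b,x)| = 8q+1$ is odd while every $H_m$ consists of even numbers by Observation~\ref{obs:leveldefinitions}. For (ii), computing $c(a,b,x)$ from $(a,b,x)$ only requires evaluating $p = p_a(|x|) + p_b(|x|)$ and $l = 7q+1$ and then concatenating five blocks, which is polynomial in $|x|$ (and $a,b$); inversion parses the four leading $0^{*}1$-runs from the left and takes the remaining suffix to be $x$. For (iii), the runtimes of $M_a(x)$ and $M_b(x)$ are at most $p_a(|x|) \leq p$ and $p_b(|x|) \leq p$ respectively, and $p < q + \tfrac{1}{8} = |c(a,b,x)|/8$ strictly, which is the required bound.

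The slightly less routine step is item~(iv), which I would handle as follows. Every query performed by $M_a(x)$ or $M_b(x)$ on any computation path has length at most $p$, so under the quasi-lexicographic identification of words with natural numbers such a query has rank at most $2^{p+1} - 2$. On the other hand, $c(a,b,x)$ is a word of length $|c(a,b,x)| = 8q+1 \geq 8p$, so as a natural number $c(a,b,x) \geq 2^{|c(a,b,x)|} - 1 \geq 2^{8p} - 1 \geq 2^{p+1} - 1$ (using $p \geq 2$, which holds since $p_i(|x|) \geq 1$ for every $i$). Hence the hypothesis $c(a,b,x) \leq |w|$ forces $|w|$ to strictly exceed every possible query position, so $w$ answers every oracle query on every computation path and both $M_a^w(x)$ and $M_b^w(x)$ are definite. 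I do not anticipate a real obstacle; the only step worth double-checking is the opening parity/minimality calculation for $l$, since every subsequent item rests on the clean formula $|c(a,b,x)| = 8q + 1$.
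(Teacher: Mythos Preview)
Your proposal is correct and essentially matches the paper's approach: the paper declares (i)--(iii) ``immediate'' and for (iv) simply observes that any query $q$ has $|q| \le p_a(|x|)+p_b(|x|) < |c(a,b,x)|$, whence $q < c(a,b,x) \le |w|$ by the shortlex order. Your explicit derivation of $|c(a,b,x)| = 8q+1$ is a nice addition the paper omits; for (iv) you can shortcut your rank-bound computation by using this same shortlex length comparison directly rather than passing through $2^{p+1}-2$ and $2^{8p}-1$.
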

\begin{proof}
    (i), (ii), and (iii) are immediate.
    For (iv), observe that for any oracle query $q$ on input $x$, it holds that
    \begin{equation*} |q|\leq \max\{p_a(|x|), p_b(|x|)\} \leq p_a(|x|)+ p_b(|x|) < |c(a,b,x)|, \end{equation*}
    hence, $q<c(a,b,x) \leq |w|$ and the computations are definite by definition.
\end{proof}

During the construction we successively add requirements that we maintain, which are specified by a partial function belonging to the set $\mathcal T$, and that are defined as follows: function $t\in\mathcal T$ if $t$ partially maps $\tau^1_{a,b}$, $\tau^2_{i,j}$, $\tau^3_i$, $\tau^4_{i,j}$ to $\mathbb N$, and $\dom(t)$ is finite, and $t$ is injective on its support.

A partial oracle $w\in\Sigma^*$ is called \emph{$t$-valid} for $t\in \mathcal T$ if it satisfies following requirements:
\begin{enumerate}
    \item[V1] If $t(\tau^1_{a,b}) = 0$, then there exists an $x$ such that $M_a^w(x)$, $M_b^w(x)$ both definitely accept or both definitely reject.\\ (Meaning: if $t(\tau^1_{a,b}) = 0$, then for every extension of the oracle, $M_a, M_b$ do not accept complementary.)
    \item[V2] If $0<t(\tau^1_{a,b})\leq c(a,b,x)< |w|$, then $M_a^w(x)$ is definite. Computation $M_a^w(x)$ accepts when $c(a,b,x)\in w$, and rejects when $c(a,b,x)\not\in w$.
        Note that when previous conditions are not met ($\tau^1_{a,b}\not\in \dom(t)$ or $t(\tau^1_{a,b})=0$ or $t(\tau^1_{a,b})>c(a,b,x)$) then the code word $c(a,b,x)$ may be a member of oracle $w$, independent of $M_a$, $M_b$.
        \\
        (Meaning: if $t(\tau^1_{a,b}) > 0$, then from $t(\tau^1_{a,b})$ on, we encode $L(M_a)$ into the oracle.
        That is, $L(M_a^O) = (\{ x\mid c(a,b,x)\in O \} \cup \text{some finite set}) \in \P^O$.)
    \item[V3] If $t(\tau^2_{i,j})=0$, then there exists $x$ such that $M_i^w(x)$, $M_j^w(x)$ both definitely accept.\\
        (Meaning: if $t(\tau^2_{i,j})=0$, then for every extension of the oracle, $(L(M_i), L(M_j))\not\in\DisjNP$.)
    \item[V4] If $t(\tau^2_{i,j})=m>0$, then for every $n\in H_m$ it holds that $|\Sigma^{n}\cap w|\leq 1$.\\
        (Meaning: if $t(\tau^2_{i,j})=m>0$, then ensure that $(A_m, B_m)\in\DisjUP$ relative to the final oracle.)
    \item[V5] If $t(\tau^3_{i})=0$, then there exists $x$ such that $M_i^w(x)$ is definite and accepts on two different paths.\\
        (Meaning: if $t(\tau^3_{i})=0$, then for every extension of the oracle, $L(M_i)\not\in\UP$.)
    \item[V6] If $t(\tau^3_{i})=m>0$, then for every $n\in H_m$ it holds that $|\Sigma^{n}\cap w|\leq 1$.\\
        (Meaning: if $t(\tau^3_{i})=m>0$, then ensure that $C_m\in\UP$ relative to the final oracle.)
    \item[V7] If $t(\tau^4_{i,j})=0$, then there exists $x$ such that $M_i^w(x)$, $M_j^w(x)$ both definitely reject.\\
        (Meaning: if $t(\tau^4_{i,j})=0$, then for every extension of the oracle, $(\overline{L(M_i)}, \overline{L(M_j)})\not\in\DisjCoNP$.)
\item[V8] If $t(\tau^4_{i,j})=m>0$, then for every $n\in H_m$ it holds that all words in $\Sigma^{n}\cap w$ have pairwise different parity. If additionally $w$ is defined for all words of length $n$, then $|\Sigma^{n}\cap w|>0$.\\
        (Meaning: if $t(\tau^4_{i,j})=m>0$, then ensure that $(D_m, E_m)\in\DisjCoUP$ relative to the final oracle.)
\end{enumerate}
Observe that V4, V6, V8 do not (pairwise) contradict each other, since $t$ is injective on its support and all $H_1, H_2, \dots$ are pairwise disjoint, by Observation~\ref{obs:leveldefinitions}(i).
Also observe that V2 and V4 (resp., V2 and V6, V2 and V8) do not contradict each other, as $c(\cdot, \cdot, \cdot)$ has odd length, but all $n$ in all $H_m$ are even by Observation~\ref{obs:leveldefinitions}(i).

The subsequent observation follows directly from the definition of $t$-valid partial oracles:
\begin{observation}\label{obs:downward}
    Let $t, t'\in\mathcal T$ such that $t'$ is an extension of $t$.
    Whenever $w\in\Sigma^*$ is $t'$-valid, then $w$ is $t$-valid. In particular, $w$ remains $t$-valid, even if $w$ contains code words $c(\cdot,\cdot,\cdot)$ that $t$ does not require to be in $w$, as long as V2 is satisfied.
\end{observation}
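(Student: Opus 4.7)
The plan is a direct unfolding of the definitions. Since $t'$ extends $t$, we have $\dom(t)\subseteq \dom(t')$ with $t'(\tau)=t(\tau)$ for every $\tau\in\dom(t)$. To show that a $t'$-valid oracle $w$ is also $t$-valid, I would go through each of V1--V8 in turn. Each condition has the shape ``if $t(\tau)=\cdots$ then [some property of $w$]''. Under that hypothesis, $\tau\in\dom(t)\subseteq\dom(t')$ and $t'(\tau)=t(\tau)$, so the $t'$-variant of the condition fires with identical numerical parameters and gives the required property of $w$. None of V1--V8 mentions $\dom(t)\setminus\dom(t')$ or otherwise depends on which requirement function is in force beyond its value at $\tau$, so the verification is entirely mechanical.

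For the ``in particular'' addendum, I would point out that only V2 among the eight conditions is sensitive to the membership of code words $c(\cdot,\cdot,\cdot)$ in $w$. Indeed, V1, V3, V5, V7 are pure existence statements about definite computations of fixed machines on fixed inputs and depend only on $w$ itself; V4, V6, V8 constrain $|\Sigma^n\cap w|$ only at levels $n\in H_m$, and by Observation~\ref{obs:leveldefinitions}(i) every such $n$ is even, whereas by Claim~\ref{claim:codewords}(i) every code word has odd length. Hence no extra code word can contribute to $\Sigma^n\cap w$ for any $n\in H_m$, and so the presence of additional code words in $w$ can endanger only V2. Granting V2, the remaining conditions for $t$-validity are automatically preserved.

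I do not expect a real obstacle: the observation is essentially bookkeeping, asserting that the notion of $t$-validity behaves monotonically as we shrink the requirement function and that code-word membership is handled exclusively by V2. The only two facts that must be invoked are the equality of $t$ and $t'$ on $\dom(t)$ and the length-parity separation between code words and the level sets $H_m$, both already recorded in the preceding observations and in Claim~\ref{claim:codewords}.
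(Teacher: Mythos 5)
Your proposal is correct and matches the paper's intent exactly: the paper states this observation without proof, noting only that it "follows directly from the definition of $t$-valid partial oracles," and your mechanical unfolding of V1--V8 together with the parity/length separation between code words and the levels in $H_m$ is precisely that direct argument (the paper records the same parity remark right before the observation). The only nitpick is that the odd length of code words comes from the definition of $c(\cdot,\cdot,\cdot)$ rather than from Claim~\ref{claim:codewords}(i), which instead states the equivalent consequence $|c(a,b,x)|\notin H_m$.
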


\paragraph{Oracle construction.} Let $T$ be a countable enumeration of 
\begin{alignat*}{3} \{\tau^1_{a,b} \mid a,b\in\mathbb N, a\neq b\} &\cup \{\tau^2_{i,j} \mid i,j\in\mathbb N, i\neq j\} &&\cup \{ \tau^2_{i,j,r} \mid i,j,r\in\mathbb N, i\neq j \} \\
&\cup \{\tau^3_{i} \mid i\in\mathbb N\} &&\cup \{ \tau^3_{i,r} \mid i,r\in\mathbb N \}  \\
&\cup \{\tau^4_{i,j} \mid i,j\in\mathbb N, i\neq j\} &&\cup \{ \tau^4_{i,j,r} \mid i,j,r\in\mathbb N, i\neq j \} 
\end{alignat*}
with the property that $\tau^2_{i,j}$ appears earlier than $\tau^2_{i,j,r}$, $\tau^3_{i}$ appears earlier than $\tau^3_{i,r}$, $\tau^4_{i,j}$ earlier than $\tau^4_{i,j,r}$.

We  recursively define an $\omega$-infinite sequence $\{(w_s, t_s)\}_{s<\omega}$, where the $s$-th term of the sequence is a pair $(w_s, t_s)$ of a partial oracle and a function in $\mathcal{T}$.
We call the $s$-th term the \emph{stage $s$}.

In each stage, we treat the smallest task in the order specified by $T$, and after treating a task we remove it and possibly other higher tasks from $T$.
In the next stage, we continue with the next task not already removed from $T$. (In every stage, there always exists a task not already removed, as we never remove \emph{all} remaining tasks from $T$ in any stage.)

We start with the nowhere defined function $t_0\in \mathcal T$ and the $t_0$-valid oracle $w_0\coloneqq\varepsilon$ as $0$-th stage.
Then we begin treating the tasks.

Thus, for stage $s>0$, we have that $w_0, w_1, \dots, w_{s-1}$ and $t_0, t_1, \dots, t_{s-1}$ are defined.
With this, we define the $s$-th stage $(w_s, t_s)$ such that (a) $w_{s-1} \sqsubsetneq w_s$, and $t_{s}\in \mathcal T$ is a (not necessarily strict) extension of $t_{s-1}$, and (b) $w_s$ is $t_s$-valid, and (c) the earliest task $\tau$ still in $T$ is treated and removed in some way.

So for each task we strictly extend the oracle and are allowed to add more requirements, by extending the valid function, that have to be maintained in the further construction.
Finally, we choose $O\coloneqq\bigcup_{i\in\mathbb N} w_i$.
(Note that $O$ is totally defined since in each step we strictly extend the oracle.)
Also, every task in $T$ is assigned some stage $s$ where it was treated (or removed from $T$).

We now define stage $s > 0$, which starts with 
some $t_{s-1}\in \mathcal T$ and a $t_{s-1}$-valid oracle $w_{s-1}$ and treats the first task that still is in $T$ choosing an extension $t_s\in\mathcal T$ of $t_{s-1}$ and a $t_s$-valid $w_s\sqsupsetneq w_{s-1}$.
Let us recall that each task is immediately deleted from $T$ after it is treated.
There are seven cases depending on the form of the task that is treated in stage $s$:
\bigskip

\textbf{Task} $\tau^1_{a,b}$: Let $t'\coloneqq t_{s-1}\cup \{\tau^1_{a,b} \mapsto 0\}$. If there exists a $t'$-valid $v\sqsupsetneq w_{s-1}$, then assign $t_s\coloneqq t'$ and let $w_s$ be the $<$-smallest (or equivalently, $\prec_\mathrm{lex}$-smallest) $t_s$-valid partial oracle $\sqsupsetneq w_{s-1}$.

Otherwise, let $t_s\coloneqq t_{s-1} \cup \{\tau^1_{a,b} \mapsto n \}$ with $n\in\mathbb N^+$ sufficiently large such that $n>|w_s|, \max\img(t_{s-1})$.
Thus $t_s$ is injective on its support, and $w_{s-1}$ is $t_s$-valid.
Let $w_s\coloneqq w_{s-1}y$ with $y\in\{0,1\}$ such that $w_s$ is $t_s$-valid.
We will show in Lemma~\ref{lemma:extension} that such $y$ does indeed exist.

(Meaning: try to ensure that $M_a, M_b$ do not accept complementary, cf. V1. If that is impossible, require that from now on the computations of $M_a$ are encoded into the oracle, cf. V2.)
\medskip

\textbf{Task} $\tau^2_{i,j}$: Let $t'\coloneqq t_{s-1}\cup \{\tau^2_{i,j} \mapsto 0\}$. If there exists $t'$-valid $v\sqsupsetneq w_{s-1}$, then assign $t_s\coloneqq t'$ and $w_s$ as the $<$-smallest $t_s$-valid partial oracle $\sqsupsetneq w_{s-1}$.
Remove all tasks $\tau^2_{i,j,0}, \tau^2_{i,j,1}, \dots$ from $T$.

Otherwise, let $t_s\coloneqq t_{s-1} \cup \{\tau^2_{i,j} \mapsto m\}$ with $m\in\mathbb N^+$ sufficiently large such that $m\not\in\img(t_{s-1})$ and that $w_{s-1}$ defines no word of length $\min H_m$. Thus $t_s$ is injective on its support, and $w_{s-1}$ is $t_s$-valid. Let $w_s\coloneqq w_{s-1}y$ with $y\in\{0,1\}$ such that $w_s$ is $t_s$-valid.
Again, we will show in Lemma~\ref{lemma:extension} that such $y$ does indeed exist.

(Meaning: try to ensure that $M_i, M_j$ do not accept disjointly, cf. V3. If that is impossible, choose a sufficiently large “fresh” $m$ and require for the further construction that $(A_m,B_m)\in\DisjUP$ (cf. V4). The treatment of tasks $\tau^2_{i,j,0}, \tau^2_{i,j,1}, \dots$ makes sure that $(A_m,B_m)$ cannot be reduced to $(L(M_i), L(M_j))$.)
\medskip

\textbf{Task $\tau^3_{i}$}: Let $t'\coloneqq t_{s-1}\cup \{\tau^3_{i} \mapsto 0\}$. If there exists $t'$-valid $v\sqsupsetneq w_{s-1}$, then assign $t_s\coloneqq t'$ and $w_s$ as the $<$-smallest $t_s$-valid partial oracle $\sqsupsetneq w_{s-1}$.
Remove all tasks $\tau^3_{i,0}, \tau^3_{i,1}, \dots$ from $T$.

Otherwise, let $t_s\coloneqq t_{s-1} \cup \{\tau^3_{i} \mapsto m\}$ with $m\in\mathbb N^+$ sufficiently large such that $m\not\in\img(t_{s-1})$ and that $w_{s-1}$ defines no word of length $\min H_m$. Thus $t_s$ is injective on its support, and $w_{s-1}$ is $t_s$-valid. Let $w_s\coloneqq w_{s-1}y$ with $y\in\{0,1\}$ such that $w_s$ is $t_s$-valid.
Again, we will show in Lemma~\ref{lemma:extension} that such $y$ does indeed exist.

(Meaning: try to ensure that $M_i$ does accept on two different paths, cf. V5. If that is impossible, choose a sufficiently large “fresh” $m$ and require for the further construction that $C_m\in\UP$ (cf. V6). The treatment of tasks $\tau^3_{i,0}, \tau^3_{i,1}, \dots$ makes sure that $C_m$ cannot be reduced to $L(M_i)$.)
\medskip

\textbf{Task $\tau^4_{i,j}$}: Defined symmetric. Let $t'\coloneqq t_{s-1}\cup \{\tau^4_{i,j} \mapsto 0\}$. If there exists $t'$-valid $v\sqsupsetneq w_{s-1}$, then assign $t_s\coloneqq t'$ and $w_s$ as the $<$-smallest $t_s$-valid partial oracle $\sqsupsetneq w_{s-1}$.
Remove all tasks $\tau^4_{i,j,0}, \tau^4_{i,j,1}, \dots$ from $T$.

Otherwise, let $t_s\coloneqq t_{s-1} \cup \{\tau^4_{i,j} \mapsto m\}$ with $m\in\mathbb N^+$ sufficiently large such that $m\not\in\img(t_{s-1})$ and that $w_{s-1}$ defines no word of length $\min H_m$. Thus $t_s$ is injective on its support, and $w_{s-1}$ is $t_s$-valid. Let $w_s\coloneqq w_{s-1}y$ with $y\in\{0,1\}$ such that $w_s$ is $t_s$-valid.
Again, we will show in Lemma~\ref{lemma:extension} that such $y$ does indeed exist.

(Meaning: try to ensure that $M_i, M_j$ do not reject disjointly, cf. V7. If that is impossible, choose a sufficiently large “fresh” $m$ and require for the further construction that $(D_m,E_m)\in\DisjCoUP$ (cf. V8). The treatment of tasks $\tau^4_{i,j,0}, \tau^4_{i,j,1}, \dots$ makes sure that $(D_m,E_m)$ cannot be reduced to $(\overline{L(M_i)}, \overline{L(M_j)})$.)
\medskip

\textbf{Task} $\tau^2_{i,j,r}$:  We have $t_{s-1}(\tau^2_{i,j}) = m\in\mathbb N^+$. Let $t_s \coloneqq  t_{s-1}$ and choose $t_s$-valid $w_s\sqsupsetneq w_{s-1}$
such that there is some $n\in\mathbb N$ and one of the following holds:
\begin{itemize}[label=–,nosep]
    \item $0^n\in A_m^v$ for all $v\sqsupseteq w_s$ and $M_i^{w_s}(F_r^{w_s}(0^n))$ definitely rejects.
    \item $0^n\in B_m^v$ for all $v\sqsupseteq w_s$ and $M_j^{w_s}(F_r^{w_s}(0^n))$ definitely rejects.
\end{itemize}
In Theorem~\ref{thm:disjnp} we show that such $w_s$ does exist.

(Meaning: ensure that $F_r$ does not reduce $(A_m, B_m)$ to $(L(M_i), L(M_j))$.)
\medskip

\textbf{Task} $\tau^3_{i,r}$:  We have $t_{s-1}(\tau^3_{i}) = m\in\mathbb N^+$. Let $t_s \coloneqq  t_{s-1}$ and choose $t_s$-valid $w_s\sqsupsetneq w_{s-1}$
such that there is some $n\in\mathbb N$ and one of the following holds:
\begin{itemize}[label=–,nosep]
    \item $0^n\in C_m^v$ for all $v\sqsupseteq w_s$ and $M_i^{w_s}(F_r^{w_s}(0^n))$ definitely rejects.
    \item $0^n\not\in C_m^v$ for all $v\sqsupseteq w_s$ and $M_i^{w_s}(F_r^{w_s}(0^n))$ definitely accepts.
\end{itemize}
In Theorem~\ref{thm:up} we show that such $w_s$ does exist.

(Meaning: ensure that $F_r$ does not reduce $C_m$ to $L(M_i)$.)
\medskip

\textbf{Task $\tau^4_{i,j,r}$}: Defined symmetric to $\tau^2_{i,j,r}$. Choose $t_s$-valid $w_s\sqsupsetneq w_{s-1}$ such that for some $n\in\mathbb N$, one of the two holds:
\begin{itemize}[label=–,nosep]
    \item $0^n\in D_m^v$ for all $v\sqsupseteq w_s$ and $M_i^{w_s}(F_r^{w_s}(0^n))$ definitely accepts.
    \item $0^n\in E_m^v$ for all $v\sqsupseteq w_s$ and $M_j^{w_s}(F_r^{w_s}(0^n))$ definitely accepts.
\end{itemize}
In Theorem~\ref{thm:disjconp} we show that such $w_s$ does exist.

(Meaning: ensure that $F_r$ does not reduce $(D_m, E_m)$ to $(\overline{L(M_i)},\overline{L(M_j)})$.)

\bigskip

Observe that $t_s$ is always defined to be in $\mathcal T$.
Remember that the treated task is immediately deleted from $T$.
This completes the definition of stage $s$, and thus, the entire sequence $\{(w_s, t_s)\}_{s<\omega}$.

We now show that this construction is indeed possible, by stating and proving the theorems/lemma that were announced in the definition.
First we state a simple observation from the construction, concerning tasks $\tau^1_{a,b}$ resp. pairs of machines that accept complementary:
\begin{lemma}\label{lemma:npconp-tasks}
    Let $s\in \mathbb N^+$, $(w_0, t_0), \dots, (w_s, t_s)$ defined, $w\in \Sigma^*$ be a $t_s$-valid oracle with $w\sqsupseteq w_s$, and $z\coloneqq c(a,b,x)$ for some $a,b,x$.

    Suppose that $0<t_s(\tau^1_{a,b})\leq z\leq |w|$.
    Then $M_a^w(x)$, $M_b^w(x)$ are definite, and $M_a^w(x)$ accepts if and only if $M_b^w(x)$ rejects.
\end{lemma}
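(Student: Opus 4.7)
The plan is to split the lemma into two parts: definiteness of the two computations, and their complementary behaviour. Definiteness is immediate from Claim~\ref{claim:codewords}(iv): since $c(a,b,x)=z\le |w|$, both $M_a^w(x)$ and $M_b^w(x)$ are definite. This already settles the first conclusion and will be needed again at the end.

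For the complementary acceptance, I would locate the unique stage $s'\le s$ at which the task $\tau^1_{a,b}$ was treated. Once a task is treated it is removed from $T$ and its value in $t$ is frozen, so $t_s(\tau^1_{a,b})>0$ implies that in stage $s'$ the construction took the \emph{else}-branch. That branch is entered precisely when there is no $v\sqsupsetneq w_{s'-1}$ that is $t'$-valid for $t':=t_{s'-1}\cup\{\tau^1_{a,b}\mapsto 0\}$.

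I would then apply this universal non-existence to $w$ itself. Because the stages strictly extend one another, one has $w_{s'-1}\sqsubsetneq w_{s'}\sqsubseteq\cdots\sqsubseteq w_s\sqsubseteq w$, and hence $w\sqsupsetneq w_{s'-1}$. By Observation~\ref{obs:downward}, the $t_s$-validity of $w$ together with $t_{s'-1}\subseteq t_s$ yields that $w$ is $t_{s'-1}$-valid. Consequently $w$ must fail to be $t'$-valid, and since $t'$ differs from $t_{s'-1}$ only by the added assignment $\tau^1_{a,b}\mapsto 0$, the only possible failure is in condition V1 for this task. Explicitly: for every $x'$, it is \emph{not} the case that $M_a^w(x')$ and $M_b^w(x')$ both definitely accept or both definitely reject. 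Specialising to the $x$ of the lemma, and using that $M_a^w(x)$ and $M_b^w(x)$ are definite by the first ingredient, we conclude that the two computations cannot both accept and cannot both reject, so they must give opposite answers.

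The only delicate point is the bookkeeping that transfers the ``no $t'$-valid $v\sqsupsetneq w_{s'-1}$'' hypothesis of stage $s'$ to the given oracle $w$; this requires both the strict chain $w\sqsupsetneq w_{s'-1}$ (from successive strict extensions of the partial oracles) and the downward propagation of validity along $t_{s'-1}\subseteq t_s$. With those in place, the remaining steps reduce to noting which single V-condition has been newly introduced, and the proof closes immediately.
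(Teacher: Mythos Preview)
Your proof is correct and follows essentially the same approach as the paper's: both identify the stage $s'$ treating $\tau^1_{a,b}$, use that $t_s(\tau^1_{a,b})>0$ forces the else-branch (no $t'$-valid proper extension of $w_{s'-1}$ exists), transfer this to $w$ via the strict chain and Observation~\ref{obs:downward}, and conclude that V1 must fail for $w$. The only cosmetic difference is that the paper phrases it as a proof by contradiction (assume the computations agree, derive that $w$ is $t'$-valid, contradiction) while you argue the contrapositive directly; the content is identical.
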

\begin{proof}
    First, we observe that $M_a^w(x)$, $M_b^w(x)$ are definite: this immediately follows from Claim~\ref{claim:codewords}(iv).

    To argue for the claimed equivalence, assume it does not hold.
    Assume that  $M_a^w(x)$ and $M_b^w(x)$ accept (resp., reject).
    Let $s'\leq s$ be the stage that treated $\tau^1_{a,b}$.
    Such stage exists, as otherwise $t_s(\tau^1_{a,b})$ is undefined.
    Let $t'\coloneqq t_{s'-1}\cup \{\tau^1_{a,b}\mapsto 0\}$.

    We show that $w$ is $t'$-valid and $w\sqsupsetneq w_{s'-1}$.
    Last assertion is immediate, since $w_{s'-1}\sqsubsetneq w_{s'} \sqsubseteq w_s \sqsubseteq w$.
    Also note that $w$ is $t_{s'-1}$-valid, since $w$ is $t_s$-valid by hypothesis of this Lemma and thus Observation~\ref{obs:downward} applies.

    Hence, for $t'$-validity, only V1 is at risk, but by assumption, both $M_a^w(x)$ and $M_b^w(x)$ definitely accept (resp., reject).

    Thus, $w$ is a possible $t'$-valid extension of $w_{s'-1}$ in stage $s'$. We obtain that the treatment of task $\tau^1_{a,b}$ would define $t_{s'}=t'$.
    But then $t_s(\tau^1_{a,b})=0$, contradicting the hypothesis of the Lemma.
\end{proof}

Now, we describe how a valid oracle can by extended by one bit such that it remains valid:
\begin{lemma}\label{lemma:extension}
    Let $s\in \mathbb N$, $(w_0, t_0), \dots, (w_s, t_s)$ defined, and let $w\in \Sigma^*$ be a $t_s$-valid oracle with $w\sqsupseteq w_s$, and $z\coloneqq |w|$. (Think of $z$ as the next word we need to decide its membership to the oracle, i.e., $z\not\in w0$ or $z\in w1$.)
    Then there exists $y\in\{0,1\}$ such that $wy$ is $t_s$-valid. Specifically:
    \begin{enumerate}

        \item If $z=c(a,b,x)$ and $0<t_s(\tau^1_{a,b})\leq z$, then $w1$ is $t_s$-valid if $M_a^w(x)$ accepts (or when $M_b^w(x)$ rejects), and $w0$ is $t_s$-valid if $M_a^w(x)$ rejects (or when $M_b^w(x)$ accepts).\\
            (Meaning: if we are at a position of some mandatory code word, add the word as appropriate for the $\NPcoNP$-pair.)
        \item If there exists $\tau=\tau^2_{i,j}$ or $\tau=\tau^3_i$ with $m=t_s(\tau)>0$ and $n\in H_m$ such that $|z|=n$, $w\cap \Sigma^{n} \neq \emptyset$, then $w0$ is $t_s$-valid.\\
            (Meaning: if we are on a level $n$ belonging to a $\DisjUP$-pair or a $\UP$-language, ensure that there is no more than one word on that level.)
        \item If there exists $\tau^4_{i,j}$, $m=t_s(\tau^4_{i,j})>0$ and $n\in H_m$ such that $|z|=n$ and there is some other word $x\in w\cap \Sigma^n$ with same parity as $z$, then $w0$ is $t_s$-valid.
            (Meaning: if we are on a level $n$ belonging to a $\DisjCoUP$-pair, ensure that on that level, there are no two words with the same parity.)
        \item If there exists $\tau^4_{i,j}$, $m=t_s(\tau^4_{i,j})>0$ and $n\in H_m$ such that $|z|=n$, $|z+1|>n$, $w\cap \Sigma^{n} = \emptyset$, then $w1$ is $t_s$-valid.\\
            (Meaning: if we finalize level $n$ belonging to a $\DisjCoUP$ witness pair, ensure that there is at least one word on that level.)
        \item In all other cases, $w0$ and $w1$ are $t_s$-valid.
    \end{enumerate}
\end{lemma}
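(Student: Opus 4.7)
The plan is to verify that the bit $y\in\{0,1\}$ proposed in each case (i)--(v) yields a $t_s$-valid extension $wy$, by examining the eight validity requirements V1--V8 one by one. The first observation I would make is that V1, V3, V5, V7 are purely existential statements asserting the existence of some $x$ on which certain definite computations accept or reject; by Observation~\ref{obs:partialoracles}(i) every such witness for $w$ remains a witness for $wy$. Hence these four conditions automatically survive any one-bit extension, and only V2, V4, V6, V8 are potentially at risk.

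For V2, the only way the extension can introduce a new obligation is if $z\coloneqq|w|$ is itself a code word $c(a,b,x)$ with $0<t_s(\tau^1_{a,b})\leq z$, since every V2-obligation with code word strictly below $|w|$ was already secured by the $t_s$-validity of $w$. This is precisely case (i). I would finish it by invoking Claim~\ref{claim:codewords}(iv) to see that $M_a^w(x)$ is definite, and Claim~\ref{claim:codewords}(iii) to conclude that every oracle query has length $<|c(a,b,x)|/8$, so in particular no query equals the position $z$; by Observation~\ref{obs:partialoracles}(iii) the computation $M_a^{wy}(x)$ thus behaves identically to $M_a^w(x)$, independent of $y$. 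Choosing $y=1$ iff $M_a^w(x)$ accepts therefore encodes exactly the required behavior into the oracle at position $z$; the ``$M_b^w(x)$ rejects'' formulation is equivalent by Lemma~\ref{lemma:npconp-tasks}.

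For V4, V6, V8 the only new obligations that extending can create concern length $|z|=n$ for some $n\in H_m$ belonging to an active task, and by pairwise disjointness of the $H_m$ (Observation~\ref{obs:leveldefinitions}(i)) at most one such task is active at length $n$. I would split by the task's type. If it is $\tau^2_{i,j}$ or $\tau^3_i$, V4/V6 caps $|\Sigma^n\cap wy|\leq 1$, forcing $y=0$ in case (ii) when $w\cap\Sigma^n$ is already nonempty. If it is $\tau^4_{i,j}$, case (iii) forces $y=0$ to avoid two words of the same parity in $\Sigma^n\cap wy$, and case (iv) forces $y=1$ to fulfill the nonemptiness clause of V8 once $wy$ defines all of $\Sigma^n$. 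Since code words have odd length while every $n\in H_m$ is even (Observation~\ref{obs:leveldefinitions}(i), Claim~\ref{claim:codewords}), cases (i) and (ii)--(iv) are mutually exclusive. In every remaining situation (case (v)) none of V2, V4, V6, V8 is activated by the new bit, so either $y\in\{0,1\}$ preserves $t_s$-validity.

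The main obstacle I expect is case (i): one must exhibit a concrete bit that already encodes the acceptance behavior of $M_a^{wy}(x)$ at position $z$ \emph{before} that bit is placed. The resolution hinges entirely on Claim~\ref{claim:codewords}, which guarantees that $M_a^w(x)$ is definite and queries only positions shorter than $z$, so the computation cannot depend on the bit at $z$ itself. The remaining cases reduce to careful but essentially mechanical bookkeeping against V4, V6, and V8.
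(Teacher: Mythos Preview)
Your proposal is correct and follows essentially the same approach as the paper: both arguments observe that V1, V3, V5, V7 survive any one-bit extension by Observation~\ref{obs:partialoracles}(i), and then dispatch V2, V4, V6, V8 case by case exactly along the lines (i)--(v) of the lemma statement. One small cleanup: in case (i) you cite Observation~\ref{obs:partialoracles}(iii), but that only transfers \emph{acceptance}; since you already have definiteness of $M_a^w(x)$ from Claim~\ref{claim:codewords}(iv), the two-sided equivalence $M_a^{wy}(x)$ accepts $\Leftrightarrow$ $M_a^{w}(x)$ accepts comes directly from Observation~\ref{obs:partialoracles}(i), which is what the paper uses.
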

\begin{proof}
    Suppose that $w$ is $t_s$-valid but $wy$ is not $t_s$-valid. Then V1, V3, V5 and V7 are not responsible for this, because $w$ is $t_s$-valid, $wy \sqsupseteq w$ and V1, V3, V5 and V7 are statements about definite computations. If they are true with respect to $w$, they also have to be true with respect to $wy$, since otherwise the computations would not have been definite.
    This means that one of the requirements V2, V4, V6 or V8 is violated with respect to $t_s$-validity of $wy$.
    We will see that the violation of any requirement V2, V4, V6, V8 leads to a contradiction, hence $wy$ is $t_s$-valid.

    Assume V2 is violated. Then $0<t_s(\tau^1_{a,b})\leq c(a,b,x)<|wy|$ for suitable $a,b,x$ and $M_a^{wy}(x)$ is definite, and accepts if and only if $c(a,b,x)\not\in wy$. If $c(a,b,x)\neq z$, then $c(a,b,x)<|w|$ and V2 is violated with respect to $t_s$-validity of $w$. This contradicts assumption that $w$ is $t_s$-valid.
    Hence $z=c(a,b,x)$. 
    The lemma treats this case by part (i). We have that $y$ is chosen such that $c(a,b,x)\in wy$ if and only if $M_a^w(x)$ accepts.
    Note that $c(a,b,x)=|w|$ and conditions of Claim~\ref{claim:codewords}(iv) apply: $M_a^w(x)$ is definite. Hence $M_a^w(x)$ accepts if and only if $M_a^{wy}(x)$ accepts, by Observation~\ref{obs:partialoracles}(i). Thus we obtain the contradiction
    \[ c(a,b,x)\in wy \iff M_a^w(x) \text{ accepts} \iff M_a^{wy}(x) \text{ accepts} \iff c(a,b,x)\not\in wy, \]
    where the last equivalence holds by assumption.

    Assume V4 is violated. Then $t_s(\tau^2_{i,j}) = m > 0$ for some $\tau^2 _{i,j}$ and for some $n \in H_m$, we have $|\Sigma ^n \cap wy | > 1$. If $|z|\neq n$, then $|\Sigma ^n \cap w| = |\Sigma ^n \cap wy | > 1$, and V2 is violated with respect to $t_s$-validity of $w$. This contradicts assumption that $w$ is $t_s$-valid.
    Hence $|z|=n$.
    The Lemma treats $z$ by part (ii), setting $y=0$. We have $|\Sigma ^n \cap w|=|\Sigma ^n \cap w0| > 1$. Again, this contradicts the hypothesis that $w$ is $t_s$-valid.
    Symmetric if V6 is violated.

    Assume V8 is violated. Then $t_s(\tau^4_{i,j}) = m > 0$ for some $\tau^4 _{i,j}$ and for some $n \in H_m$, 
    one of the two holds: (a) there are two words $x, x'\in \Sigma^n\cap wy$ having the same parity, or (b) $wy$ is defined for all words of length $n$ but $|\Sigma^n\cap wy|=0$.
    Again we can suppose that $|z|=n$, as otherwise $w$ is not $t_s$-valid.

    If (a) holds, then the Lemma treats $z$ by part (iii). Since $y=0$, we have that $x, x'\in \Sigma^n\cap w$ having same parity, thus $w$ violates V8. This contradicts the hypothesis that $w$ is $t_s$-valid. 

    If (b) holds, then $|z+1|>n$ and the Lemma treats $z$ by part (iv). Since $y=1$, we obtain with above assumption the contradiction $0=|\Sigma^n\cap wy|=|\Sigma^n\cap w1|\geq 1$.
\end{proof}

Now we show that the construction is possible for  $\tau^2_{i,j,r}$, $\tau^3_{i,r}$ and $\tau^4_{i,j,r}$, respectively.
We first consider task $\tau^2_{i,j,r}$.

\begin{theorem}\label{thm:disjnp}
    Let $s\in \mathbb N^+$, $(w_0, t_0), \dots, (w_{s-1}, t_{s-1})$ defined.
    Consider task $\tau^2_{i,j,r}$. 

    Suppose that $t_s=t_{s-1}$, $t_s(\tau^2_{i,j})=m>0$.
    Then there exists a $t_{s}$-valid $w\sqsupsetneq w_{s-1}$ and $n\in\mathbb N$ such that one of the two holds:
    \begin{enumerate}[noitemsep]
        \item $0^n\in A_m^v$ for all $v\sqsupseteq w$ and $M_i^{w}(F_r^{w}(0^n))$ definitely rejects.
        \item $0^n\in B_m^v$ for all $v\sqsupseteq w$ and $M_j^{w}(F_r^{w}(0^n))$ definitely rejects.
    \end{enumerate}
\end{theorem}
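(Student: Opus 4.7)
The plan is a standard diagonalization against $F_r$, exploiting the hypothesis $t_s(\tau^2_{i,j}) = m > 0$. First, pick $n \in H_m$ with $n \gg |w_{s-1}|$; this is possible since $H_m$ is infinite (Observation~\ref{obs:leveldefinitions}(i)). Extend $w_{s-1}$ bit-by-bit via Lemma~\ref{lemma:extension}, choosing $0$ at every free position, to obtain a $t_s$-valid $w^*$ long enough that $F_r^{w^*}(0^n)$ and both $M_i^{w^*}(z), M_j^{w^*}(z)$ are definite, where $z \coloneqq F_r^{w^*}(0^n)$. Since $t_s$ is injective on its support, no other task claims any level in $H_m$, so no case of the extension lemma forces a $1$ at level $n$, and we may arrange $w^* \cap \Sigma^n = \emptyset$.

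Let $s'$ be the stage that treated $\tau^2_{i,j}$. By hypothesis, no $(t_{s'-1} \cup \{\tau^2_{i,j} \mapsto 0\})$-valid strict extension of $w_{s'-1}$ exists. Since $w^*$ is $t_s$-valid and hence $t_{s'-1}$-valid (Observation~\ref{obs:downward}), $w^*$ must fail V3 for $\tau^2_{i,j}$; thus no $y$ makes both $M_i^{w^*}(y), M_j^{w^*}(y)$ definitely accept, and for $y = z$ at least one of them definitely rejects. Suppose $M_j^{w^*}(z)$ rejects (otherwise, a symmetric argument with $M_i$ and a word of length $n$ ending in $0$ produces alternative~(i) of the theorem).

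We now pick $x \in \Sigma^{n-1}1$ and let $w$ agree with $w^*$ except at position $x$, where $w(x) = 1$. Then $0^n \in B_m^v$ for every $v \sqsupseteq w$. For $F_r^w(0^n) = z$ and $M_j^w(z)$ to remain definitely rejecting (via Observation~\ref{obs:partialoracles}(iii)), it suffices that $x$ is neither queried by these two computations, nor by any relevant code-word computation $M_a^{w^*}(y)$ whose code-word position $c(a,b,y)$ is itself a query of $F_r^{w^*}(0^n)$ or $M_j^{w^*}(z)$; this restriction propagates recursively through the chain of code-word dependencies. Validity of $w$ is then automatic: V1, V3, V5, V7 are upward-closed from $w_{s-1}$ (witnesses survive by Observation~\ref{obs:partialoracles}(i)); V4 holds with $|\Sigma^n \cap w| = 1$; V6, V8 are vacuous on level $n$ by injectivity of $t_s$; and V2 is preserved exactly because $x$ does not affect any relevant $M_a^{w^*}(y)$.

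The main obstacle is therefore showing that such an $x$ exists. The padding in Claim~\ref{claim:codewords}(iii) caps each $M_a^{w^*}(y)$'s query lengths by $|c(a,b,y)|/8$, which bounds the recursion depth of the code-word dependency chain by $O(\log n)$. Combined with polynomially many queries per layer and the finite support of $t_s$ restricted to $\tau^1$, the total number of forbidden words at level $n$ is quasi-polynomial in $n$, far smaller than the $2^{n-1}$ candidates in $\Sigma^{n-1}1$ once $n$ is chosen large enough in the first step.
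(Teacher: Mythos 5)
There is a genuine gap, and it is fatal to the overall strategy. Your plan is to locate a rejecting computation relative to an oracle $w^*$ with $w^*\cap\Sigma^n=\emptyset$ (say $M_j^{w^*}(z)$ with $z=F_r^{w^*}(0^n)$ definitely rejects), and then to insert a single word $x\in\Sigma^{n-1}1$ while ``freezing'' that rejection by choosing $x$ outside a small set of forbidden queries. But $M_j$ is a \emph{nondeterministic} machine, and a definite rejection is a statement about \emph{all} of its computation paths; the union of the query sets over all paths can have size $2^{\Theta(n)}$, not polynomial or quasi-polynomial. Concretely, $M_j(z)$ might nondeterministically guess a word of $\Sigma^{n-1}1$, query it, and accept iff the answer is positive: then $M_j^{w^*}(z)$ rejects, yet \emph{every} choice of $x\in\Sigma^{n-1}1$ flips it to acceptance, so the $x$ you need simply does not exist. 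Your counting argument (``polynomially many queries per layer'') is only valid for freezing a single accepting path, which is exactly why the code-word bookkeeping for V2 works (there, Lemma~\ref{lemma:npconp-tasks} lets one freeze the accepting path of whichever of $M_a,M_b$ accepts), but the top-level rejection of $M_j$ has no complementary accepting partner to freeze.

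The paper's proof goes in the opposite direction precisely to avoid this. It assumes the theorem fails for every $t_s$-valid extension, which means: for \emph{every} $\alpha\in\Sigma^{n-1}0$ the computation $M_i(F_r(0^n))$ accepts relative to the oracle $u(\{\alpha\})$ \emph{with $\alpha$ already inserted}, and likewise for every $\beta\in\Sigma^{n-1}1$ and $M_j$. Acceptances are witnessed by single paths with polynomially many queries, so they \emph{can} be frozen; the dependency graphs $G(\cdot)$ and the bipartite spoilage argument (Lemma~\ref{lemma:bipartite}) then produce $\alpha,\beta$ whose oracles can be merged into a single valid $u'$ on which both $M_i(F_r(0^n))$ and $M_j(F_r(0^n))$ accept, contradicting the fact that $\tau^2_{i,j}$ could not be mapped to $0$ at stage $\hat s$. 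In short: the rejection you need in conclusion (i) or (ii) must be found in an oracle that already contains the inserted word, not manufactured by perturbing an oracle in which the level is empty.
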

\begin{proof}
Let us fix  $i,j,r$ throughout the proof of the theorem.

Let $\hat{s}<s$ be the stage that treated $\tau^2_{i,j}$.
Such stage exists, as otherwise $t_{s}(\tau^2_{i,j})$ is undefined.
We have $m=t_{\hat{s}}(\tau^2_{i,j})=t_s(\tau^2_{i,j})$; fix $m$ for the rest of the proof.

We assume that for all $t_{s}$-valid $w\sqsupsetneq w_{s-1}$, neither (i) nor (ii) holds.
From this we will deduce a contradiction, by constructing a suitable oracle $u'\sqsupsetneq w_{\hat{s}-1}$, which is valid with respect to $t' \coloneqq  t_{\hat{s}-1}\cup \{\tau^2_{i,j}\mapsto 0\}$.
Then, by definition, we obtain that $u'$ is a possible $t'$-valid extension of $w_{\hat{s}-1}$ in stage $\hat{s}$, hence $t_{\hat{s}}=t'$, contradicting the hypothesis of this Theorem~\ref{thm:disjnp}.

Let
\begin{equation*} \gamma(n) \coloneqq  \max(p_i(p_r(n))+p_r(n), p_j(p_r(n))+p_r(n)) \end{equation*}
be the polynomial bounding the runtime of $M_i\circ F_r$, $M_j\circ F_r$ with respect to input length $n$ relative to any oracle.
This implies that whenever some partial oracle $u'$ is defined for all words of length $\leq \gamma(n)$, then $M_i^{u'}(F_r^{u'}(x))$, $M_j^{u'}(F_r^{u'}(x))$ are definite for all inputs $x\in\Sigma^n$.
Let us define $n\in\mathbb N^+$ as the smallest $n\in H_m$ such that $w_{s-1}$ does not define any words of length $\geq n$,
and
\begin{equation}\label{eq:expbound}
    2^n > \gamma(n),\quad  2^{n-1} > 8\gamma(n)\tag{$\ast$}
\end{equation}
The first inequality of \eqref{eq:expbound} ensures that no level $n<n'\leq \gamma(n)$ is reserved for any witness language, that is $n'\not\in H_0, H_1, \dots$ (cf. Observation~\ref{obs:leveldefinitions}(iii)). The second inequality ensures that there are enough words of length $n$ such that certain combinatorial arguments work.

For the remaining proof, we additionally fix $n$.
Observe that $\ell(Q)\leq\gamma(n)$ for $Q$ being the set of oracle queries asked by either the computation $M_i(F_r(0^n))$ or the computation $M_j(F_r(0^n))$.
We define $u\sqsupseteq w_{s-1}$ as the $<$-minimal $t_{s}$-valid partial oracle that is defined precisely for all words up to length $<n$. Such oracle exists by Lemma~\ref{lemma:extension}, by extending $w_{s-1}$ bitwise such that it remains $t_{s}$-valid.

For our proof, we are not considering all $t_s$-valid $w$, but rather a sufficient subset of those. 
For $X \subseteq \Sigma^n, |X|\leq 1$, we will define a $t_s$-valid partial oracle $u(X)\sqsupsetneq u$ that is defined for all words of length $\leq\gamma(n)$, and such that $u(X) \cap \Sigma^n = X$, i.e., $u(X)$ and $X$ agree on $\Sigma^n$.
For each $u(X)$, we define a directed graph $G(X)$ on vertex set $\Sigma^{\leq\gamma(n)}$.
Graph $G(X)$ captures the dependencies (caused by the oracle queries) that computations represented by code words of length $> n$ have on other words in the oracle. We use a graph to model these dependencies, because direct dependencies are captured by edges and transitive dependencies are captured by paths in the graph. The set of words the membership $z\in u(X)$ of a single word $z\in\Sigma^{\leq\gamma(n)}$ (transitively) depends on, is exactly $R_{G(X)}(z)$, i.e., all words that are reachable from $z$ in $G(X)$. 

\paragraph{Definition of $u(X)$, $G(X)$:} Let $X\subseteq\Sigma^n$ with $|X|\leq 1$. We construct $u(X)$ and $G(X)=(V,E)$ inductively. Fix the vertex set $V=\Sigma^{\leq\gamma(n)}$.
Basis clauses:
\begin{enumerate}[label=(\arabic*)]
    \item For $z\in \Sigma^{<n}$, let $z\in u(X)$ if and only if $z\in u$.
    \item For $z\in \Sigma^{n}$, let $z\in u(X)$ if and only if $z\in X$.
\end{enumerate}
Inductive clauses: Let $z\in\Sigma^{\leq \gamma(n)}$, $|z|>n$, and $u(X)$ defined for words $<z$.
\begin{enumerate}[label=(\arabic*),resume*]
    \item If $z=c(a,b,x)$ for suitable $a,b,x$ with $0<t_s(\tau^1_{a,b})\leq z$, continue as follows:

        Mark vertex $z$ as \emph{active code word} in $G(X)$.
        Define $z\in u(X)$ if $M_a^{{u(X)}}(x)$ accepts.
        Then, let $(z,q)\in E$ for all oracle queries $q$ on the leftmost accepting paths of $M_a(x)$ and $M_b(x)$ relative to $u(X)$, if any exists. (We will see that precisely one of these machines accepts.)

        (Meaning: If $z$ is a mandatory code word for the coding of $\tau_{a,b}^1$ in combination with V2, we can construct $u(X)$ for $z$ like in Lemma~\ref{lemma:extension}. Later we need to know which words the leftmost accepting paths of $M_a(x)$ resp. $M_b(x)$ relative to $u(X)$ depend on. These dependencies are captured by adding respective edges to the edge set of $G(X)$.)

    \item Otherwise, $z\not\in u(X)$.
\end{enumerate}
Extremal clause: (5) No other edges are in $E$.
\medskip

Note that the conditions of $z$ being marked as active code word in $G(X)$ are independent of $X$, hence we can just say that $z$ is (not) an active code word without specifying the corresponding graph, meaning that $z$ is (not) an active code word in all $G(\cdot)$.

We now make some claims concerning $u(X)$ and $G(X)$.

\begin{claim}\label{claim:welldefined}
    Let $X\subseteq\Sigma^n$, $|X|\leq 1$.
    \begin{enumerate}
        \item $u(X)$ is well-defined, is defined for all words of length $\leq \gamma(n)$, $u(X) \cap \Sigma^n = X$, and $u(X) \sqsupsetneq u\sqsupseteq w_{s-1}$.
        \item $u(X)$ is $t_s$-valid.
        \item If $z=c(a,b,x)$ is an active code word,
            then the following statements are equivalent: (a) $z\in u(X)$, (b) $M_a^{u(X)}(x)$ accepts, (c) $M_b^{u(X)}(x)$ rejects.
        \item $G(X)$ forms a directed acyclic graph (which is not necessarily connected).
            In particular, for every directed edge from vertex $a$ to $b$, it holds that $a>b$.
    \end{enumerate}
\end{claim}
\begin{proof}
    To (i): First note that the definition of $u(X)$ is well-defined, particularly the set of oracle queries of $M_a(x)$ resp. $M_b(x)$ relative to $u(X)$, like in clause (3).
    When $u=u(X)$ is defined up to word $<z=c(a,b,x)$, then $M_a^{{u(X)}}(x)$, $M_b^{{u(X)}}(x)$ are definite by Claim~\ref{claim:codewords}(iv), hence cannot ask queries $\geq z$.

    The remaining assertions immediately follow from definition.
    \medskip

    To (ii): Let $u_0, u_1, u_2, \dots, u_l$ be a length-ordered enumeration of all prefixes of $u(X)$ that are defined for at least all words of length $<n$, that is, $u = u_0 \sqsubsetneq u_1 \sqsubsetneq u_2 \sqsubsetneq \dots \sqsubsetneq u_l = u(X)$.
    We show inductively that every $u_k$, $0\leq k\leq l$, is $t_s$-valid.
    Thus, we obtain that $u_l=u(X)$ is $t_s$-valid. 
    Base case is immediate, as $u_0 = u$ is $t_s$-valid by choice.

    For the inductive case from $u_k$ to $u_{k+1}$, let $z=|u_k|$, and $y=(u(X))(z)$. It holds that $u_ky=u_{k+1}\sqsubseteq u(X)$, i.e., $y$ is the last bit of $u_{k+1}$.
    By induction hypothesis, $u_k$ is $t_s$-valid.
    Note that $y$ is defined by above inductive definition by one of the clauses (2)--(4).

    We need to show that $u_ky=u_{k+1}$ is $t_s$-valid. For this, we will employ Lemma~\ref{lemma:extension} with respect to $t_s$-valid $u_k$.
    We analyze three cases on which clause (2), (3) or (4) defines $y$.
    \medskip

    Clause (3): We have $z=c(a,b,x)$ for suitable $a,b,x$, and case~\ref{lemma:extension}(i) applies.
    Since we have $y=1$ if and only if $M_a^w(x)$ accepts, we obtain that $u_ky$ is $t_s$-valid.\medskip

    Clause (4): Note that $n<|z|<2^n$ by \eqref{eq:expbound}, hence by Observation~\ref{obs:leveldefinitions}(iii) the cases~\ref{lemma:extension}(ii--iv) cannot apply.
    Furthermore, case~\ref{lemma:extension}(i) cannot apply either. Otherwise we have $z= c(a,b,x)$ for suitable $a,b,x$ and $0<t_{s}(\tau^1_{a,b})\leq z$. However, this case would be handled in clause (3).
    In total, only case~\ref{lemma:extension}(v) applies, hence $u_ky$ is $t_s$-valid.\medskip

    Clause (2):
    The case~\ref{lemma:extension}(i) cannot apply, since otherwise we have $z= c(\cdot,\cdot,\cdot)$ but by clause (2),  $|z|=n\in H_m$, contradicting Claim~\ref{claim:codewords}(i).
    The cases~\ref{lemma:extension}(iii--iv) cannot apply, since otherwise $t_s(\tau)=m'>0$ for some $\tau\neq \tau^2_{i,j}$ and $|z|=n\in H_{m'}$. Since $t_s$ is injective on its support, $m'=t_s(\tau)\neq t_s(\tau^2_{i,j})=m$. But then $n\in H_m, H_{m'}$, which contradicts Observation~\ref{obs:leveldefinitions}(i) that $H_m, H_{m'}$ are disjoint.
    Only two cases~\ref{lemma:extension}(ii) and (v) remain. If case~\ref{lemma:extension}(v) applies, then $u_{k+1}$ is $t_s$-valid and we are done.

    Therefore, assume that~\ref{lemma:extension}(ii) applies.
    From the conditions of the case follows that $u_k\cap\Sigma^n\neq\emptyset$.
    This means that $y=0$ as otherwise $|u_k1\cap\Sigma^n|=|u_k\cap\Sigma^n|+1>1$ and by previous Claim~\ref{claim:welldefined}(i) it holds that $|X|>1$, contradicting the condition of this Claim~\ref{claim:welldefined-disjconp} that $|X|\leq 1$.
    As $y=0$, case~\ref{lemma:extension}(ii) asserts that $u_k0=u_{k+1}$ is $t_s$-valid.
    \medskip

    To (iii):
    Equivalence of (a) and (b) immediately follows from definition. We prove that (b) and (c) are equivalent.
    From the previous Claim~\ref{claim:welldefined}(ii), it follows that $u(X)$ is $t_s$-valid, hence $t_{s-1}$-valid.
    Also, previous Claim~\ref{claim:welldefined}(i) shows $u(X)\sqsupseteq w_{s-1}$.
    Conditions of Lemma~\ref{lemma:npconp-tasks} (invoked with regard to $t_{s-1}$-valid $u(X)\sqsupseteq w_{s-1}$) apply: We have $0<t_{s-1}(\tau^1_{a,b})=t_{s}(\tau^1_{a,b})\leq z< |u(X)|$ by definition of clause (3), and $u(X)\sqsupseteq w_{s-1}$.
    Hence $M_a^{u(X)}(x)$ accepts if and only if $M_b^{u(X)}(x)$ rejects.
    \medskip

    To (iv): Edges are only added in clause (3).
    Here Claim~\ref{claim:codewords}(iii) asserts that $q<z$ for edges $(z,q)$.
\end{proof}

Since we assume that neither (i) nor (ii) of the Theorem~\ref{thm:disjnp} holds for any $t_{s}$-valid $w\sqsupsetneq w_{s-1}$, we have:
\begin{itemize}[noitemsep]
    \item Whenever $\alpha\in\Sigma^n$ ends with $0$, then $M_i(F_r(0^n))$ accepts relative to $u(\{\alpha\})$, 
    \item Whenever $\beta\in\Sigma^n$ ends with $1$, then $M_j(F_r(0^n))$ accepts relative to $u(\{\beta\})$.
\end{itemize}

For any $\alpha\in\Sigma^{n-1}0$, define $Q_\alpha$ as the set of oracle queries of the leftmost accepting path of $M_i(F_r(0^n))$ relative to $u(\{\alpha\})$.
Define the set $Q_\beta$ analogously as the set of oracle queries of the leftmost accepting path of $M_j(F_r(0^n))$ relative to $u(\{\beta\})$.
Let $Q^+_\alpha\coloneqq R_{G({\{\alpha\}})}(Q_\alpha)$ be $Q_\alpha$'s closure under successor in $G({\{\alpha\}})$. Define $Q^+_\beta$ respectively.

As was outlined in the beginning, we will find a partial oracle $u'$ such that both $M_i(F_r(0^n))$ and $M_j(F_r(0^n))$ accept relative to $u'$.
Remember that $t'=t_{\hat{s}-1}\cup \{\tau^2_{i,j}\mapsto 0\}$.
If we show that $u'\sqsupsetneq w_{\hat{s}-1}$ is $t'$-valid, then $\tau^2_{i,j,r}$ was removed in stage $\hat{s}$, contradicting the assumption that it was treated in stage $s>\hat{s}$.

In order to freeze the accepting paths, we want to freeze their respective oracle queries.
For this, we need to find $\alpha\in\Sigma^{n-1}0$, $\beta\in\Sigma^{n-1}1$ such that $u(\{\alpha\})$ and $u(\{\beta\})$ do not ``spoil'' each other, such that the two oracles can be ``merged'' into desired $u'$.
We say that $\alpha\in\Sigma^{n-1}0$ \emph{spoils} $\beta\in\Sigma^{n-1}1$ and vice versa when $u(\{\alpha\})$ and $u(\{\beta\})$ do not agree on $Q^+_\alpha\cap Q^+_\beta$. (\emph{Spoilage} is a symmetric relation.)

There exists a pair $\alpha\in\Sigma^{n-1}0$, $\beta\in\Sigma^{n-1}1$ such that $\alpha$ and $\beta$ do not spoil each other; we postpone the proof to the end.
For now, fix these words $\alpha$ and $\beta$.

For brevity, we write $\mathcal{V}$ as the set of $t_{\hat{s}-1}$-valid partial oracles.
We now show that there is a suitable oracle $u'\in \mathcal{V}$ such that both $M_i(F_r(x))$ and $M_j(F_r(x))$ definitely accept relative to $u'$, by “merging” $u(\{\alpha\})$ and $u(\{\beta\})$ into $u'$.

\begin{claim}\label{claim:finalextension-disjnp}
    There exists $u'\sqsupsetneq u$ with $u'\in\mathcal V$ (that is, $t_{\hat{s}-1}$-valid), and is defined for all words of length $\leq \gamma(n)$, and such that
    \begin{enumerate}[noitemsep]
        \item $u'$ agrees with $u(\{\alpha\})$ on $Q^+_\alpha$, and 
        \item $u'$ agrees with $u(\{\beta\})$ on $Q^+_\beta$.
    \end{enumerate}
    With Observation~\ref{obs:partialoracles}(ii) and (iii), this means that both $M_i(F_r(0^n))$ and $M_j(F_r(0^n))$ definitely accept relative to $u'$.
\end{claim}
\begin{proof}
    We iteratively extend $u_0\coloneqq u$ bitwise using Lemma~\ref{lemma:extension} to $u_1, u_2, \dots$ such that it remains in $\mathcal{V}$ and at relevant positions agrees with $u(\{\alpha\})$ resp. $u(\{\beta\})$ such that (i) and (ii) are satisfied, until $u_k$ is sufficiently long. We proceed inductively, and maintain the induction statement that 
    \begin{itemize}[nosep]
        \item for every $i>0$, $u_k=u_{k-1}y$ for some $y\in\{0,1\}$, i.e., every $u_k$ extends the previous $u_{k+1}$ by a single bit,
        \item $u_k\in\mathcal V$,
        \item for all words $q\in Q^+_\alpha$ defined by $u_k$, it holds that $q\in u_k \iff q\in u(\{\alpha\})$,
        \item for all words $q\in Q^+_\beta$ defined by $u_k$, it holds that $q\in u_k \iff q\in u(\{\beta\})$,
    \end{itemize}
    For some sufficiently large $l$, $u'\coloneqq u_l$ is defined for all words of length $\leq\gamma(n)$. Also, every word $q\in Q^+_\alpha\cup Q^+_\beta$ has length $\leq \gamma(n)$, hence is defined by $u_l$.
    Invoking the induction statement, $u'$ agrees with $u(\{\alpha\})$ on $Q^+_\alpha$, and $u'$ agrees with $u(\{\beta\})$ on $Q^+_\beta$, as desired.

    Base case with $u_0=u$ is immediate since with Observation~\ref{obs:downward}, we have $u\in\mathcal{V}$. Also by construction $u$ agrees with $u(\{\alpha\})$ and agrees with $u(\{\beta\})$ on all words of length $<n$, that are, all words defined by $u$.

    For the inductive case from $u_k$ to $u_{k+1}$, let $z=|u_k|$. By induction hypothesis $u_k\in\mathcal V$. It is sufficient to prove that there is a $y\in\{0,1\}$ such that $u_{k+1}\coloneqq u_ky$ is in $\mathcal V$, and satisfies the following two weaker properties (noting that, when moving from $u_k$ to $u_{k+1}$, the only word newly defined is $z$, and $z\in u_{k+1}$ if and only if $y=1$):
    \begin{gather*}
        \text{P1: }z\in Q_\alpha^+ \implies y=(u(\{\alpha\}))(z), \text{ and P2: } z\in Q_\beta^+ \implies y=(u(\{\beta\}))(z).
    \end{gather*}
    We perform this extension by one bit using Lemma~\ref{lemma:extension} with regard to $t_{\hat{s}-1}$-validity. Observe that cases (ii--iv) never apply.
    Assume otherwise, then there is some task $\tau\in\dom(t_{\hat{s}-1})$ with $|z|\in H_{m'}$ for $m'=t_{\hat{s}-1}(\tau)$.
    Note that $\tau\neq \tau^2_{i,j}$ since $\hat{s}$ is the first stage when $\tau^2_{i,j}$ is treated.
    Also, $|z|=n$ as otherwise $|z|\not\in H_{m'}$, by Observation~\ref{obs:leveldefinitions}(iii) and \eqref{eq:expbound};
    Since $t_{s}$ is an extension of $t_{s'}$ and is injective on its support, we know that $m'=t_{\hat{s}-1}(\tau)=t_{\hat{s}}(\tau)\neq t_{\hat{s}}(\tau^2_{i,j})=m$.
    But then $n\in H_m, H_{m'}$, which contradicts Observation~\ref{obs:leveldefinitions}(i) that $H_{m'}, H_m$ are disjoint.

    Now, as cases~\ref{lemma:extension}(ii--iv) never apply, the statement of the Lemma can be simplified as follows:
    \begin{equation}\label{eq:finalextension-helper}
        \parbox{14cm}{\setlength{\parskip}{1ex}If $z=c(a,b,x)$ and $0<t_{\hat{s}-1}(\tau^1_{a,b})\leq z$, then $u_k1\in\mathcal V$ when $M_a^{u_k}(x)$ accepts, and $u_k0\in\mathcal V$ when $M_a^{u_k}(x)$ rejects.

            Otherwise, $u_k0, u_k1\in\mathcal V$.}\tag{${\ast}{\ast}$}
    \end{equation}
    We now analyze three cases:

    Case 1: $z\not\in Q_\alpha^+\cup Q_\beta^+$. Here, \eqref{eq:finalextension-helper} asserts that for some suitable choice of $y$, $u_{k+1}=u_ky\in \mathcal V$. Conditions P1 and P2 are vacuously true.\medskip

    Case 2: $z\in Q_\alpha^+$. Let $y\coloneqq (u(\{\alpha\}))(z)$, thus P1 holds.
    Also, we know that P2 holds: either it is vacuously true, or,  when $z\in Q_\alpha^+\cap Q_\beta^+$ then, by absence of spoilage between $\alpha$ and $\beta$, $u(\{\alpha\})$ and $u(\{\beta\})$ agree on $z$.
    We can assume without loss of generality that $z=c(a,b,x)$ for suitable $a,b,x$, and $0<t_{\hat{s}-1}(\tau^1_{a,b})\leq z$. Otherwise \eqref{eq:finalextension-helper} asserts that $u_{k+1}=u_ky\in \mathcal V$ and we are done.

    Under this assumption, $z$ is an active code word. 
    We subdivide the remaining proof into subcases $y=1$ (i.e., $z\in u(\{\alpha\})$) and $y=0$ (i.e., $z\not\in u(\{\alpha\})$).
    In the first subcase, we have:
    \begin{equation*}
        y=1 \overset{\mathclap{\text{\ref{claim:welldefined}(iii)}}}\implies M_a^{{u(\{\alpha\})}}(x) \text{ acc.} \overset{\mathclap{\text{\ref{obs:partialoracles}(iii)}}}\implies M_a^{u_k}(x) \text{ acc.} \overset{\mathclap{\eqref{eq:finalextension-helper}}}\implies u_{k+1}=u_ky\in\mathcal V.
    \end{equation*}
    The first resp. last implication follows from Claim~\ref{claim:welldefined}(iii) resp. \eqref{eq:finalextension-helper}. It remains to prove the second one.
    Let $Q$ be the set of all oracle queries of the leftmost accepting path of $M_a^{\smash{u(\{\alpha\})}}(x)$. We prove that $u_k$ and $u(\{\alpha\})$ agree on $Q$:
    Let $q\in Q$. By definition, $q$ is a direct successor of $z\in Q^+_\alpha$ in $G({\{\alpha\}}$, hence also $q\in Q^+_\alpha$. Also, $q<z=|u_k|$, thus $q$ is defined by $u_k$.
    By induction hypothesis, it holds that $q\in u_k$ if and only if $q\in u(\{\alpha\})$, as desired.

    We have seen that $u_k$ and $u(\{\alpha\})$ agree on $Q$, thus Observation~\ref{obs:partialoracles}(iii) applies, and $M_a^{u_k}(x)$ accepts, as desired.

    The other subcase can be proven similarly:
    \begin{equation*}
        y=0 \overset{\mathclap{\text{\ref{claim:welldefined}(iii)}}}\implies M_b^{u(\{\alpha\})}(x) \text{ acc.}\overset{\mathclap{\text{\ref{obs:partialoracles}(iii)}}}\implies  M_b^{u_k}(x) \text{ acc.} \overset{\mathclap{\eqref{eq:finalextension-helper}}}\implies u_{k+1}=u_ky\in\mathcal V.
    \end{equation*}

    Case 3: $z\in Q_\beta^+$. Let $y\coloneqq (u(\{\beta\}))(z)$, and proceed similarly to Case 2.
\end{proof}

Remember that $t'=t_{\hat{s}-1}\cup\{\tau^2_{i,j} \mapsto 0\}$.
Oracle $u'$ from Claim~\ref{claim:finalextension-disjnp} is $t'$-valid:
As $u'\in\mathcal V$, it is $t_{\hat{s}-1}$-valid, hence only V3 with respect to $\tau^2_{i,j}$ is at risk.
However, with $x=F^{u'}_r(0^n)$, both $M_i^{u'}(x)$ and $M_j^{u'}(x)$ definitely accept, as stated by Claim~\ref{claim:finalextension-disjnp}.

We obtain the following situation: 
In stage $\hat{s}$ treating task $\tau^2_{i,j}$, oracle $u'$ is a possible $t'$-valid extension of $w_{\hat{s}-1}$, since it is $t'$-valid and $u'\sqsupsetneq w_{\hat{s}-1}$.
Thus, by definition of the task, we have that $t_{\hat{s}}=t'$, and thus $t_{s}(\tau^2_{i,j}) = t'(\tau^2_{i,j}) = 0$, contradicting the hypothesis of this Theorem~\ref{thm:disjnp}.

\begin{center}
    $\ast$
\end{center}

It remains to prove that there exists a pair $\alpha\in\Sigma^{n-1}0$, $\beta\in\Sigma^{n-1}1$ such that $\alpha$ and $\beta$ do not spoil each other.
For this, we set up a directed bipartite spoilage graph $S$ with left vertex part $A=\Sigma^{n-1}0$ and right vertex part $B=\Sigma^{n-1}1$.
We then give a definition of the edge set that captures (a stronger notion of) spoilage. That is, $\alpha$ and $\beta$ are connected in $S$ by an edge whenever $\alpha$ and $\beta$ spoil each other (meaning that $u(\{\alpha\})$ and $u(\{\beta\})$ do not agree on $Q^+_\alpha\cap Q^+_\beta$).
Formulated contrapositively, when $(\alpha, \beta), (\beta, \alpha)\not\in E(S)$, then $\alpha$ and $\beta$ do not spoil each other, as desired.

By \eqref{eq:expbound} both parts of the graph have $2^{n-1}>8\gamma(n)$ vertices.
We will bound the out-degree of all vertices by $\leq 4\gamma(n)$.
Thus, combinatorial Lemma~\ref{lemma:bipartite} applies, and there exist $\alpha\in \Sigma^{n-1}0$, $\beta\in\Sigma^{n-1}1$ such that neither $(\alpha,\beta)\in E(S)$ nor $(\beta,\alpha)\in E(S)$, as desired.

Define the edge set as follows: for every pair $\alpha\in\Sigma^{n-1}0$, $\beta\in\Sigma^{n-1}1$,
\begin{gather*}
    (\alpha, \beta) \in E(S) \iff \beta\in R_{G(\emptyset)}(Q^+_\alpha)\cap B,\\
    (\beta, \alpha) \in E(S) \iff \alpha\in R_{G(\emptyset)}(Q^+_\beta)\cap A
\end{gather*}
It remains to show two properties: first, the correctness that vertices without edges between them are in fact not spoiling each other, and second, the stated out-degree upper bound.

\begin{claim}
    Let $\alpha\in\Sigma^{n-1}0$, $\beta\in\Sigma^{n-1}1$.
    If $(\alpha, \beta)\not\in E(S), (\beta, \alpha)\not\in E(S)$, then  $\alpha$ and $\beta$ do not spoil each other:
    $u(\{\alpha\})$ and $u(\{\beta\})$ agree on $Q^+_\alpha\cap Q^+_\beta$.
\end{claim}
\begin{proof}
    By definition of $E(S)$, we have $\alpha\not\in R_{G(\emptyset)}(Q^+_\beta)$, and $\beta\not\in R_{G(\emptyset)}(Q^+_\alpha)$.
    Observe that we have $\alpha,\beta\not\in R_{G(\emptyset)}(Q^+_\alpha)\cap R_{G(\emptyset)}(Q^+_\beta)$.
    Denote with $G'$ the subgraph of $G(\emptyset)$ induced by the vertex set $R_{G(\emptyset)}(Q^+_\alpha)\cap R_{G(\emptyset)}(Q^+_\beta)$.
    We want to show that whenever $z\in Q^+_\alpha\cap Q^+_\beta\subseteq V(G')$, then $z\in u(\{\alpha\})$ if and only if $z\in u(\{\beta\})$.
    It suffices to show the following stronger statement. Let $\xi\in\{\alpha, \beta\}$, then:
    \begin{equation}\label{eq:spoilage-correctness-helper}
        \text{For all words $z\in V(G')$, it holds that}\quad z\in u(\emptyset) \iff z\in u(\{\xi\})\tag{${\ast}{\ast}{\ast}$}
    \end{equation}
    Then, for any $z\in Q^+_\alpha\cap Q^+_\beta$, above equivalences hold because $z\in V(G')$, and thus $z\in u(\{\alpha\})\iff z\in u(\emptyset)\iff z\in u(\{\beta\})$ as desired.

    Assume the equivalence does not hold, and choose $\xi\in\{\alpha, \beta\}$ and vertex $z\in V(G')$ with minimal height in $G'$ for which the equivalence \eqref{eq:spoilage-correctness-helper} fails.
    We analyze the three cases for which the membership of $z$ to the oracles can be defined.

    Suppose the membership of $z$ to $u(\emptyset), u(\{\xi\})$ is defined by clause (2), i.e., $|z|=n$. Then  $z\neq \alpha, \beta$ by above assertion that $\alpha, \beta\not\in V(G')$.
    This implies $z\not\in u(\emptyset),  u(\{\xi\})$, and \eqref{eq:spoilage-correctness-helper} proven.

    Suppose the membership of $z$ to $u(\emptyset), u(\{\xi\})$ is defined by clause (4), i.e., $z$ not an active code word. Then immediately by definition $z\not\in u(\emptyset), u(\{\xi\})$, and \eqref{eq:spoilage-correctness-helper} proven.

    For the remaining case (3), $z=c(a,b,x)$ is an active code word.
    We show that for all $z' \in N^+_{G(\emptyset)}(z)$, the equivalence from \eqref{eq:spoilage-correctness-helper} holds for $z'$.
    We have $z\in R_{G(\emptyset)}(Q^+_\alpha)$ by definition, hence also $z'\in R_{G(\emptyset)}(Q^+_\alpha)$. Similarly, we also have $z'\in R_{G(\emptyset)}(Q^+_\beta)$; hence we have $z'\in R_{G(\emptyset)}(Q^+_\alpha) \cap R_{G(\emptyset)}(Q^+_\beta) = V(G')$.
    Also, $z'$ has smaller height than $z$ in $G'$, hence by choice of $z$, the statement \eqref{eq:spoilage-correctness-helper} holds for $z'$, and $z'\in u(\emptyset)$ if and only if $z'\in u(\{\xi\})$.
    In total, $u(\emptyset)$ and $u(\{\xi\})$ agree on $N^+_{G(\emptyset)}(z)$.

    Now say without loss of generality that the equivalence \eqref{eq:spoilage-correctness-helper} fails because $z\not\in u(\emptyset)$ but $z\in u(\{\xi\})$. (The other case holds symmetric – replace $b$ with $a$.)
    Remember that $z$ is an active code word in $G(\emptyset)$ and $G(\{\xi\})$, thus by Claim~\ref{claim:welldefined}(iii), this means that $M_b^{{u(\emptyset)}}(x)$ accepts and $M_b^{{u(\{\xi\})}}(x)$ rejects.

    We have seen that $u(\emptyset)$ and $u(\{\xi\})$ agree on $N^+_{G(\emptyset)}(z)$, which by clause (3) contains the set of oracle queries of the leftmost accepting path of $M_b^{u(\emptyset)}(x)$. Invoking Observation~\ref{obs:partialoracles}(iii) we obtain that also $M_b^{\smash{u(\{\xi\})}}(x)$ accepts.
    This contradicts the previous assertion that $M_b^{\smash{u(\{\xi\})}}(x)$ rejects.
\end{proof}

For the out-degree bound, consider the following claim.

\begin{claim}\label{claim:frozenwords-len}
    Let $X\subseteq\Sigma^n$, $|X|\leq 1$, $Q\subseteq\Sigma^{\leq \gamma(n)}$.

    Then $\ell(R_{G({X})}(Q)) \leq 2\cdot \ell(Q)$.
\end{claim}
\begin{proof}
    Consider the graph $G'$ that is formed by taking the subgraph from $G({X})$ induced by $R_{G({X})}(Q)$.
    We show inductively on the height of $z$ in $G'$ that
    \begin{equation*}
        \ell(R_{G'}(z))\leq 2|z| \text{ for all $z\in V(G')$.}
    \end{equation*}
    Then,
    \begin{equation*}
        \ell(R_{G({X})}(Q)) \leq \sum_{q\in Q} \ell(R_{G'}(q)) \leq \sum_{q\in Q} 2|q| = 2\cdot \ell(Q),
    \end{equation*}
    as desired.

    The base case with height $0$ is immediate, since then $|R_{G'}(z)|=1$.
    For the inductive case, we know that $\mathrm{outdeg}_{G'}(z)>0$, hence by construction of $G({X})$, there exist $x,a,b$ such that $z=c(a, b, x)$,
    and the word $z$ is an active code word.
    By induction hypothesis, $\ell(R_{G'}(q))\leq 2|q|$ for all $q\in N^+_{G'}(z)$ since these have lower height.

    Without loss of generality, $z\in u(X)$ and $M_a^{{u(X)}}(x)$ accepts.
    (In the other case, $z\not\in u(X)$ and $M_b^{{u(X)}}(x)$ accepts.)
    Note that by Claim~\ref{claim:welldefined}(iii), we know that $M_b^{{u(X)}}(x)$ rejects.
    Let $Q_X$ be the set of oracle queries on the leftmost accepting path of $M_a^{{u(X)}}(x)$, that is, precisely as selected in clause (3). By construction, $N^+_{G'}(z) = Q_X$, as only $M_a^{{u(X)}}(x)$ accepts.
    By Claim~\ref{claim:codewords}(iii), $\ell(Q_X)\leq |z|/2$. (We use a weaker bound than actually claimed.)
    Hence $\ell(N^+_{G'}(z))= \ell(Q_X)\leq |z|/2$, and thus,
    \begin{align*}
        \ell(R_{G'}(z)) &= |z| + \sum_{q\in N^+_{G'}(z)} \ell(R_{G'}(q)) \leq |z| + \sum_{q\in N^+_{G'}(z)} 2|q|\quad\text{(by induction hypothesis)}\\
                        &= |z| + 2\ell(N^+_{G'}(z)) \leq |z|+|z| = 2|z|.\qedhere
    \end{align*}
\end{proof}

By definition, the successors of some vertex $\xi$ in $S$ are $ N^+_S(\xi) \subseteq R_{G(\emptyset)}(Q^+_\xi)$,
hence
\[ |N^+_S(\xi)| \leq|R_{G(\emptyset)}(Q^+_\xi)|\leq \ell(R_{G(\emptyset)}(Q^+_\xi)). \]
(Remember that $Q_\xi$ was defined as the queries on the leftmost accepting path of $M_i(F_r(0^n))$ relative to $u(\{\xi\})$ when $\xi$ ends with $0$, and the queries on the leftmost accepting path of $M_j(F_r(0^n))$ otherwise.)
Now, with initial observation $\ell(Q_\xi) \leq \gamma(n)$ at the beginning of this proof,
\[ |N^+_S(\xi)| \leq \ell(R_{G(\emptyset)}(Q^+_\xi)) = \ell(R_{G(\emptyset)}(R_{G(\{\xi\})}(Q_\xi))) \leq 2\ell(R_{G(\{\xi\})}(Q_\xi)) \leq 2(2\ell(Q_\xi)) \leq 4\gamma(n), \]
as was desired.

This completes the proof of Theorem~\ref{thm:disjnp}.
\end{proof}

For the tasks $\tau^3_{i,r}$, we only need to slightly modify the previous proof.

\begin{theorem}\label{thm:up}
    Let $s\in \mathbb N^+$, $(w_0, t_0), \dots, (w_{s-1}, t_{s-1})$ defined.
    Consider task $\tau^3_{i,r}$. 

    Suppose that $t_s=t_{s-1}$, $t_s(\tau^3_{i})=m>0$.
    Then there exists a $t_{s}$-valid $w\sqsupsetneq w_{s-1}$ and $n\in\mathbb N$ such that one of the following holds:
    \begin{enumerate}[noitemsep]
        \item $0^n\in C_m^v$ for all $v\sqsupseteq w$ and $M_i^{w}(F_r^{w}(0^n))$ definitely rejects.
        \item $0^n\not\in C_m^v$ for all $v\sqsupseteq w$ and $M_i^{w}(F_r^{w}(0^n))$ definitely accepts.
    \end{enumerate}
\end{theorem}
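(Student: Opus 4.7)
The plan is to follow the proof of Theorem~\ref{thm:disjnp} closely, with the adjustments dictated by the UP setting. Assume for contradiction that for every $t_s$-valid $w\sqsupsetneq w_{s-1}$ neither (i) nor (ii) holds, and let $\hat s<s$ be the stage that treated $\tau^3_i$. Fix a sufficiently large level $n\in H_m$ with $w_{s-1}$ deciding no word of length $\geq n$ and satisfying $2^n>\gamma(n)$, $2^{n-1}>8\gamma(n)$, where $\gamma(n)$ is a polynomial upper bound on the runtime of $M_i\circ F_r$. For every $X\subseteq\Sigma^n$ with $|X|\leq 1$, reuse the partial oracle $u(X)$ and dependency graph $G(X)$ defined in Theorem~\ref{thm:disjnp}; Claim~\ref{claim:welldefined} goes through unchanged and yields $t_s$-validity of $u(X)$. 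Since $u(\emptyset)$ forces $0^n\notin C_m^v$ for every $v\sqsupseteq u(\emptyset)$, the negation of (ii) implies that $M_i(F_r(0^n))$ definitely rejects relative to $u(\emptyset)$; dually, for every $\alpha\in\Sigma^n$ the negation of (i) implies that $M_i(F_r(0^n))$ definitely accepts relative to $u(\{\alpha\})$.

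For every $\alpha\in\Sigma^n$ write $P_\alpha$ for the nondeterministic-choice string of the leftmost accepting path of $M_i\circ F_r$ on $0^n$ relative to $u(\{\alpha\})$, $Q_\alpha$ for its oracle queries, and set $Q^+_\alpha\coloneqq R_{G(\{\alpha\})}(Q_\alpha)$. The crucial new observation is that $\alpha\in Q^+_\alpha$ for every $\alpha$: otherwise, an induction analogous to equation~\eqref{eq:spoilage-correctness-helper} shows that $u(\{\alpha\})$ and $u(\emptyset)$ agree on all of $Q^+_\alpha$, so $P_\alpha$ would also be accepting relative to $u(\emptyset)$, contradicting the rejection there.

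Next, set up a (non-bipartite) directed spoilage graph $S$ on vertex set $\Sigma^n$ with edges $(\alpha_1,\alpha_2)\in E(S)\iff \alpha_2\in R_{G(\emptyset)}(Q^+_{\alpha_1})$. Claim~\ref{claim:frozenwords-len} bounds the out-degree by $4\gamma(n)$, and a straightforward counting argument (the obvious non-bipartite analogue of Lemma~\ref{lemma:bipartite}, which holds since $2^n>16\gamma(n)$) produces two distinct $\alpha_1,\alpha_2\in\Sigma^n$ with no edges between them in either direction. The spoilage-correctness proof of Theorem~\ref{thm:disjnp} then shows that $u(\{\alpha_1\})$ and $u(\{\alpha_2\})$ agree on $Q^+_{\alpha_1}\cap Q^+_{\alpha_2}$. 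Combined with $\alpha_1\in Q^+_{\alpha_1}$ and $u(\{\alpha_1\})(\alpha_1)=1\ne 0=u(\{\alpha_2\})(\alpha_1)$, this forces $\alpha_1\notin Q^+_{\alpha_2}$, and symmetrically $\alpha_2\notin Q^+_{\alpha_1}$. These asymmetries in turn yield $P_{\alpha_1}\ne P_{\alpha_2}$, since the chain in $G(\{\alpha_1\})$ witnessing $\alpha_1\in Q^+_{\alpha_1}$ forces $P_{\alpha_1}$ to make (transitively) a query at $\alpha_1$ that $P_{\alpha_2}$'s dependency-chain avoids.

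The final step reuses Claim~\ref{claim:finalextension-disjnp}'s merging construction almost verbatim: since $\tau^3_i\notin\dom(t_{\hat s-1})$, level $n\in H_m$ imposes no constraint from $t_{\hat s-1}$ by injectivity of $t_{\hat s-1}$ on its support together with disjointness of the $H_{m'}$'s (Observation~\ref{obs:leveldefinitions}(i)). This yields a $t_{\hat s-1}$-valid $u'\sqsupsetneq w_{\hat s-1}$ defined on $\Sigma^{\leq\gamma(n)}$ that agrees with $u(\{\alpha_1\})$ on $Q^+_{\alpha_1}$ and with $u(\{\alpha_2\})$ on $Q^+_{\alpha_2}$. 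Consequently $M_i^{u'}(F_r^{u'}(0^n))$ definitely accepts on both $P_{\alpha_1}$ and $P_{\alpha_2}$, i.e.\ on two distinct nondeterministic paths; $u'$ then witnesses $t'$-validity of an extension of $w_{\hat s-1}$ for $t'=t_{\hat s-1}\cup\{\tau^3_i\mapsto 0\}$ via V5, forcing $t_{\hat s}(\tau^3_i)=0$ and contradicting $t_s(\tau^3_i)=m>0$. The main obstacle relative to Theorem~\ref{thm:disjnp} is precisely the argument that $P_{\alpha_1}\ne P_{\alpha_2}$: there, the two accepting paths belong to different machines $M_i,M_j$ and are distinct by fiat, whereas here both are paths of $M_i$, so the new observation $\alpha\in Q^+_\alpha$ together with the non-spoilage asymmetries must be leveraged to rule out that the two leftmost accepting paths coincide as nondeterministic-choice strings.
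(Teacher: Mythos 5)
Your proposal follows the paper's proof of Theorem~\ref{thm:up} very closely in its overall architecture: both reduce to the machinery of Theorem~\ref{thm:disjnp} with $i=j$, both observe that the level-$n$ word must be (at least transitively) relevant to the accepting path because $M_i(F_r(0^n))$ rejects relative to $u(\emptyset)$, both pick two non-spoiling words of length $n$, merge via Claim~\ref{claim:finalextension-disjnp}, and conclude via V1--V8 (here V5) that stage $\hat s$ would have set $t_{\hat s}(\tau^3_i)=0$. Your two deviations from the paper are harmless: using a non-bipartite spoilage graph on all of $\Sigma^n$ (the paper keeps the $\Sigma^{n-1}0$ versus $\Sigma^{n-1}1$ split only so that Lemma~\ref{lemma:bipartite} applies verbatim), and establishing only $\alpha\in Q^+_\alpha$ rather than the paper's $\alpha\in Q_\alpha$ (your induction for this is in fact the more careful one, since $u(\emptyset)$ and $u(\{\alpha\})$ need not agree on code words in $Q_\alpha$ unless one argues through the transitive closure).

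However, the step you yourself single out as the crux --- that $P_{\alpha_1}\neq P_{\alpha_2}$ --- is not closed by the argument you give. From $P_{\alpha_1}=P_{\alpha_2}$ one does get $Q_{\alpha_1}=Q_{\alpha_2}$ (both paths are accepting relative to $u'$ with query sets $Q_{\alpha_1}$ resp.\ $Q_{\alpha_2}$ by Observation~\ref{obs:partialoracles}(iii), and one choice string relative to one oracle has one query set); call this common set $Q$. But $\alpha_1\in R_{G(\{\alpha_1\})}(Q)$ together with $\alpha_1\notin R_{G(\{\alpha_2\})}(Q)$ is not by itself contradictory: these are reachability statements in two \emph{different} graphs, so ``the chain forces $P_{\alpha_1}$ to transitively query $\alpha_1$ while $P_{\alpha_2}$'s dependency-chain avoids it'' does not rule out that the two choice strings coincide. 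The paper avoids this by asserting the stronger $\alpha\in Q_\alpha$ (a \emph{direct} query on the path), whence $\alpha_1\in Q=Q_{\alpha_2}\subseteq Q^+_{\alpha_1}\cap Q^+_{\alpha_2}$ immediately contradicts non-spoilage, as $u(\{\alpha_1\})$ and $u(\{\alpha_2\})$ disagree at $\alpha_1$. With your weaker $\alpha\in Q^+_\alpha$ the repair is to invoke the full strength of \eqref{eq:spoilage-correctness-helper}: $Q\subseteq Q^+_{\alpha_1}\cap Q^+_{\alpha_2}\subseteq R_{G(\emptyset)}(Q^+_{\alpha_1})\cap R_{G(\emptyset)}(Q^+_{\alpha_2})$, and on that set $u(\{\alpha_1\})$ agrees with $u(\emptyset)$; hence if the paths coincided, the common path would accept relative to $u(\emptyset)$ by Observation~\ref{obs:partialoracles}(iii), contradicting the rejection there. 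You have every piece needed for this, but the inference as written is a non sequitur and must be replaced.
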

\begin{proof}
Proceed just as in the proof of Theorem~\ref{thm:disjnp}, assuming $i=j$.
Aiming for a contradiction, we assume that for all $t_s$-valid $w\sqsupsetneq w_{s-1}$, (i) and (ii) do not hold.
We thus have:
\begin{itemize}[noitemsep]
    \item $M_i(F_r(0^n))$ rejects relative to $u(\emptyset)$, 
    \item Whenever $\alpha\in\Sigma^n$ ends with $0$, $M_i(F_r(0^n))$ accepts relative to $u(\{\alpha\})$, 
    \item Whenever $\beta\in\Sigma^n$ ends with $1$, $M_i(F_r(0^n))$ accepts relative to $u(\{\beta\})$.
\end{itemize}

Similar to previous proof, let $Q_\alpha$ (resp., $Q_\beta$) be, respectively, the oracle queries of leftmost accepting path of $M_i(F_r(0^n))$ relative to $u(\{\alpha\})$ (resp., relative to $u(\{\beta\})$), and let $Q^+_\alpha\coloneqq R_{G({\{\alpha\}})}(Q_\alpha)$ be $Q_\alpha$'s closure under successor in $G({\{\alpha\}})$. Define $Q^+_\beta$ respectively.
Observe that $\alpha\in Q_\alpha$, otherwise $u(\emptyset)$ and $u(\{\alpha\})$ agree on $Q_\alpha$, Observation~\ref{obs:partialoracles}(iii) applies, and $M_i(F_r(0^n))$ rejects relative to $u(\emptyset)$, contradicting the assumption.
Similarly, $\beta\in Q_\beta$.

Proceed, as in the proof of Theorem~\ref{thm:disjnp}, fixing $\alpha\in\Sigma^{n-1}0$, $\beta\in\Sigma^{n-1}1$ that do not spoil each other. That is, $u(\{\alpha\})$ and $u(\{\beta\})$ agree on $Q^+_\alpha \cap Q^+_\beta$.
By same reasoning, Claim~\ref{claim:finalextension-disjnp} applies, and thus there is a $t_{\hat{s}-1}$-valid $u'$ defined for all words of length $\leq \gamma(n)$, and
\begin{itemize}[noitemsep]
    \item $u'$ agrees with $u(\{\alpha\})$ on $Q_\alpha$, and
    \item $u'$ agrees with $u(\{\beta\})$ on $Q_\beta$.
\end{itemize} 
With Observation~\ref{obs:partialoracles}(ii) and (iii), we obtain that $M_i(F_r(0^n))$ definitely accepts relative to $u'$, whereas one accepting computation path queries set $Q_\alpha$, and another accepting computation path queries set $Q_\beta$.
These computation paths are different: assume otherwise, then $Q_\alpha=Q_\beta$ and we obtain $\alpha\in Q_\alpha=Q_\beta$.
This means that $\alpha \in Q_\alpha \cap Q_\beta\subseteq Q^+_\alpha \cap Q^+_\beta$, but $\alpha\in u(\{\alpha\})$, $\alpha\not\in u(\{\beta\})$, hence $\alpha$ and $\beta$ spoil each other.
This contradicts the choice of $\alpha, \beta$.

Assign $t'\coloneqq t_{\hat{s}-1}\cup\{\tau^3_{i} \mapsto 0\}$.
Oracle $u'$ is $t'$-valid: it is $t_{\hat{s}-1}$-valid, hence only V3 with respect to $\tau^3_{i}$ is at risk.
However, with $x=F^{u'}_r(0^n)$, we already have seen that $M_i^{u'}(x)$ definitely accepts on two different paths.

We obtain following situation: 
In stage $\hat{s}$ treating task $\tau^3_{i}$, oracle $u'$ is a possible $t'$-valid extension of $w_{\hat{s}-1}$ since it is $t'$-valid and $u'\sqsupsetneq w_{\hat{s}-1}$.
Thus, by definition of the task, we have that $t_{\hat{s}}=t'$, and thus $t_{s}(\tau^3_{i}) = t'(\tau^3_{i}) = 0$, contradicting the hypothesis of this Theorem~\ref{thm:up}.
\end{proof}

We now show that the construction for task $\tau^4_{i,j,r}$ is possible.
\begin{theorem}\label{thm:disjconp}
    Let $s\in \mathbb N^+$, $(w_{s-1}, t_{s-1})$ defined.
    Consider task $\tau^4_{i,j,r}$. 

    Suppose that $t_s=t_{s-1}$, $t_{s}(\tau^4_{i,j})=m>0$.
    Then there exists a $t_{s}$-valid $w\sqsupsetneq w_{s-1}$ and $n\in\mathbb N$ such that one of the following holds:
    \begin{enumerate}[noitemsep]
        \item $0^n\in D_m^v$ for all $v\sqsupseteq w$ and $M_i^{w}(F_r^{w}(0^n))$ definitely accepts.
        \item $0^n\in E_m^v$ for all $v\sqsupseteq w$ and $M_j^{w}(F_r^{w}(0^n))$ definitely accepts.
    \end{enumerate}
\end{theorem}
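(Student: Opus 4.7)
The proof mirrors that of Theorem~\ref{thm:disjnp} with the roles of acceptance and rejection exchanged. Suppose, for contradiction, that for every $t_s$-valid $w \sqsupsetneq w_{s-1}$ neither (i) nor (ii) holds; let $\hat{s}<s$ be the stage that treated $\tau^4_{i,j}$, so $m=t_{\hat{s}}(\tau^4_{i,j})>0$, and set $t' \coloneqq t_{\hat{s}-1}\cup\{\tau^4_{i,j}\mapsto 0\}$. The plan is to build a $t'$-valid oracle $u' \sqsupsetneq w_{\hat{s}-1}$, forcing $t_{\hat{s}}=t'$ and contradicting $t_s(\tau^4_{i,j})=m>0$. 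Since V7 demands some $x$ with both $M_i^{u'}(x)$ and $M_j^{u'}(x)$ definitely rejecting, it suffices to produce $u'$ and $n$ such that $M_i(F_r(0^n))$ and $M_j(F_r(0^n))$ both definitely reject relative to $u'$.

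\paragraph{Setup and singleton oracles from task failure.} Define $\gamma(n) \coloneqq \max\{p_i(p_r(n))+p_r(n),\, p_j(p_r(n))+p_r(n)\}$ and pick $n\in H_m$ with $2^n>\gamma(n)$, $2^{n-1}>8\gamma(n)$, and such that $w_{s-1}$ defines no word of length $\geq n$. Extend $w_{s-1}$ to the $<$-minimal $t_s$-valid $u$ defined precisely on $\Sigma^{<n}$, and for each non-empty $X \subseteq \Sigma^n$ whose words have pairwise different parity (so $|X|\leq 2$), define $u(X)\sqsupseteq u$ and the dependency graph $G(X)$ on $\Sigma^{\leq\gamma(n)}$ exactly as in the proof of Theorem~\ref{thm:disjnp}. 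The analogue of Claim~\ref{claim:welldefined} carries over, with V8 taking the role of V4/V6: parts~(iii) and (iv) of Lemma~\ref{lemma:extension} handle the level-$n$ extension and are consistent with $X$, since $X$ is non-empty with at most one word of each parity. For any $\alpha\in\Sigma^{n-1}0$, the oracle $u(\{\alpha\})$ locks in $0^n\in D_m^v$ for all $v\sqsupseteq u(\{\alpha\})$, so the failure of (i) forces $M_i^{u(\{\alpha\})}(F_r^{u(\{\alpha\})}(0^n))$ to reject (the computation is definite, so ``not definitely accepting'' means ``rejecting''). Symmetrically, $M_j^{u(\{\beta\})}$ rejects for each $\beta\in\Sigma^{n-1}1$.

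\paragraph{Main obstacle: merging via spoilage.} Unlike Theorem~\ref{thm:disjnp}, where acceptance requires only a single path that we can freeze, rejection here is a universal property over all paths, which makes the merge more delicate. The idea is to aim for $u'\sqsupseteq u(\{\alpha,\beta\})$ for suitable $\alpha\in\Sigma^{n-1}0$, $\beta\in\Sigma^{n-1}1$, and to define spoilage based on the hypothetical leftmost accepting path of $M_i$ (respectively $M_j$) in $u(\{\alpha,\beta\})$. Concretely, let $Q_\alpha$ be the queries on the leftmost accepting path of $M_i^{u(\{\alpha,\beta\})}(F_r^{u(\{\alpha,\beta\})}(0^n))$ if one exists (else $Q_\alpha=\emptyset$), and set $Q_\alpha^+ \coloneqq R_{G(\{\alpha,\beta\})}(Q_\alpha)$; define $Q_\beta, Q_\beta^+$ analogously for $M_j$. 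Declare $(\alpha,\beta)$ non-spoiling if $\beta\notin Q_\alpha^+$ and $\alpha\notin Q_\beta^+$. The key observation is that if $M_i^{u(\{\alpha,\beta\})}$ had any accepting path, then $Q_\alpha^+$ would necessarily contain every word where $u(\{\alpha,\beta\})$ and $u(\{\alpha\})$ can disagree (in particular $\beta$), since otherwise the same path would accept in $u(\{\alpha\})$ where $M_i$ rejects. Thus non-spoilage forces both $M_i$ and $M_j$ to reject on $u(\{\alpha,\beta\})$.

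\paragraph{Bounding spoilage and conclusion.} The analogue of Claim~\ref{claim:frozenwords-len} bounds $\ell(Q_\alpha^+)$ and $\ell(Q_\beta^+)$ by $2\gamma(n)$, so each vertex in the bipartite spoilage graph on $\Sigma^{n-1}0\times\Sigma^{n-1}1$ has out-degree at most $4\gamma(n)$. Since both sides have $2^{n-1}>8\gamma(n)$ vertices, Lemma~\ref{lemma:bipartite} produces a non-spoiling pair $(\alpha,\beta)$. A bitwise extension in the style of Claim~\ref{claim:finalextension-disjnp} then produces a $t_{\hat{s}-1}$-valid $u'\sqsupseteq u(\{\alpha,\beta\})$ defined on all of $\Sigma^{\leq\gamma(n)}$ on which $M_i(F_r(0^n))$ and $M_j(F_r(0^n))$ still both definitely reject. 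This $u'$ satisfies V7 with witness $x=F_r^{u'}(0^n)$, is therefore $t'$-valid, and yields the desired contradiction.
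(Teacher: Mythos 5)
Your overall strategy --- proof by contradiction followed by merging two singleton oracles, mirroring Theorem~\ref{thm:disjnp} --- is not the paper's strategy, and it contains a gap that I do not think can be repaired. The problem is the counting step for the spoilage graph. In Theorem~\ref{thm:disjnp} the set $Q_\alpha$ is determined by $\alpha$ alone (it is the leftmost accepting path in $u(\{\alpha\})$), so each vertex really does have out-degree at most $4\gamma(n)$ and Lemma~\ref{lemma:bipartite} applies. In your version, $Q_\alpha$ is the leftmost accepting path of $M_i$ relative to $u(\{\alpha,\beta\})$, so it depends on the \emph{pair}: for a fixed $\alpha$, each choice of $\beta$ yields a different set $Q_{\alpha,\beta}^+$, and $\beta$ may well lie in its own set for every $\beta$. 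Concretely, if $M_i(F_r(0^n))$ guesses a word $y\in\Sigma^{n-1}1$, queries it, and accepts iff the answer is positive, then $M_i$ rejects relative to every $u(\{\alpha\})$ (consistent with your hypothesis) yet accepts relative to every $u(\{\alpha,\beta\})$ via a path that queries $\beta$; every pair is then spoiling, the out-degree of $\alpha$ is $2^{n-1}$ rather than $O(\gamma(n))$, and no non-spoiling pair exists. The bound $\ell(Q_\alpha^+)\leq 2\gamma(n)$ does not give an out-degree bound here, so Lemma~\ref{lemma:bipartite} cannot be invoked. The underlying obstacle is exactly the one you flag but do not overcome: rejection is a universal statement over exponentially many paths and cannot be ``frozen'' by fixing polynomially many queries, so there is no analogue of the merge from Claim~\ref{claim:finalextension-disjnp} for two rejecting computations.

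The paper avoids this entirely by \emph{not} arguing by contradiction. It defines $u(X)$ also for the V8-incompatible set $X=\emptyset$ (which forces the new machinery of ``incorrect code words'', the majority-vote clause (4), and Claim~\ref{claim:frozenwords-npconp} to keep the dependency graph sparse), shows that one of $M_i(F_r(0^n))$, $M_j(F_r(0^n))$ must \emph{accept} relative to $u(\emptyset)$ --- otherwise $u(\emptyset)$ would have been a $t'$-valid extension at stage $\hat{s}$ and $\tau^4_{i,j}$ would have received the value $0$ --- and then freezes that single accepting path, inserting one word $\xi$ of the appropriate parity outside its closure $Q^+$ to obtain $u(\{\xi\})$ satisfying (i) or (ii) directly. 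If you want to salvage your write-up, you need to switch to this forward construction; in particular you must engage with $u(\emptyset)$, which your proposal never considers.
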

\begin{proof}
Let us fix  $i,j,r$ throughout the proof of the theorem.

Let $\hat{s}<s$ be the stage that treated $\tau^4_{i,j}$.
Such stage exists, as otherwise $t_{s}(\tau^4_{i,j})$ is undefined.
We have $m=t_{\hat{s}}(\tau^4_{i,j})=t_s(\tau^4_{i,j})$; fix $m$ for the rest of the proof.
Unlike the proof by contradiction used for Theorem~\ref{thm:disjnp}, for this theorem we explicitly construct a $t_{s}$-valid $w\sqsupsetneq w_{s-1}$ such that one of (i) or (ii) holds.

Again, let
\begin{equation*} \gamma(n) \coloneqq  \max(p_i(p_r(n))+p_r(n), p_j(p_r(n))+p_r(n)) \end{equation*}
be the polynomial bounding the runtime of $M_i\circ F_r$, $M_j\circ F_r$ with respect to input length $n$ relative to any oracle.
Let us define $n\in\mathbb N^+$ as the smallest $n\in H_m$ such that $w_{s-1}$ does not define any words of length $\geq n$,
and
\begin{equation}\label{eq:expbound-disjconp}
    2^n > \gamma(n),\quad 2^n/2 > 4\gamma(n),\quad 2^n/4 > \gamma(n)/(2n).\tag{${\ast}{\ast}{\ast}{\ast}$}
\end{equation}
Again, first inequality of \eqref{eq:expbound-disjconp} ensures that no level $n<n'\leq \gamma(n)$ is reserved for any witness language, that is $n'\not\in H_0, H_1, \dots$ (cf. Observation~\ref{obs:leveldefinitions}(iii)). The following inequalities ensure that there are enough words of length $n$ such that the combinatorial arguments work, which we will employ later.
For the remaining proof, we additionally fix $n$.
Again, we define $u\sqsupseteq w_{s-1}$ as the minimal $t_{s}$-valid partial oracle which is defined precisely for all words up to length $<n$. Such oracle exists by Lemma~\ref{lemma:extension}, by extending $w_{s-1}$ bitwise such that it remains $t_{s}$-valid.

For $X \subseteq \Sigma^n$, we will define $u(X)\sqsupsetneq u$, which is defined for all words of length $\leq\gamma(n)$, and such that $u(X) \cap \Sigma^n = X$. 
To each $u(X)$, we define a directed graph $G(X)$ on the vertex set $\Sigma^{\leq\gamma(n)}$.
The definitions are very similar to the definitions in the proof of Theorem~\ref{thm:disjnp}. For all $X$ with $|X|=1$ and certain $X$ with $|X|=2$, we can attain a $t_s$-valid $u(X)$ because $X$ is ``compatible'' with V8.

We also handle in our definition the specific case $X=\emptyset$.
Due to ``incompatibility'' with V8, $u(\emptyset)$ is not $t_s$-valid, but we define $u(\emptyset)$ ``as $t_s$-valid as possible'', meaning that we can extend $u(\emptyset)$ to some $u(\{\xi\})$ for $\xi \in \Sigma ^n$ while fixing some important words.

In fact, V8 requires that for some $X\subseteq \Sigma^n$, $|X|\geq 1$ and that all words in $X$ have pairwise different parity. In other words, either $X=\{\alpha\}$, or $X=\{\beta\}$ or $X=\{\alpha,\beta\}$ for some $\alpha\in\Sigma^{n-1}0, \beta\in\Sigma^{n-1}1$. For brevity, we write that $X$ is \emph{V8-compatible} if $X\subseteq\Sigma^n$ satisfies this property. Accordingly, the empty set $\emptyset$ is V8-incompatible. We design our constructions such that $u(X)$ is $t_s$-valid if $X$ is V8-compatible.

\paragraph{Definition of $u(X)$, $G(X)$:} Let $X\subseteq\Sigma^n$. We construct $u(X)$ and $G(X)=(V,E)$ inductively. Fix vertex set $V=\Sigma^{\leq\gamma(n)}$.
Basis clauses:
\begin{enumerate}[label=(\arabic*)]
    \item For $z\in \Sigma^{<n}$, set $z\in u(X)$ if and only if $z\in u$.
    \item For $z\in \Sigma^{n}$, set $z\in u(X)$ if and only if $z\in X$.
\end{enumerate}
Inductive clauses: Let $z\in\Sigma^{\leq \gamma(n)}$, $|z|>n$, and $u(X)$ defined for words $<z$.
\begin{enumerate}[label=(\arabic*),resume*]
    \item If $z=c(a,b,x)$ for suitable $a,b,x$ with $0<t_s(\tau^1_{a,b})\leq z$, and \emph{at least one} of $M^{{u(X)}}_{a}(x)$ or $M_{b}^{{u(X)}}(x)$ accepts, continue as follows:

        Mark vertex $z$ as \emph{correct code word}.
        If $M_a^{{u(X)}}(x)$ accepts, then define $z\in u(X)$.
        Let $(z,q)\in E$ for all oracle queries $q$ on the leftmost accepting path of $M_a^{u(X)}(x)$.

        Otherwise, $M_b^{{u(X)}}(x)$ accepts, and define $z\not\in u(X)$.
        Let $(z,q)\in E$ for all oracle queries $q$ on the leftmost accepting path of $M_b^{u(X)}(x)$.

        (Meaning: If $z$ is a code word for the codings of $\tau ^1_{a,b}$ and one of $M_a^{{u(X)}}(x)$ and $M_b^{{u(X)}}(x)$ has an accepting path, we can construct ${u(X)}$ for $z$ like in Lemma~\ref{lemma:extension}, i.e., the coding is correct with respect to Lemma~\ref{lemma:extension}. The dependencies of this accepting path on other words in ${u(X)}$ are captured by adding respective edges to the edge set of $G(X)$.)

    \item Otherwise, if $z=c(a,b,x)$ for suitable $a,b,x$ with $0<t_s(\tau^1_{a,b})\leq z$, and \emph{none} of $M^{{u(X)}}_{a}(x)$ or $M_{b}^{{u(X)}}(x)$ accepts, continue as follows:

        Mark vertex $z$ as \emph{incorrect code word}.
        Define $Q^z_\mathrm{in} \coloneqq \{ \xi\in\Sigma^n \mid z\in u(\{\xi\}) \}$, $Q^z_\mathrm{out} \coloneqq \{ \xi\in\Sigma^n \mid z\not\in u(\{\xi\})\}$.
        If $|Q^z_\mathrm{out}|\leq |Q^z_\mathrm{in}|$, let $z\in u(X)$ and $(z,\xi)\in E$ for all $\xi\in Q^z_\mathrm{out}$.
        Otherwise, define $z\not\in u(X)$ and let $(z,\xi)\in E$ for all $\xi\in Q^z_\mathrm{in}$.

        (Meaning: The machines $M_a$ and $M_b$ do not work complementary relative to ${u(X)}$, which can happen for $X$ that are V8-incompatible, because V8 is violated, and therefore ${u(X)}$ is not $t_s$-valid. Since we later want to add a word $\xi$ of length $n$ with specific properties, we now determine the membership of $z$ to $u(X)$ according to the majority of possible one-word-extensions of $X$. The dependencies on words that induce a different behavior for $z$ are captured by adding respective edges to the edge set of $G(X)$.
        Later we see that the number of added edges is small, as either $|Q^z_\mathrm{in}|\gg|Q^z_\mathrm{out}|$ or $|Q^z_\mathrm{in}|\ll|Q^z_\mathrm{out}|$.)

    \item Otherwise, $z\not\in u(X)$.
\end{enumerate}
Extremal clause: (6) No other edges are in $E$.

Let us repeat the following definitions used in clause (4):
\begin{equation*}
    Q_\mathrm{in}^z\coloneqq \{\xi\in\Sigma^n \mid z\in u(\{\xi\})\},\quad Q_\mathrm{out}^z\coloneqq \{\xi\in\Sigma^n \mid z\not\in u(\{\xi\})\},
\end{equation*}
and observe that $Q_\mathrm{in}^z, Q_\mathrm{out}^z$ partition $\Sigma^n$.

We make some claims concerning $u(X),G(X)$.
\begin{claim}\phantomsection\label{claim:welldefined-disjconp}
    \begin{enumerate}
        \item Whenever $X$ is V8-compatible, $u(X)$ is well-defined, is $t_s$-valid and $G(X)$ contains no incorrect code words. (But may contain words of the form $c(\cdot,\cdot,\cdot)$ that are not marked as correct code words.)
        \item For any arbitrary $X\subseteq \Sigma^n$, $u(X)$ is well-defined for all words of length $\leq \gamma(n)$, $u(X) \cap \Sigma^n = X$, and $u(X) \sqsupsetneq u\sqsupseteq w_{s-1}$.

        \item Whenever $X$ is V8-compatible, then for every correct code word $z=c(a,b,x)$ in $G(X)$, the following statements are equivalent: (a) $z\in u(X)$, (b) $M_a^{u(X)}(x)$ accepts, (c) $M_b^{u(X)}(x)$ rejects.
        \item Whenever $s'<\hat{s}$, $u(\emptyset)$ is $t_{s'}$-valid.
        \item Let $z=c(a,b,x)$ be a correct code word in $G(X)$.
            Then $z\in u(X)$ implies $M_a^{u(X)}$ accepts, and $z\not\in u(X)$ implies $M_b^{u(X)}$ accepts.
        \item $G(X)$ forms a directed acyclic graph (which is not necessarily connected).
            In particular, for every directed edge from vertex $a$ to $b$, it holds that $a>b$.
    \end{enumerate}
\end{claim}
\begin{proof}
    To (i): Let $u_0, u_1, u_2, \dots, u_l$ be a length-ordered enumeration of all prefixes of $u(X)$ that are defined for at least all words of length $<n$, that is, $u = u_0 \sqsubsetneq u_1 \sqsubsetneq u_2 \sqsubsetneq \dots \sqsubsetneq u_l=u(X)$.
    Observe that by definition, for every $u_k$, $0<k\leq l$, the last bit of $u_k$ is defined by one of the clauses (2)--(5).
    We show inductively that every $u_k$ is well-defined, is $t_s$-valid and the last bit of $u_k$ is not defined by clause (4).
    In total, we obtain that the claimed assertions hold for $u_l=u(X)$, and specifically $G(X)$ does not contain incorrect code words as these are only marked in clause (4).

    Base case is immediate, as $u_0 = u$ is well-defined by exclusive definitions from clause (1) and is $t_s$-valid by choice.

    For the inductive case from $u_k$ to $u_{k+1}$, let $z=|u_k|$, and $y=(u(X))(z)$. It holds that $u_ky=u_{k+1}\sqsubseteq u(X)$, i.e., $y$ is the last bit of $u_{k+1}$.
    By induction hypothesis, $u_k$ is well-defined and is $t_s$-valid.
    Note that $y$ is defined by above inductive definition by one of the clauses (2)--(5).

    First we see that $y$ is not defined by clause (4): assume otherwise, then $z=c(a,b,x)$ for suitable $a,b,x$ and we have $0<t_{s}(\tau^1_{a,b})\leq z$.
    By hypothesis $u_k$ is $t_s$-valid, hence conditions of Lemma~\ref{lemma:npconp-tasks} apply and precisely one of $M_a^{u_k}(x)$ and $M_b^{u_k}(x)$ accepts. 
    Say without loss of generality that $M_a^{u_k}(x)$ accepts. 
    By Claim~\ref{claim:codewords}(iv) and Observation~\ref{obs:partialoracles}(i), also $M_a^{u(X)}(x)$ accepts.
    This contradicts the definition from clause (4).

    Thus, $y$ is defined by either clause (2), (3) or (5). Bit $y$ is well-defined, in the sense that $y$ is entirely defined by the partial oracle $u_{k}$:
    Either $y$ is defined by clause (2) or (5) and is trivially well-defined, or $y$ is defined by clause (3), that is $z=c(a,b,x)$ for suitable $a,b,x$.
    In this case $M_a^{u(X)}(x)$ and $M_b^{u(X)}(x)$ are definite by Claim~\ref{claim:codewords}(iv), hence can only ask queries $<z=|u_k|$; thus $y$ is well-defined.

    It remains to show that $u_ky=u_{k+1}$ is $t_s$-valid. For this, we will employ Lemma~\ref{lemma:extension} with respect to $t_s$-valid $u_k$.
    We analyze three cases on which clause (2), (3) or (5) defines $y$.
    \medskip

    Clause (3): We have $z=c(a,b,x)$ for suitable $a,b,x$, and case~\ref{lemma:extension}(i) applies.
    Since we have $y=1$ if and only if $M_a^w(x)$ accepts, we obtain that $u_ky$ is $t_s$-valid.\medskip

    Clause (5): Note that $n<|z|<2^n$ by \eqref{eq:expbound}, hence by Observation~\ref{obs:leveldefinitions}(iii) the cases~\ref{lemma:extension}(ii--iv) cannot apply.
    Furthermore, case~\ref{lemma:extension}(i) cannot apply either. Otherwise we have $z= c(a,b,x)$ for suitable $a,b,x$ and $0<t_{s}(\tau^1_{a,b})\leq z$. However, this case would be handled in clause (3).
    In total, only case~\ref{lemma:extension}(v) applies, hence $u_ky$ is $t_s$-valid.\medskip

    Clause (2):
    Observe that by definition of clause (2), $u_ky$ and $X$ agree on all words of length $n$ that are defined by $u_ky$.

    The case~\ref{lemma:extension}(i) cannot apply, since otherwise we have $z= c(\cdot,\cdot,\cdot)$ but by clause (2),  $|z|=n\in H_m$, contradicting Claim~\ref{claim:codewords}(i).
    The case~\ref{lemma:extension}(ii) cannot apply, since otherwise $t_s(\tau)=m'>0$ for some $\tau\neq \tau^4_{i,j}$ and $|z|=n\in H_{m'}$. Since $t_s$ is injective on its support, $m'=t_s(\tau)\neq t_s(\tau^4_{i,j})=m$. But then $n\in H_m, H_{m'}$, which contradicts Observation~\ref{obs:leveldefinitions}(i) that $H_m, H_{m'}$ are disjoint.

    Assume that~\ref{lemma:extension}(iv) applies.
    From the conditions of the case follows that $u_k\cap\Sigma^n=\emptyset$ and $u_ky=u_{k+1}$ is defined for all words of length $n$.
    This means that $y=1$ as otherwise $u_k0\cap\Sigma^n=\emptyset$ and by above observation $X=\emptyset$, contradicting the condition of this Claim~\ref{claim:welldefined-disjconp}(ii) that $X$ is V8-compatible.
    As $y=1$, case~\ref{lemma:extension}(iv) asserts that $u_k1=u_{k+1}$ is $t_s$-valid.

    Assume that~\ref{lemma:extension}(iii) applies.
    By the conditions of the case, there is some $x\in u_k\cap\Sigma^n$ that has the same parity as $z$. Note that $x<|u_k|=z$ thus $x$ and $z$ are different.
    This means that $y=0$ as otherwise $x,z\in u_k1\cap\Sigma^n$, and by previous observation, $X$ contains two different words of the same parity.
    This contradicts the condition of this Claim~\ref{claim:welldefined-disjconp}(i) that $X$ is V8-compatible.
    As $y=0$, case~\ref{lemma:extension}(iii) asserts that $u_k0=u_{k+1}$ is $t_s$-valid.

    Otherwise, if cases~\ref{lemma:extension}(i--iv) do not apply, case~\ref{lemma:extension}(v) asserts that $u_ky=u_{k+1}$ is $t_s$-valid.
    \medskip

    To (ii):
    By previous assertion, $u(X)$ is well-defined for V8-compatible $X$. For any other V8-incompatible $X\subseteq\Sigma^n$, $u(X)$ is also well-defined. In particular, the oracles $u(\{\xi\})$ invoked in clause (4) are completely well-defined by this Claim~\ref{claim:welldefined-disjconp}(i).

    The remaining assertions immediately follow from definition.
    \medskip

    To (iii):
    The equivalence of (a) and (b) immediately follows from the definition in clause (3). We prove that (b) and (c) are equivalent.
    From the previous Claim~\ref{claim:welldefined-disjconp}(i), it follows that $u(X)$ is $t_s$-valid, hence $t_{s-1}$-valid. 
    Also, previous Claim~\ref{claim:welldefined-disjconp}(ii) shows $u(X)\sqsupseteq w_{s-1}$.

    The conditions of Lemma~\ref{lemma:npconp-tasks} (invoked with regard to $t_{s-1}$-valid $u(X)\sqsupseteq w_{s-1}$) apply: We have $0<t_{s-1}(\tau^1_{a,b})=t_{s}(\tau^1_{a,b})\leq z\leq |u(X)|$ by definition of clause (3).
    Hence $M_a^{u(X)}(x)$ accepts if and only if $M_b^{u(X)}(x)$ rejects.
    \medskip

    To (iv):
    Like the above proof of~\ref{claim:welldefined-disjconp}(i).
    Let $u_0, u_1, u_2, \dots, u_l$ be a length-ordered enumeration of all prefixes of $u(\emptyset)$ that are defined for at least all words of length $<n$, that is, $u = u_0 \sqsubsetneq u_1 \sqsubsetneq u_2 \sqsubsetneq \dots \sqsubsetneq u_l = u(\emptyset)$.
    We show inductively that every $u_k$ is $t_{s'}$-valid.
    Thus, we obtain that $u_l=u(\emptyset)$ is $t_{s'}$-valid. 
    Base case is immediate, as $u_0 = u$ is $t_s$-valid by choice.

    For the inductive case from $u_k$ to $u_{k+1}$, let $z=|u_k|$, and $y=(u(X))(z)$. It holds that $u_ky=u_{k+1}\sqsubseteq u(X)$, i.e., $y$ is the last bit of $u_{k+1}$.
    By induction hypothesis, $u_k$ is $t_{s'}$-valid.
    Note that $y$ is defined by above inductive definition by one of the clauses (2)--(4).

    We need to show that $u_ky=u_{k+1}$ is $t_{s'}$-valid. For this, we will employ Lemma~\ref{lemma:extension} with respect to $t_{s'}$-valid $u_k$.
    We analyze four cases on which clause (2), (3), (4) or (5) defines $y$.\medskip

    Clause (2):
    The case~\ref{lemma:extension}(i) cannot apply, since otherwise we have $z= c(\cdot,\cdot,\cdot)$ but by clause (2),  $|z|=n\in H_m$, contradicting Claim~\ref{claim:codewords}(i).
    The cases~\ref{lemma:extension}(ii--iv) cannot apply, since otherwise $|z|=n\in H_{m'}$ for some $\tau$ and $m'=t_{s'}(\tau)$.
    As $s'<\hat{s}$ where $\tau^4_{i,j}$ is treated, we know that $\tau\neq \tau^4_{i,j}$.
    Since $t_{s}$ is an extension of $t_{s'}$ and is injective on its support, we know that $m'=t_{s'}(\tau)=t_{s}(\tau)\neq t_{s}(\tau^4_{i,j})=m$.
    But then $n\in H_m, H_{m'}$, which contradicts Observation~\ref{obs:leveldefinitions}(i) that $H_{m'}, H_m$ are disjoint.

    Hence, case~\ref{lemma:extension}(v) asserts that $u_ky=u_{k+1}$ is $t_s$-valid.\medskip

    Clause (4): Under the conditions of this case, $z=c(a,b,x)$ for suitable $a,b,x$ and $0<t_s(\tau^1_{a,b})$, where both $M_a(x)$ and $M_b(x)$ reject relative to $u(\emptyset)$.

    We show that $t_{s'}(\tau^1_{a,b})$ is either undefined or $>z$.
    Hence the case~\ref{lemma:extension}(i) cannot apply.
    By the same argument as above, the cases~\ref{lemma:extension}(ii--iv) cannot apply either.
    Thus case~\ref{lemma:extension}(v) asserts that $u_ky=u_{k+1}$ is $t_s$-valid.

    Therefore, assume $t_{s'}(\tau^1_{a,b})\leq z$ is defined.
    Since $t_s$ is an extension of $t_{s'}$, we have that $0<t_{s'}(\tau^1_{a,b})\leq z$.
    Now, as $u_k$ is $t_{s'}$-valid by induction hypothesis, the conditions of Lemma~\ref{lemma:npconp-tasks} are met, and one of $M_a^{u_k}(x)$ or $M_b^{u_k}(x)$ definitely accepts.
    Without loss, $M_a^{u_k}(x)$ accepts, and with Observation~\ref{obs:partialoracles}(i), $M_a^{u(\emptyset)}(x)$ accepts. This contradicts the conditions of the clause (4).
    \medskip

    If $z$ is defined by the other clauses (3) or (5), then we obtain a contradiction identical to the proof of~\ref{claim:welldefined-disjconp}(i).
    \medskip

    To (v): Immediately follows from definition in clause (3).
    \medskip

    To (vi): Edges are only added in clauses (3) and (4).
    For edges $(z,q)$ added in clause (3), Claim~\ref{claim:codewords}(iii) asserts that $q<z$.
    For edges $(z,\xi)$ added in clause (3), observe that $|\xi|=n<|z|$ hence $\xi<z$.
\end{proof}
Our new definition of $G(X)$ for V8-compatible $X$ admits the same bounds as our previous definition when looking at $\ell(R_{G(X)}(\cdot ))$.
\begin{claim}\label{claim:frozenwords-len-simple}
    Let $X\subseteq\Sigma^n$, $X$ V8-compatible, and $Q\subseteq\Sigma^{\leq \gamma(n)}$.

    Then $\ell(R_{G({X})}(Q)) \leq 2\cdot \ell(Q)$.
\end{claim}
\begin{proof}
    This can be proven inductively similar to the proof of Claim~\ref{claim:frozenwords-len}.
    Consider the graph $G'$ that is formed by taking the subgraph from $G(X)$ induced by $R_{G(X)}(Q)$.
    We show inductively on the height of $z$ in $G'$ that
    \begin{equation*} \ell(R_{G'}(z))\leq 2|z| \text{ for all $z\in V(G')$.} \end{equation*}
    This is sufficient to prove the asserted claim.

    The base case with height $0$ is immediate.
    For the inductive case, we know that $\mathrm{outdeg}_{G'}(z)>0$, hence by construction of $G({X})$, there exist $x,a,b$ such that $z=c(a, b, x)$.
    Also, word $z$ is a \emph{correct} code word, as by Claim~\ref{claim:welldefined-disjconp}, no incorrect words exist in $G(X)$.
    By induction hypothesis, $\ell(R_{G'}(q))\leq 2|q|$ for all $q\in N^+_{G'}(z)$.

    Without loss of generality, $M_a^{{u(X)}}(x)$ accepts.
    (In the other case, $M_b^{{u(X)}}(x)$ accepts.)
    Let $Q_X$ be the set of oracle queries on the leftmost accepting path.
    By Claim~\ref{claim:codewords}(iii), $\ell(Q_X)\leq |z|/2$. (We use a weaker bound than actually claimed.)
    By construction, $N^+_{G'}(z) = Q_X$, hence $\ell(N^+_{G'}(z))\leq |z|/2$.
    Then,
    \begin{align*}
        \ell(R_{G'}(z)) &= |z| + \sum_{q\in N^+_{G'}(z)} \ell(R_{G'}(q))) \leq |z| + \sum_{q\in N^+_{G'}(z)} 2|q|\quad\text{(by induction hypothesis)}\\
                        &= |z| + 2\ell(N^+_{G'}(z)) \leq |z|+|z| = 2|z|.\qedhere
    \end{align*}
\end{proof}
The next claim shows, that if we want to make sure that two oracles $u(\{\xi\})$ and $u(X)$ for V8-compatible $X \subseteq \Sigma ^n$ agree on a set $Q \subseteq \Sigma ^{\leq \gamma (n)}$ of oracle queries, we only have to prove that they agree on the words of length $n$ in $R_{G(\{\xi\})}(Q)$. Intuitively, the set $R_{G(\{\xi\})}(Q)$ contains the dependencies the words in $Q$ have on smaller words and $u(\{\xi\})$ and $u(X)$ always agree on words $<n$. If now $u(\{\xi\})$ and $u(X)$ also behave the same for words of length $n$ in $R_{G(\{\xi\})}(Q)$, the words in $Q$ have to behave equally in $u(\{\xi\})$ and $u(X)$.
\begin{claim}\label{claim:freezing-disjconp-simple}
    Let $\xi\in\Sigma^n, X\subseteq \Sigma^n$, $X$ V8-compatible, and $Q\subseteq\Sigma^{\leq\gamma(n)}$.

    Suppose that $\{\xi\}$ and $X$ agree on $R_{G(\{\xi\})}(Q) \cap \Sigma^n$. 
    Then $u(\{\xi\})$ and $u(X)$ agree on $Q$. 
\end{claim}
\begin{proof}
    We prove the stronger statement, that $u(\{\xi\})$ and $u(X)$ agree on $R_{G(\{\xi\})}(Q)\supseteq Q$.
    Consider graph $G'$ that is formed by taking the subgraph from $G(\{\xi\})$ induced by $R_{G({\{\xi\}})}(Q)$.
    We show that:
    \begin{equation*}  \text{For all words $z\in V(G')$, it holds that}\quad z\in u(\{\xi\})\iff z\in u(X). \end{equation*}
    Assume it does not hold, then choose some vertex $z$ with smallest height in $G'$ for which the statement fails.
    The choice of $z$ with minimal height implies that $u(\{\xi\})$ and $u(X)$ agree on $N^+_{G'}(z)$.

    If $|z|< n$, then $z\in u(\{\xi\})\iff z\in u\iff z\in u(X)$, by clause (1) of the definitions of $G(\{\xi\})$, $u(\{\xi\})$, $u(X)$, contradicting the assumption.

    If $|z|=n$, then $z\in R_{G(\{\xi\})}(Q) \cap \Sigma^n$ and by the assumption of this Claim~\ref{claim:freezing-disjconp-simple} and Claim~\ref{claim:welldefined-disjconp}(i), we have that $z\in u(\{\xi\})\iff z\in\{\xi\} \iff z\in X\iff z\in u(X)$. This contradicts our assumption.

    Thus, we obtain $|z|>n$. If $z$ is defined by clause (5) in $u(\{\xi\})$, then it is also defined by clause (5) in $u(X)$. We have that $z\not\in u(\{\xi\}), u(X)$, contradicting the assumption.

    If $z$ is defined by clause (4) in $u(\{\xi\})$, then it is an incorrect code word. We obtain a contradiction with the assertion that $G(\{\xi\})$ does not contain any incorrect code words, by Claim~\ref{claim:welldefined-disjconp}(ii).

    Hence otherwise, $z$ is defined by clause (3) in $u(\{\xi\})$ and a correct code word in $G(\{\xi\})$.
    Thus we know $z=c(a,b,x)$ for suitable $a,b,x$ with $0<t_s(\tau^1_{a,b})\leq z$, and at least one of $M^{{u(\{\xi\})}}_{a}(x)$ or $M_{b}^{{u(\{\xi\})}}(x)$ accepts.
    Now, either
    \begin{gather*} z\in u(\{\xi\}) \overset{\mathclap{\text{\ref{claim:welldefined-disjconp}(v)}}}{\implies} M_a^{{u(\{\xi\})}}(x) \text{ acc.} \overset{\mathclap{\text{\ref{obs:partialoracles}(iii)}}}{\implies} M_a^{{u(X)}}(x) \text{ acc.} \overset{\mathclap{\text{\ref{claim:welldefined-disjconp}(iii)}}}{\implies} z\in u(X),\end{gather*}
    where the first implication holds by Claim~\ref{claim:welldefined-disjconp}(v), and the second implication holds by Observation~\ref{obs:partialoracles}(iii) combined with the fact that $u(\{\xi\})$ and $u(X)$ agree on $N^+_{G'}(z)$, which by definition in clause (3) is precisely the set of oracle queries of $M_a^{{u(\{\xi\})}}(x)$.
    The last implication is a consequence of Claim~\ref{claim:welldefined-disjconp}(iii), having that $X$ is V8-compatible.

    Or, symmetric,
    \begin{gather*} z\not\in u(\{\xi\}) \overset{\mathclap{\text{\ref{claim:welldefined-disjconp}(v)}}}{\implies} M_b^{{u(\{\xi\})}}(x) \text{ acc.} \overset{\mathclap{\text{\ref{obs:partialoracles}(iii)}}}{\implies} M_b^{{u(X)}}(x) \text{ acc.} \overset{\mathclap{\text{\ref{claim:welldefined-disjconp}(iii)}}}{\implies} z\not \in u(X). \end{gather*}
    In both cases, we obtain a contradiction.
    We conclude that the above statement holds for all $z\in R_{G(\{\xi\})}(Q)$, and the claim follows.
\end{proof}
We want to generalize the statement from Claim~\ref{claim:frozenwords-len-simple} to the case $X=\emptyset$. Here, $X$ is V8-incompatible. For this, we first have to prove that incorrect code words do not add too many edges to $G(\emptyset)$ in clause (4).
\begin{claim}\label{claim:frozenwords-npconp}
    Let $z=c(a,b,x)$ be an incorrect code word in $G(\emptyset)$.
    Then either $Q_\mathrm{in}^z\leq |z|/n$ or $Q_\mathrm{out}^z\leq |z|/n$. 
\end{claim}
\begin{proof}
    Assume without loss of generality that $|Q^z_\mathrm{in}|\geq |Q^z_\mathrm{out}|$.
    Aiming for a contradiction, we assume that
    \begin{equation*} |z|/n < |Q^z_\mathrm{out}| \leq 2^n/2 \leq |Q^z_\mathrm{in}|. \end{equation*}
    We will find $\alpha,\beta\in\Sigma^n$ such that (a) $M_a^{{u(\{\alpha,\beta\})}}(x)$ and $M_b^{{u(\{\alpha,\beta\})}}(x)$ both accept, and (b) that $\alpha$ and $\beta$ have different parity.
    This implies that $\{\alpha,\beta\}$ is V8-compatible.
    Noting that $u(\{\alpha,\beta\})$ is $t_s$-valid by Claim~\ref{claim:welldefined-disjconp}(ii), we have $0<t_s(\tau^1_{a,b})\leq z<|u(\{\alpha,\beta\})|$, and we obtain a contradiction by Lemma~\ref{lemma:npconp-tasks}.

    By definition of $Q^z_\mathrm{in}$, $Q^z_\mathrm{out}$:
    \begin{itemize}[noitemsep]
        \item Whenever $\alpha\in Q^z_\mathrm{in}$, then $M_a^{{u(\{\alpha\})}}(x)$ accepts (since $\alpha\in Q^z_\mathrm{in}$ implies $z\in u(\{\alpha\})$ implies accepting $M_a^{{u(\{\alpha\})}}(x)$ by Claim~\ref{claim:welldefined-disjconp}(iii)).
        \item Symmetric, whenever $\beta\in Q^z_\mathrm{out}$, then $M_b^{{u(\{\beta\})}}(x)$ accepts.
    \end{itemize}
    Let $Q_\alpha$ (resp., $Q_\beta$) be the oracle queries of the leftmost accepting path, and let $Q^+_\alpha\coloneqq R_{G(\{\alpha\})}(Q_\alpha)$ be $Q_\alpha$'s closure under successor in $G(\{\alpha\})$. Define $Q^+_\beta$ respectively.
    By Claim~\ref{claim:codewords}(iii), $\ell(Q_\alpha), \ell(Q_\beta) \leq |z|/8$.
    By Claim~\ref{claim:frozenwords-len-simple}, $\ell(Q^+_\alpha), \ell(Q^+_\beta) \leq |z|/4$.

    In order to freeze the accepting paths, we want to freeze their respective oracle queries.
    We determine some $\alpha\in Q_\mathrm{in}^z$ and $\beta\in Q_\mathrm{out}^z$ of different parity such that $u(\{\alpha,\beta\})$ retains the respective accepting paths.
    To this end, we choose $\alpha$ and $\beta$ such that 
    \begin{enumerate}[label=(\alph*)]
        \item $\alpha\not\in Q^+_\beta$, $\beta\not\in Q^+_\alpha$, ensuring that the respective paths are retained, and
        \item $\alpha$ and $\beta$ have different parity, ensuring that $\{\alpha, \beta\}$ is V8-compatible, thus $u(\{\alpha,\beta\})$ is $t_s$-valid.
    \end{enumerate}
    Such pair $\alpha\in Q_\mathrm{in}^z$ and $\beta\in Q_\mathrm{out}^z$ satisfying the above properties exists; we postpone the proof to the end. For now, fix these words $\alpha$ and $\beta$.

    Note that $\{\alpha\}$ and $\{\alpha,\beta\}$ agree on $Q^+_\alpha\cap\Sigma^n$, because $\beta\not\in Q^+_\alpha\cap\Sigma^n$ by (a).
    Invoking Claim~\ref{claim:freezing-disjconp-simple}, we obtain that $u(\{\alpha\})$ and $u(\{\alpha,\beta\})$ agree on $Q_\alpha$.
    With Observation~\ref{obs:partialoracles}(iii), we obtain that $M_a^{{u(\{\alpha,\beta \})}}(x)$ accepts.

    Analogously, we obtain that also $M_b^{{u(\{\alpha,\beta \})}}(x)$ accepts.
    This is a contradiction, as already argued.

    \begin{center}
        $\ast$
    \end{center}

    It remains to prove that there exists a pair of $\alpha\in Q_\mathrm{in}^z$ and $\beta\in Q_\mathrm{out}^z$ satisfying above properties (a) and (b).
    Without loss,
    \[ |Q^z_\mathrm{out}\cap\Sigma^{n-1}0| \leq |Q^z_\mathrm{out}|/2 \leq |Q^z_\mathrm{out}\cap\Sigma^{n-1}1|. \]
    We want to show that $|Q^z_\mathrm{in}\cap\Sigma^{n-1}0|$ and $|Q_\mathrm{out}^z\cap\Sigma^{n-1}1|$ both have more than $|z|/(2n)$ elements.
    With the initial assumption that $|Q^z_\mathrm{out}|>|z|/n$, we immediately obtain that $|Q^z_\mathrm{out}\cap\Sigma^{n-1}1|> |z|/(2n)$.
    Furthermore, we know that $|Q^z_\mathrm{out}|\leq 2^n/2$ by the initial assumption at the beginning of this Claim~\ref{claim:frozenwords-npconp}. We thus have
    \[ |Q^z_\mathrm{out}\cap\Sigma^{n-1}0| \leq |Q^z_\mathrm{out}|/2 \leq 2^{n}/4. \]
    This implies that
    \begin{align*}
        |Q_\mathrm{in}\cap\Sigma^{n-1}0| &= |\Sigma^{n-1}0| - |Q^z_\mathrm{out}\cap\Sigma^{n-1}0| \geq 2^n/2-2^n/4 = 2^n/4,
        \intertext{and with \eqref{eq:expbound-disjconp} we have }
        &> \gamma(n)/(2n) \geq |z|/(2n).
    \end{align*}
    When we now pick $\alpha$ from $Q^z_\mathrm{in}\cap\Sigma^{n-1}0$ and $\beta$ from $Q_\mathrm{out}\cap\Sigma^{n-1}1$, we already satisfy property~(b).
	
    For the other property~(a), we set up a directed bipartite ``spoilage'' graph $S$ with left vertex part $A=Q_\mathrm{in}^z\cap\Sigma^{n-1}0$ and right vertex part $B=Q_\mathrm{out}^z\cap\Sigma^{n-1}1$.
    Note that by above argument, we have that $|A|,|B|>|z|/(2n)$.
    Define the edge set as follows: for every pair $\alpha\in A, \beta\in B$,
    \begin{gather*}
        (\alpha, \beta) \in E(S) \iff \beta\in Q^+_\alpha\cap B\text{ ($\alpha$ spoils $\beta$)},\\
        (\beta, \alpha) \in E(S) \iff \alpha\in Q^+_\beta\cap A\text{ ($\beta$ spoils $\alpha$)}
    \end{gather*}
	
    Since $\ell(Q_{\alpha}^+),\ell(Q_{\beta}^+) \leq |z|/4$, and $A,B \subseteq\Sigma^n$, the out-degree of all vertices is bounded by $\leq |z|/(4n)$. By definition, both vertex parts have $> |z|/(2n)= 2\cdot |z|/(4n)$ vertices. Conditions of combinatorial Lemma~\ref{lemma:bipartite} apply and there exist $\alpha\in Q_\mathrm{in}^z\cap\Sigma^{n-1}0, \beta\in Q_\mathrm{out}^z\cap\Sigma^{n-1}1$ such that $(\alpha,\beta)\not\in E(S)$, and $(\beta,\alpha)\not\in E(S)$. Hence, (a) $\alpha$ ends with $0$, $\beta$ ends with $1$, and (b) $\alpha\not\in Q^+_\beta$, $\beta\not\in Q^+_\alpha$, as desired.
\end{proof}
The next claim generalizes Claim~\ref{claim:frozenwords-len-simple} as desired.
\begin{claim}\label{claim:frozenwords-len-disjconp}
    Let $Q\subseteq\Sigma^{\leq \gamma(n)}$.
    Then $\ell(R_{G(\emptyset)}(Q)) \leq 2\cdot \ell(Q)$.
\end{claim}
\begin{proof}
    This follows by extending the proof of Claim~\ref{claim:frozenwords-len-simple} with the observation just made.
    Specifically, it remains to prove the induction statement for an incorrect code word $z\in V(G')$ with $\mathrm{outdeg}_{G'}(z)>0$. That is, defined from clause (4).
    Again, let $a,b,x$ be suitable such that $z=c(a,b,x)$.

    Without loss of generality, $|Q_\mathrm{out}^z|\leq |Q_\mathrm{in}^z|$,
    and $N^+_{G'}(z) = Q_\mathrm{out}^z$ by definition in clause (4).
    Then the previous Claim~\ref{claim:frozenwords-npconp} says that  $|N^+_{G'}(z)| = |Q_\mathrm{out}^z|\leq |z|/n$.
    Hence $\ell(N^+_{G'}(z))\leq |z|$, since $Q_\mathrm{out}^z\subseteq \Sigma^n$.

    Moreover, all $z'\in Q_\mathrm{out}^z$ have height 0 in $G'$. It follows that
    \begin{equation*} \ell(R_{G'}(z)) = |z| + \ell(N^+_{G'}(z))  \leq 2|z|.\qedhere \end{equation*}
\end{proof}
After proving several properties of $u(X)$, we now can look at the computations of interest, namely $M_i(F_r(0^n))$ and $M_j(F_r(0^n))$, relative to $u(X)$. The remaining proof works as follows: First, we proof that one of the stated computations accepts relative to $u(\emptyset)$. We can then look at an accepting path and put the at most polynomially many oracle queries into a set $Q$. Using our previous claims and combinatorial arguments, we finally show that we can extend $u(\emptyset)$ to $u(\{\xi\})$ for some $\xi \in \Sigma ^n$ having a desired suffix to achieve~\ref{thm:disjconp}(i) or~\ref{thm:disjconp}(ii), while freezing the words in $Q$ and therefore freezing the accepting path.
\begin{claim}
    One of $M_i(F_r(0^n))$ or $M_j(F_r(0^n))$ definitely accepts relative to $u(\emptyset)$.
\end{claim}
\begin{proof}
    Note that the runtime of both computations is bounded by $\leq \gamma(n)$ and $u(\emptyset)$ is defined for all words $\leq \gamma(n)$, thus both computations are definite.
    Now assume that both reject. Remember that $\hat{s}$ is the stage that treated $\tau^4_{i,j}$.
    Let $t'\coloneqq t_{\hat{s}-1}\cup\{\tau^4_{i,j}\mapsto 0\}$.

    We show that $u(\emptyset)$ is $t'$-valid and $u(\emptyset)\sqsupsetneq w_{\hat{s}-1}$.
    The last assertion is immediate, since $w_{\hat{s}-1} \sqsubseteq w_{s-1} \sqsubseteq u \sqsubsetneq u(\emptyset)$.
    Also note that $u(\emptyset)$ is $t_{\hat{s}-1}$-valid:
    We have $\hat{s}-1<\hat{s}$, hence conditions of Claim~\ref{claim:welldefined-disjconp}(iv) apply.

    Hence, for $t'$-validity, only V7 is at risk, but by assumption, with $x=F_r^{{u(\emptyset)}}(0^n)$, both $M_a^{{u(\emptyset)}}(x)$ and $M_b^{{u(\emptyset)}}(x)$ definitely reject.

    As $u(\emptyset)$ is a possible $t'$-valid extension of $w_{\hat{s}-1}$ in stage $\hat{s}$, we obtain that the treatment of task $\tau^4_{i,j}$ would define $t_{\hat{s}}=t'$.
    But then $t_s(\tau^4_{i,j})=t'(\tau^4_{i,j})=0$, contradicting the hypothesis of this Theorem~\ref{thm:disjconp}.
\end{proof}

For now, assume without loss of generality that $M_i^{{u(\emptyset)}}(F_r^{{u(\emptyset)}}(0^n))$ definitely accepts. 
Define $Q$ as the set of oracle queries on the leftmost accepting path,
and define $Q^+\coloneqq R_{G(\emptyset)}(Q)$.
We will find a suitable word $\xi\in\Sigma^{n-1}0$ such that $M_i(F_r(0^n))$ accepts relative to $u(\{\xi\})$.
We show that this is an appropriate choice for the desired oracle $w_s$, which satisfies statement (i) of this Theorem~\ref{thm:disjconp}.

We have $\ell(Q)\leq \gamma(n)$ since $\gamma(n)$ is an upper bound for the computation time of $M_i^{{u(\emptyset)}}(F_r^{{u(\emptyset)}}(0^n))$.
With Claim~\ref{claim:frozenwords-len-disjconp} and \eqref{eq:expbound-disjconp}, $|Q^+|\leq 2\gamma(n)<2^{n-1}$.
Hence there exists $\xi\in \Sigma^{n-1}0$ with $\xi\not\in Q^+$.
Fix such $\xi$ for the rest of the proof.

\begin{claim}\label{claim:freezing-disjconp}
    The oracles $u(\emptyset)$ and $u(\{\xi\})$ agree on $Q^+$, and thus, on $Q$.
    With Observation~\ref{obs:partialoracles}(ii) and (iii), this means that $M_i(F_r(0^n))$ definitely accepts relative to $u(\{\xi\})$.
\end{claim}
\begin{proof}
    This proof goes nearly identical to the one of Claim~\ref{claim:freezing-disjconp-simple}, we only need to additionally cover clause (4).

    Remember that $R_{G(\emptyset)}(Q)=Q^+$.
    Consider the graph $G'$ that is formed by taking the subgraph from $G(\emptyset)$ induced by $Q^+$.
    We show that
    \begin{equation*}  \text{For all words $z\in V(G')$, it holds that}\quad z\in u(\emptyset)\iff z\in u(\{\xi\}). \end{equation*}
    Assume it does not hold, then choose some vertex $z$ with smallest height for which the statement fails.

    If $|z|< n$, then $z\in u(\emptyset)\iff z\in u\iff z\in u(\{\xi\})$, by clause (1) of the definitions of $G(\emptyset)$, $u(\emptyset)$, $u(\{\xi\})$, contradicting the assumption.
    If $|z|=n$, then $z\in V(G')=Q^+$ implies $z\neq\xi$ by choice of $\xi\not\in Q^+$. Hence we have $z\not\in u(\emptyset), z\not\in u(\{\xi\})$ by Claim~\ref{claim:welldefined-disjconp}(ii), contradicting the assumption.

    Thus, we obtain $|z|>n$. If $z$ is defined by clause (5) in $u(\emptyset)$ then $z$ is also defined by clause (5) in $u(\{\xi\})$. We obtain $z\not\in u(\emptyset), u(\{\xi\})$, contradicting the assumption.

    If $z$ is defined by clause (3) in $u(\emptyset)$, then it is a correct code word in $G(\emptyset)$. We obtain a contradiction identical to the proof of Claim~\ref{claim:freezing-disjconp-simple}.
    Specifically, we have
    \begin{gather*} z\in u(\emptyset) \overset{\mathclap{\text{\ref{claim:welldefined-disjconp}(v)}}}{\implies} M_a^{{u(\emptyset)}}(x) \text{ acc.} \overset{\mathclap{\text{\ref{obs:partialoracles}(iii)}}}{\implies} M_a^{{u(\{\xi\})}}(x) \text{ acc.} \overset{\mathclap{\text{\ref{claim:welldefined-disjconp}(iii)}}}{\implies} z\in u(\{\xi\}), \text{and}\\
    z\not\in u(\emptyset) \overset{\mathclap{\text{\ref{claim:welldefined-disjconp}(v)}}}{\implies} M_b^{{u(\emptyset)}}(x) \text{ acc.} \overset{\mathclap{\text{\ref{obs:partialoracles}(iii)}}}{\implies} M_b^{{u(\{\xi\})}}(x) \text{ acc.}  \overset{\mathclap{\text{\ref{claim:welldefined-disjconp}(iii)}}}{\implies} z\not \in u(\{\xi\}). \end{gather*}

    Hence, $z$ must be defined by clause (4) in $u(\emptyset)$, i.e., $z$ is an incorrect code word in $G(\emptyset)$.
    Thus we know $z=c(a,b,x)$ for suitable $a,b,x$.
    We need to show that $z\in u(\emptyset)$ if and only if $z\in u(\{\xi\})$.
    
    If $z\in u(\emptyset)$, then $N^+_{G'}(z)=Q^z_\text{out}$ by definition in clause (4).
    We now show that this implies $\xi\in Q^z_\text{in}$.
    By minimality assumption, $u(\emptyset)$ and $u(\{\xi\})$ agree on $N^+_{G'}(z)=Q^z_\text{out}$.
    If now $\xi\not\in Q^z_\text{in}$, it is member of $Q^z_\text{out}=N^+_{G'}(z)$.
    This means that $u(\emptyset)$ and $u(\{\xi\})$ agree on $\{\xi\}\subseteq N^+_{G'}(z)$.
    This is a contradiction, since $\xi\not\in u(\emptyset)$ and $\xi\in u(\{\xi\})$, by Claim~\ref{claim:welldefined-disjconp}(ii).
    Hence $\xi\in \Sigma^n\setminus Q_\text{out}^z = Q^z_\text{in}$, and by definition, $z\in u(\{\xi\})$.

    The case for $z\not\in u(\emptyset)$ follows by the symmetric argument.
    We conclude that the above statement holds for all $z\in Q^+$, and the claim follows.
\end{proof}

It remains to show that $w=u(\{\xi\})$ satisfies the condition (i) stated in this Theorem~\ref{thm:disjconp}.
The first part is immediate, since $0^n \in D^{{u(\{\xi\})}}_m$, hence for any extension $v\sqsupseteq w =u(\{\xi\})$, also $0^n \in D^v_m$.
The second part follows from Claim~\ref{claim:freezing-disjconp}.

This completes the proof of Theorem~\ref{thm:disjconp}.
\end{proof}

We have now completed the proofs showing that the oracle construction can be performed as desired.
The following theorem confirms the desired properties of $O\coloneqq\bigcup_{i\in\mathbb N} w_i$.
Remember that $|w_0|<|w_1|<\dots$ is unbounded, hence for any $z$ there is a sufficiently large $s$ such that $|w_s|>z$.
Remember that $w_s$ is $t_s$-valid for all $s\in\mathbb N$.

\begin{theorem}\label{thm:result}
    Relative to $O=\bigcup_{i\in\mathbb N} w_i$, the following holds:
    \begin{enumerate}
        \item $\NPcoNP=\P$, which implies $\neg\hNPcoNP$.
        \item No pair in $\DisjNP$ is $\leqmpp$-hard for $\DisjUP$, which implies $\hDisjNP$.
        \item No language in $\UP$ is $\leqmp$-complete for $\UP$, i.e., $\hUP$.
        \item No pair in $\DisjCoNP$ is $\leqmpp$-hard for $\DisjCoUP$, which implies $\hDisjCoNP$.
    \end{enumerate}
\end{theorem}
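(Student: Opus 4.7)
My plan is to derive each of the four items directly from the invariants that the construction maintains, so the main work is bookkeeping: every ``hypothesis object'' (a language or pair witnessing completeness/hardness relative to $O$) corresponds to a task that is treated at some stage, and that treatment either forces the object out of the promise class or sets up a winning diagonalization.

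For (i), I would start with an arbitrary $L\in\NP^O\cap\coNP^O$, writing $L=L(M_a^O)=\overline{L(M_b^O)}$ for indices $a\neq b$. Next I examine the stage $s$ at which task $\tau^1_{a,b}$ is treated. I rule out $t_s(\tau^1_{a,b})=0$: condition V1 at $w_s$ would force $M_a^{w_s}(x)$ and $M_b^{w_s}(x)$ to definitely agree on some $x$, and by Observation~\ref{obs:partialoracles}(i) they would still agree relative to $O$, contradicting $L=L(M_a^O)=\overline{L(M_b^O)}$. Hence $t_s(\tau^1_{a,b})=n>0$. For every $x$ with $c(a,b,x)\geq n$ I pick a stage $s'>s$ with $|w_{s'}|>c(a,b,x)$ (available since the oracle is strictly extended in each stage); as $t_{s'}\supseteq t_s$, V2 at the $t_{s'}$-valid $w_{s'}$ gives $M_a^{w_{s'}}(x)$ accepts iff $c(a,b,x)\in w_{s'}$, which by Observation~\ref{obs:partialoracles}(i) is equivalent to $c(a,b,x)\in O$. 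By Claim~\ref{claim:codewords}(ii) the map $c(a,b,\cdot)$ is polynomial-time computable, so $L$ agrees with $\{x:c(a,b,x)\in O\}$ up to a finite exception set, placing $L$ in $\P^O$.

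For (ii), I suppose toward contradiction that $(L(M_i^O),L(M_j^O))\in\DisjNP^O$ is $\leqmpp$-hard for $\DisjUP^O$, with reduction $F_r\in\FP^O$. At the stage $s$ treating $\tau^2_{i,j}$ I again rule out the ``if'' branch: if $M_i,M_j$ both definitely accepted some input relative to a $t'$-valid $v\sqsupsetneq w_{s-1}$ (as required by V3 with $t'(\tau^2_{i,j})=0$), they would continue to do so relative to $O$ by Observation~\ref{obs:partialoracles}(i), contradicting disjointness in $\NP^O$. So $t_s(\tau^2_{i,j})=m>0$; V4 maintained at every subsequent stage together with Observation~\ref{obs:witnesses}(i) yields $(A_m^O,B_m^O)\in\DisjUP^O$. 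Since the ``otherwise'' branch was taken, $\tau^2_{i,j,r}$ was not removed from $T$, and its later treatment produces via Theorem~\ref{thm:disjnp} an $n$ showing that $F_r$ fails to reduce $(A_m^O,B_m^O)$ to $(L(M_i^O),L(M_j^O))$, contradicting hardness.

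Items (iii) and (iv) proceed in the same pattern. For (iii), if some $L(M_i^O)\in\UP^O$ were $\leqmp$-complete for $\UP^O$, the treatment of $\tau^3_i$ forces $t(\tau^3_i)=m>0$ (V5 rules out $0$ by Observation~\ref{obs:partialoracles}(i)), V6 with Observation~\ref{obs:witnesses}(i) gives $C_m^O\in\UP^O$, and Theorem~\ref{thm:up} applied at the surviving task $\tau^3_{i,r}$ breaks the reduction $F_r$. For (iv), the same outline using V7, V8, Observation~\ref{obs:witnesses}(ii), and Theorem~\ref{thm:disjconp} produces $(D_m^O,E_m^O)\in\DisjCoUP^O$ and defeats any $F_r$ via $\tau^4_{i,j,r}$. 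I expect the whole argument to be mechanical; the substantive work has already been done inside Theorems~\ref{thm:disjnp},~\ref{thm:up}, and~\ref{thm:disjconp}. The only recurring subtlety is ensuring that definite computations at finite stages retain their behavior under the infinite extension to $O$, which is exactly Observation~\ref{obs:partialoracles}(i); with that in hand, no further obstacle arises.
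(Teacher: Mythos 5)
Your proposal is correct and follows essentially the same route as the paper's proof: for each item, rule out the ``$\mapsto 0$'' branch of the corresponding task via V1/V3/V5/V7 together with Observation~\ref{obs:partialoracles}(i), then use the positive branch (V2 for the coding in item (i); V4/V6/V8 with Observation~\ref{obs:witnesses} plus the surviving diagonalization tasks handled by Theorems~\ref{thm:disjnp},~\ref{thm:up},~\ref{thm:disjconp} for items (ii)--(iv)). The only cosmetic difference is that the paper makes the finite exception set in item (i) explicit as the set $L'$; your ``agrees up to a finite set'' formulation is equivalent.
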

\begin{proof}
    To (i): It suffices to show that each $L\in\mathrm{NP}^O\cap\mathrm{coNP}^O$ is decidable in $\P^O$.
    Choose $a,b$ such that $L(M^O_a)=L$ and $L(M^O_b)=\overline{L}$.
    Let $s$ be the stage that treated $\tau^1_{a,b}$.
    If $t_{s}(\tau^1_{a,b})=0$, then V1 posits that for some $x\in\Sigma^*$, relative to $w_s$, $M_a(x)$ definitely accepts if and only if $M_b(x)$ definitely accepts.
    Since $w_{s}\sqsubseteq O$, $M_a^O(x)$ accepts if and only if $M_b^O(x)$ accepts, by Observation~\ref{obs:partialoracles}(i).
    This contradicts the choice of $M_a, M_b$.

    Thus, $z\coloneqq t_s(\tau^1_{a,b})>0$.
    Consider the following union of two disjoint sets
    \begin{equation*} L' \coloneqq \{ x \mid c(a,b,x)\in O, x\geq z\} \cup \{ x\in L \mid x< z\}. \end{equation*}
    With Claim~\ref{claim:codewords}(ii), $L'\in\P^O$.
    We now show $L=L'$: Let $x\in\Sigma^*$.
    If $x< z$, then trivially $x\in L$ if and only if $x\in L'$.

    If $x\geq z$, then choose a sufficiently large $s'$ such that $|w_{s'}|>c(a,b,x)$.
    Note that $c(a,b,x)\in O$ if and only if $c(a,b,x)\in w_{s'}$.
    Also, $0<t_{s'}(\tau^1_{a,b})=z\leq x\leq c(a,b,x) < |w_{s'}|$, and V2 applies with respect to $t_{s'}$-valid $w_{s'}$.
    Specifically, the computation $M_a^{w_{s'}}(x)$ is definite.
    Thus,
    \begin{align*}
        x\in L&\iff M_a^{O}(x) \text{ acc.}  \overset{\mathclap{\text{\ref{obs:partialoracles}(i)}}}\iff M_a^{w_{s'}}(x) \text{ acc.} \overset{\mathclap{\text{V2}}}\iff c(a,b,x)\in w_{s'}\\
              &\iff c(a,b,x)\in O \iff x\in L'. 
    \end{align*}
    This implies $L\in\P^O$.
    \medskip

    To (ii): Assume otherwise that there is a pair $(A,B)\in \DisjNP^O$ that is \leqmpp-hard for $\DisjUP$.
    Choose some $i,j$ such that $L(M^O_i)=A$ and $L(M^O_j)=B$. Consider the task $\tau^2_{i,j}$, that is treated in some stage $\hat{s}$.

    If $t_{\hat{s}}(\tau^2_{i,j})=0$, then V3 states that for some $x\in\Sigma^*$, both $M_i(x)$  and $M_j(x)$ definitely accept relative to $w_{\hat{s}}$. Since $w_{\hat{s}}\sqsubseteq O$, $M_i(x)$ and $M_j(x)$ accept relative to $O$, by Observation~\ref{obs:partialoracles}(i).
    This implies $x\in A\cap B$, contradicting the assumption that $A$ and $B$ are disjoint.

    Hence we can assume that $m\coloneqq t_{\hat{s}}(\tau^2_{i,j})>0$.
    Consider the pair $(A^O_m, B^O_m)$. We show with V4 that this pair is a disjoint UP-pair.
    Assume otherwise, then contraposition of Observation~\ref{obs:witnesses}(i) asserts that for some $n$, we have $|O\cap \Sigma^n|>1$.
    Now choose some sufficiently large $s'$ such that $w_{s'}$ is defined for all words of length $n$.
    As $w_{s'}$ is $t_{s'}$-valid, V4 applies and states that $|\Sigma^n\cap w_{s'}|\leq 1$. As $w_{s'}$ is chosen to be defined for \emph{all} words of length $n$, $w_{s'}$ and $O$ agree on $\Sigma^n$, hence we obtain $|\Sigma^n\cap O|\leq 1$, contradicting the assumption.

    Hence $(A^O_m, B^O_m)\in\DisjUP^O$. This implies that $(A^O_m, B^O_m) \leqmpp (A,B)$.
    Choose $r$ such that the reduction is computed by $F^O_r$, and consider the task $\tau^2_{i,j,r}$, that is treated in some stage $s$.
    By the definition of the task, it follows that, without loss of generality, there is an $n\in\mathbb N$ and $w_s\sqsubseteq O$ such that $0^n\in A_m^O$ and $M_i^{w_s}(F_r^{w_s}(0^n))$ definitely rejects.

    Invoking Observation~\ref{obs:partialoracles}(i), we obtain that $0^n\in A_m^O$ and $F_r^O(0^n)\not\in A$. This contradicts the assumption that $F_r^O$ realizes the reduction.
    \medskip

    To (iii): This can be proven symmetrically, referring to V5, V6.
    \medskip

    To (iv): This can be proven symmetrically, referring to V7, V8, and Observation~\ref{obs:witnesses}(ii).
\end{proof}

From Theorem~\ref{thm:result} and known relativizable results, we obtain the following additional properties that hold relative to the oracle.
\begin{corollary}\label{cor:detailed_properties}
    The following holds relative to the oracle $O$ constructed in this section.
    \begin{enumerate}[midpenalty=0]
        \item $\P = \NP \cap \coNP \subsetneq \UP \subsetneq \NP$
        \item $\UP$, $\NP$, $\NE$, and $\NEE$ are not closed under complement.
        \item $\UP \not\subseteq \coNP$
        \item $\NEE \cap \TALLY \not\subseteq \coNEE$
        \item $\NPSVt \subseteq \PF$
        \item $\NPbVt \not\subseteq_c \NPSVt$
        \item $\NPkVt \not\subseteq_c \NPSVt$ for all $k \ge 2$
        \item $\NPMVt \not\subseteq_c \NPSVt$
        \item $\NPMV \not\subseteq_c \NPSV$
        \item $\TFNP \not\subseteq_c \PF$
        \item $\NP \cap \coNP$ has $\leqmp$-complete sets, i.e., $\neg \hNPcoNP$.
        \item $\UP$ has no $\leqmp$-complete sets, i.e., $\hUP$.
        \item $\DisjNP$ has no $\leqmpp$-complete pairs, i.e., $\hDisjNP$.
        \item $\DisjCoNP$ has no $\leqmpp$-complete pairs, i.e., $\hDisjCoNP$.
        \item No pair in $\DisjNP$ is $\leqmpp$-hard for $\DisjUP$.
        \item No pair in $\DisjCoNP$ is $\leqmpp$-hard for $\DisjCoUP$.
        \item There are no p-optimal proof systems for $\TAUT$, i.e., $\hCON$.
        \item There are no optimal proof systems for $\TAUT$.
        \item There are no p-optimal proof systems for $\SAT$, i.e., $\hSAT$.
        \item $\TFNP$ has no $\leqmp$-complete problems, i.e., $\hTFNP$.
        \item $\NPMVt$ has no $\leqmp$-complete functions.
        \item $\NP$ and $\coNP$ do not have the shrinking property.
        \item $\NP$ and $\coNP$ do not have the separation property.
        \item $\DisjNP$ and $\DisjCoNP$ contain $\P$-inseparable pairs.

    \end{enumerate}
\end{corollary}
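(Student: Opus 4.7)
The plan is to derive the twenty-five items from Theorem~\ref{thm:result} combined with standard relativizable implications from the literature, most of which are already summarized in Figure~\ref{fig:diagram} and in the discussion of Section~1. Items (xi), (xv), (xvi) are direct restatements of Theorem~\ref{thm:result}; items (xii), (xiii), (xiv) follow because the hardness statements in Theorem~\ref{thm:result}(ii)--(iv) are formally stronger than the corresponding non-completeness hypotheses; and items (xvii), (xix), (xx) follow by chasing the relativizable chains $\hUP \Rightarrow \hCON$ and $\hDisjCoNP \Rightarrow \hTFNP \Rightarrow \hSAT$. Item (xxi) follows from $\hUP$ via the standard relativizable equivalence between $\leqmp$-complete functions in $\NPMVt$ and $\leqmp$-complete sets in $\UP$.

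For the class-inclusion items (i)--(iii), I would combine $\P = \NPcoNP$ with $\hUP$ to obtain $\P \subsetneq \UP$ (otherwise $\UP = \P$ would have a trivial $\leqmp$-complete set), and with the existence of relativized $\SAT$ to obtain $\UP \subsetneq \NP$; item (iii) then follows since $\UP \subseteq \coNP$ combined with $\UP \subseteq \NP$ forces $\UP \subseteq \NPcoNP = \P$, contradicting (i). For (ii), non-closure of $\NP$ under complement follows from $\P \subsetneq \NP$ together with (i); non-closure of $\UP$ from the observation that $\UP = \coUP$ would force $\UP \subseteq \NPcoNP = \P$; and non-closure of $\NE, \NEE$ under complement from the contrapositives of the Krajicek--Pudlak-style theorems that $\NE = \coNE$ (resp.\ $\NEE = \coNEE$) implies optimal proof systems for $\TAUT$ (resp.\ for $\SAT$), combined with item (xviii) and its $\SAT$-analogue. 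Item (xviii) in turn follows from $\hDisjNP$ via Razborov's relativizable construction of the canonical pair of an optimal propositional proof system, and item (iv) follows analogously via the characterization linking $\NEE \cap \TALLY \subseteq \coNEE$ to proof-system hypotheses for $\SAT$ that are ruled out here.

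For the function-class items (v)--(x), I would invoke Selman's relativizable characterizations: $\NPSVt \subseteq \PF$ is equivalent to $\NPcoNP = \P$, giving (v) from (i); the remaining non-inclusions (vi)--(ix) reduce to set-class separations already in hand, in particular $\UP \not\subseteq \coNP$ (item (iii)); and (x) is equivalent to $\hTFNP$. For the structural items (xxii)--(xxiv), I would use the known relativizable fact that the shrinking or separation property for $\NP$ (resp.\ $\coNP$) implies the existence of a $\leqmpp$-complete pair in $\DisjNP$ (resp.\ $\DisjCoNP$), contradicting (xiii)/(xiv); and that the witness pair $(A_m,B_m) \in \DisjUP^O$ produced by the construction, if $\P$-separated by some $S$, would make $(S,\overline S) \in \DisjNP^O$ a $\leqmpp$-hard pair for $\DisjUP$, contradicting Theorem~\ref{thm:result}(ii), with a symmetric argument for $\DisjCoNP$ via $(D_m, E_m)$.

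The main obstacle here is not mathematical depth but careful bookkeeping: the twenty-five items draw on roughly a dozen distinct results scattered across the literature, so one must verify that each invoked implication truly relativizes (the Krajicek--Pudlak proof-system theorems, Selman's function-class equivalences, and Razborov's canonical-pair construction all do, which is why the overall scheme works) and that the corresponding attribution is accurate.
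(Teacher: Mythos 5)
Your overall strategy --- reading items (xi)--(xvi) directly off Theorem~\ref{thm:result} and obtaining the rest by chasing known relativizable implications --- is exactly the paper's, and most of your individual routes (for (i)--(iii), (v), (xvii), (xviii), (xx), (xxiv), and the $\UP$/$\NP$ parts of (ii)) either coincide with the paper's or are acceptable variants. However, two of your steps invoke implications that do not exist in the form you state them, and these leave genuine gaps.

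The most serious one concerns (vi)--(ix): you claim these non-inclusions ``reduce to set-class separations already in hand, in particular $\UP \not\subseteq \coNP$.'' There is no known (relativizable) implication from $\UP \not\subseteq \coNP$ to $\NPbVt \not\subseteq_c \NPSVt$, and this is the crux of these items. The paper's argument is: if $\NPbVt \subseteq_c \NPSVt$, then by (v) we get $\NPbVt \subseteq_c \PF$, which by Fenner et al.\ (Thm.~4 of the 2003 paper) is equivalent to all disjoint $\coNP$-pairs being $\P$-separable, and that would yield a $\leqmpp$-complete pair in $\DisjCoNP$, contradicting (xiv). So (vi)--(ix) genuinely need the $\DisjCoNP$ side of the construction, not the $\UP$ side. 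The second gap is (xxi): the ``standard relativizable equivalence between $\leqmp$-complete functions in $\NPMVt$ and $\leqmp$-complete sets in $\UP$'' is not a known result. The paper derives (xxi) from (xx) via Pudl\'ak (Prop.~5.10) and then (xix) from (xxi) via Beyersdorff--K\"obler--Messner (Thm.~25); in other words, your chain $\hTFNP \Rightarrow \hSAT$ already factors through (xxi), so (xxi) cannot be outsourced to $\hUP$. Smaller inaccuracies that are repairable: the K\"obler--Messner--Tor\'an result behind (iv) links $\NEE \cap \TALLY \subseteq \coNEE$ to \emph{optimal proof systems for $\TAUT$} (item (xviii)), not to $\SAT$-hypotheses, and the paper then gets $\NEE \neq \coNEE$ and $\NE \neq \coNE$ from (iv) by padding rather than from a ``$\NEE=\coNEE \Rightarrow$ optimal proof systems for $\SAT$'' theorem; and for the $\coNP$ halves of (xxii)/(xxiii) the paper does not use completeness of $\DisjCoNP$-pairs but the Gla{\ss}er--Reitwie{\ss}ner--Selivanov characterizations (``$\coNP$ has shrinking iff $\NP=\coNP$'' and ``(vi) implies $\coNP$ lacks separation''), so your route there would additionally need the oracle-specific observation that separators in $\NPcoNP = \P$ make all pairs $\P$-separable.
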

\begin{proof}
    The corollary holds by the following arguments, which are relativizable.

    (i): Follows from Theorem~\ref{thm:result}(i) and \ref{thm:result}(iii).

    (xi)-(xvi): Follows from Theorem~\ref{thm:result}.

    (xvii)--(xviii): 
    Assume there is an optimal proof system for TAUT. Köbler, Messner, and Torán \cite[Cor.~6.1]{kmt03} show that this implies that there is a disjoint NP-pair that is $\leqmpp$-complete. This contradicts (xiii). Statement (xviii) follows immediately.

    (xx): Follows from (xiv) by Pudlák \cite[Prop.~5.6]{pud17}.

    (xxi): Follows from (xx) by Pudlák \cite[Prop.~5.10]{pud17}.

    (xix): Follows from (xxi)
    by Beyersdorff, Köbler and Messner \cite[Thm.~25]{bkm09}.

    (iv): Follows from (xviii) by Köbler, Messner, and Torán \cite[Corollary 7.1]{kmt03}.

    (ii): $\UP \neq \coUP$, since otherwise
    $\UP=\coUP=\UP\cap\coUP\subseteq \NP\cap\coNP=\P$, which contradicts (i).
    The same argument shows $\NP \neq \coNP$.
    By (iv), $\NEE \neq \coNEE$ which implies $\NE \neq \coNE$ by padding.

    (iii): $\UP \not\subseteq \coNP$, since otherwise
    $\UP \subseteq \NP \cap \coNP \subseteq \P$, which contradicts (i).

    (v): Statement (i) implies $\P=\NPcoNP$, which is equivalent to (v), as shown by Fenner et al.\ \cite[Prop.~1]{ffnr03}.

    (vi): Assume $\NPbVt \subseteq_c \NPSVt$. By (v), $\NPbVt \subseteq_c \PF$.
    By Fenner et al.\ \cite[Thm.~4]{ffnr03},
    this implies that all disjoint $\coNP$-pairs are $\P$-separable,
    which contradicts (xiv).

    (vii): By (vi), $\NPbVt \not\subseteq_c \PF$.
    Fenner et al.\ \cite[Thm.~14]{ffnr03} show that this implies
    $\NPkVt \not\subseteq_c \PF$ for all $k \ge 2$.
    From (v) it follows $\NPkVt \not\subseteq_c \NPSVt$ for all $k \ge 2$.

    (viii): Follows from (vi).

    (ix): Follows from (viii).

    (x): Assume $\TFNP \subseteq_c \PF$.
    By Fenner et al.\ \cite[Prop.~7, Thm.~2]{ffnr03},
    this implies $\NPMV_t \subseteq_c \PF$, which contradicts (viii).

    (xxii): Follows from (xiii) and (ii), because of the following results.
    Glaßer, Reitwießner, and Selivanov \cite[Thm.~3.7]{grs11} show that
    if there is no $\leqmpp$-complete pair for $\DisjNP$, then $\NP$ does not have the shrinking property.
    In the same publication \cite[Thm.~3.4]{grs11}, they show that
    $\coNP$ has the shrinking property if and only if $\NP = \coNP$.

    (xxiii): Glaßer, Reitwießner, and Selivanov \cite[Thm.~3.9, Thm.~3.10]{grs11}
    show that (iii) (resp., (vi)) implies that $\NP$ (resp., $\coNP$)
    has not the separation property.

    (xxiv): Follows from (xiii) and (xiv).
\end{proof}

\printbibliography

\end{document}